\def\dOi{9(4:24)2013}
\subjclass{F.4.1.~Mathematical Logic, F.4.2.~Grammars and Other
  Rewriting Systems, F.1.1.~Models of Computation}
\newdimen\myvltreesize
\renewcommand{\vltrauxx}[5]{\vldaux
   {#1}
   {#2}
   {\vlhy{#3}}
   {}
   {\hbox{$\vcenter{\xy
          0;<\hsize,0pt>:<0pt,#4\hsize>::
          (-0.5,0.5);(0.5,0.5)**\crv{(0.1,0.6)&(-0.1,0.4)};
          (0,0)**@{-};(-0.5,0.5)**@{-};
          (0,0.293)*{#5\strut}
          \endxy}$}}
   {\kern\deropen}}
\newcommand{\vlstrf}[4]{\vltrf{\hbox{\small$#1$}}{#2}{\vlshy{}}{#3}{\vlshy{}}{#4}}
\newcommand{\vlbtrf}[4]{\vlstrf{#1}{#2}{\vlshy{\hskip#3}}{#4}}
\newcommand{\vlhtr}[2]{\vlbtrf{#1}{#2}{2em}{1}}
\newcommand{\vlshy}[1]{\vlhyaux{$#1$}}
\newcommand{\vlfa}[1]{\forall #1\,}
\newcommand{\vlex}[1]{\exists #1\,}
\theoremstyle{plain}
\theoremstyle{definition}
\def\cA{{\mathcal A}}
\def\cB{{\mathcal B}}
\def\cI{{\mathcal I}}
\def\set#1{\{#1\}}
\def\tuple#1{\langle#1\rangle}
\def\grammareq {\mathrel{\raise.4pt\hbox{::}{=}}}%
\newcommand{\dotto}[1][]{\mathrel{\!\xy\ar@{.>}^-{#1}(5,0)\endxy\!}}
\newcommand{\solto}[1][]{\mathrel{\!\xy\ar@{->}^-{#1}(5,0)\endxy\!}}
\newcommand{\longsolto}[1][]{\mathrel{\!\xy\ar@{->}^-{#1}(11,0)\endxy\!}}
\newcommand{\longdotto}[1][]{\mathrel{\!\xy\ar@{.>}^-{#1}(11,0)\endxy\!}}
\newcommand{\xldotto}[2][]{\mathrel{\!\xy\ar@{.>}^-{#1}(#2,0)\endxy\!}}
\def\cutr{\mathsf{cut}}
\def\axr{\mathsf{ax}}
\def\weakr{\mathsf{weak}}
\def\conr{\mathsf{cont}}
\def\rr{\mathsf{r}}
\def\wcneg#1{\overline{#1}}
\newbox\cutbox
\newdimen\cutwd
\newdimen\cutht
\newdimen\cutdp
\def\ccut{%
  \setbox\cutbox\hbox{$\lozenge$}
  \cutwd=\wd\cutbox
  \cutht=\ht\cutbox
  \cutdp=\dp\cutbox
  \setbox\cutbox\hbox to\cutwd{\hss\vrule width.3pt height\cutht depth\cutdp\hss}
  \mathbin{\lozenge\hskip-\cutwd\copy\cutbox}}
\def\scriptcut{%
  \setbox\cutbox\hbox{$\scriptstyle\lozenge$}
  \cutwd=\wd\cutbox
  \cutht=\ht\cutbox
  \cutdp=\dp\cutbox
  \setbox\cutbox\hbox to\cutwd{\hss\vrule width.3pt height\cutht depth\cutdp\hss}
  \mathord{\lozenge\hskip-\cutwd\copy\cutbox}}
\def\vccut{%
  \setbox\cutbox\hbox{$\lozenge$}
  \cutwd=\wd\cutbox
  \cutht=\ht\cutbox
  \cutdp=\dp\cutbox
  \setbox\cutbox\hbox to\cutwd{\hss\hskip.3pt\vrule width.3pt height\cutht depth\cutdp\hss}
  \mathbin{\lozenge\hskip-\cutwd\copy\cutbox}}
\def\lrgldel {\mathchoice{(}{(}{\langle}{\langle}}%
\def\lrgrdel {\mathchoice{)}{)}{\rangle}{\rangle}}%
\def\aprldel {\mathchoice
    {\mathopen {\setbox0=\hbox{$\displaystyle     \lrgldel$}\hbox to\wd0
                         {\hfil$\displaystyle     (       $\hfil}}}%
    {\mathopen {\setbox0=\hbox{$\textstyle        \lrgldel$}\hbox to\wd0
                         {\hfil$\textstyle        (        $\hfil}}}%
    {\mathopen {\setbox0=\hbox{$\scriptstyle      \lrgldel$}\hbox to\wd0
                         {\hfil$\scriptstyle      (        $\hfil}}}%
    {\mathopen {\setbox0=\hbox{$\scriptscriptstyle\lrgldel$}\hbox to\wd0
                         {\hfil$\scriptscriptstyle(        $\hfil}}}}%
\def\aprrdel {\mathchoice
    {\mathclose{\setbox0=\hbox{$\displaystyle     \lrgrdel$}\hbox to\wd0
                         {\hfil$\displaystyle     )       $\hfil}}}%
    {\mathclose{\setbox0=\hbox{$\textstyle        \lrgrdel$}\hbox to\wd0
                         {\hfil$\textstyle        )        $\hfil}}}%
    {\mathclose{\setbox0=\hbox{$\scriptstyle      \lrgrdel$}\hbox to\wd0
                         {\hfil$\scriptstyle      )        $\hfil}}}%
    {\mathclose{\setbox0=\hbox{$\scriptscriptstyle\lrgrdel$}\hbox to\wd0
                         {\hfil$\scriptscriptstyle)        $\hfil}}}}%
\def\seqldel {\mathchoice
    {\mathopen {\setbox0=\hbox{$\displaystyle     \lrgldel$}\hbox to\wd0
                         {\hfil$\displaystyle     \langle  $\hfil}}}%
    {\mathopen {\setbox0=\hbox{$\textstyle        \lrgldel$}\hbox to\wd0
                         {\hfil$\textstyle        \langle  $\hfil}}}%
    {\mathopen {\setbox0=\hbox{$\scriptstyle      \lrgldel$}\hbox to\wd0
                         {\hfil$\scriptstyle      \langle  $\hfil}}}%
    {\mathopen {\setbox0=\hbox{$\scriptscriptstyle\lrgldel$}\hbox to\wd0
                         {\hfil$\scriptscriptstyle\langle  $\hfil}}}}%
\def\seqrdel {\mathchoice
    {\mathclose{\setbox0=\hbox{$\displaystyle     \lrgrdel$}\hbox to\wd0
                         {\hfil$\displaystyle     \rangle  $\hfil}}}%
    {\mathclose{\setbox0=\hbox{$\textstyle        \lrgrdel$}\hbox to\wd0
                         {\hfil$\textstyle        \rangle  $\hfil}}}%
    {\mathclose{\setbox0=\hbox{$\scriptstyle      \lrgrdel$}\hbox to\wd0
                         {\hfil$\scriptstyle      \rangle  $\hfil}}}%
    {\mathclose{\setbox0=\hbox{$\scriptscriptstyle\lrgrdel$}\hbox to\wd0
                         {\hfil$\scriptscriptstyle\rangle  $\hfil}}}}%
\def\parldel {\mathchoice
    {\mathopen {\setbox0=\hbox{$\displaystyle     \lrgldel$}\hbox to\wd0
                         {\hfil$\displaystyle     [       $\hfil}}}%
    {\mathopen {\setbox0=\hbox{$\textstyle        \lrgldel$}\hbox to\wd0
                         {\hfil$\textstyle        [        $\hfil}}}%
    {\mathopen {\setbox0=\hbox{$\scriptstyle      \lrgldel$}\hbox to\wd0
                         {\hfil$\scriptstyle      [        $\hfil}}}%
    {\mathopen {\setbox0=\hbox{$\scriptscriptstyle\lrgldel$}\hbox to\wd0
                         {\hfil$\scriptscriptstyle[        $\hfil}}}}%
\def\parrdel {\mathchoice
    {\mathclose{\setbox0=\hbox{$\displaystyle     \lrgrdel$}\hbox to\wd0
                         {\hfil$\displaystyle     ]       $\hfil}}}%
    {\mathclose{\setbox0=\hbox{$\textstyle        \lrgrdel$}\hbox to\wd0
                         {\hfil$\textstyle        ]        $\hfil}}}%
    {\mathclose{\setbox0=\hbox{$\scriptstyle      \lrgrdel$}\hbox to\wd0
                         {\hfil$\scriptstyle      ]        $\hfil}}}%
    {\mathclose{\setbox0=\hbox{$\scriptscriptstyle\lrgrdel$}\hbox to\wd0
                         {\hfil$\scriptscriptstyle]        $\hfil}}}}%
\def\eightpoint{\small}                         
\def\pluldel {\mathchoice
   {\mathopen {\setbox0=\hbox{$\displaystyle     \lrgldel$}\hbox to\wd0
                        {\hfil$\displaystyle     [       $\hfil}%
                        \kern-\wd0\hbox to\wd0
                        {\hss$\vcenter{\hbox{\eightpoint$\scriptscriptstyle\bullet$}}$\hss}}}%
   {\mathopen {\setbox0=\hbox{$\textstyle        \lrgldel$}\hbox to\wd0
                        {\hfil$\textstyle        [       $\hfil}%
                        \kern-\wd0\hbox to\wd0
                        {\hss$\vcenter{\hbox{\eightpoint$\scriptscriptstyle\bullet$}}$\hss}}}%
   {\mathopen {\setbox0=\hbox{$\scriptstyle      \lrgldel$}\hbox to\wd0
                        {\hfil$\scriptstyle      [       $\hfil}%
                        \kern-\wd0\hbox to\wd0
                        {\hss$\raise.1ex\hbox{\eightpoint$\scriptscriptstyle\bullet$}$\hss}}}%
   {\mathopen {\setbox0=\hbox{$\scriptscriptstyle\lrgldel$}\hbox to\wd0
                        {\hfil$\scriptscriptstyle[       $\hfil}%
                        \kern-\wd0\hbox to\wd0
                        {\hss$\raise.03ex\hbox{\eightpoint$\scriptscriptstyle\bullet$}$\hss}}}}%
\def\plurdel {\mathchoice
   {\mathclose{\setbox0=\hbox{$\displaystyle     \lrgldel$}\hbox to\wd0
                        {\hfil$\displaystyle     ]       $\hfil}%
                        \kern-\wd0\hbox to\wd0
                        {\hss$\vcenter{\hbox{\eightpoint$\scriptscriptstyle\bullet$}}$\hss}}}%
   {\mathclose{\setbox0=\hbox{$\textstyle        \lrgldel$}\hbox to\wd0
                        {\hfil$\textstyle        ]       $\hfil}%
                        \kern-\wd0\hbox to\wd0
                        {\hss$\vcenter{\hbox{\eightpoint$\scriptscriptstyle\bullet$}}$\hss}}}%
   {\mathclose{\setbox0=\hbox{$\scriptstyle      \lrgldel$}\hbox to\wd0
                        {\hfil$\scriptstyle      ]       $\hfil}%
                        \kern-\wd0\hbox to\wd0
                        {\hss$\raise.1ex\hbox{\eightpoint$\scriptscriptstyle\bullet$}$\hss}}}%
   {\mathclose{\setbox0=\hbox{$\scriptscriptstyle\lrgldel$}\hbox to\wd0
                        {\hfil$\scriptscriptstyle]       $\hfil}%
                        \kern-\wd0\hbox to\wd0
                        {\hss$\raise.03ex\hbox{\eightpoint$\scriptscriptstyle\bullet$}$\hss}}}}%
\def\witldel {\mathchoice
   {\mathopen {\setbox0=\hbox{$\displaystyle     \lrgldel$}\hbox to\wd0
                        {\hfil$\displaystyle     (       $\hfil}%
                        \kern-\wd0\hbox to\wd0
                        {\hss$\vcenter{\hbox{\eightpoint$\scriptscriptstyle\bullet\mkern3.2mu$}}$\hss}}}%
   {\mathopen {\setbox0=\hbox{$\textstyle        \lrgldel$}\hbox to\wd0
                        {\hfil$\textstyle        (       $\hfil}%
                        \kern-\wd0\hbox to\wd0
                        {\hss$\vcenter{\hbox{\eightpoint$\scriptscriptstyle\bullet\mkern3.2mu$}}$\hss}}}%
   {\mathopen {\setbox0=\hbox{$\scriptstyle      \lrgldel$}\hbox to\wd0
                        {\hfil$\scriptstyle      (       $\hfil}%
                        \kern-\wd0\hbox to\wd0
                        {\hss$\raise.1ex\hbox{\eightpoint$\scriptscriptstyle\bullet\mkern3.2mu$}$\hss}}}%
   {\mathopen {\setbox0=\hbox{$\scriptscriptstyle\lrgldel$}\hbox to\wd0
                        {\hfil$\scriptscriptstyle(       $\hfil}%
                        \kern-\wd0\hbox to\wd0
                        {\hss$\raise.03ex\hbox{\eightpoint$\scriptscriptstyle\bullet\mkern3.2mu$}$\hss}}}}%
\def\witrdel {\mathchoice
   {\mathclose{\setbox0=\hbox{$\displaystyle     \lrgldel$}\hbox to\wd0
                        {\hfil$\displaystyle     )       $\hfil}%
                        \kern-\wd0\hbox to\wd0
                        {\hss$\vcenter{\hbox{\eightpoint$\scriptscriptstyle\mkern3.2mu\bullet$}}$\hss}}}%
   {\mathclose{\setbox0=\hbox{$\textstyle        \lrgldel$}\hbox to\wd0
                        {\hfil$\textstyle        )       $\hfil}%
                        \kern-\wd0\hbox to\wd0
                        {\hss$\vcenter{\hbox{\eightpoint$\scriptscriptstyle\mkern3.2mu\bullet$}}$\hss}}}%
   {\mathclose{\setbox0=\hbox{$\scriptstyle      \lrgldel$}\hbox to\wd0
                        {\hfil$\scriptstyle      )       $\hfil}%
                        \kern-\wd0\hbox to\wd0
                        {\hss$\raise.1ex\hbox{\eightpoint$\scriptscriptstyle\mkern3.2mu\bullet$}$\hss}}}%
   {\mathclose{\setbox0=\hbox{$\scriptscriptstyle\lrgldel$}\hbox to\wd0
                        {\hfil$\scriptscriptstyle)       $\hfil}%
                        \kern-\wd0\hbox to\wd0
                        {\hss$\raise.03ex\hbox{\eightpoint$\scriptscriptstyle\mkern3.2mu\bullet$}$\hss}}}}%
\newbox\ldelbox
\newbox\rdelbox
\def\quadfs {\rlap{\rm\quad.}}%
\def\qqualto {\qquad\leadsto\qquad}%
\def\qqquand {\quad\qquad\mbox{and}\qquad\quad}%
\def\nqquand {\qquad\mbox{\normalsize and}\qquad}%
\def\qqquor {\quad\qquad\mbox{or}\qquad\quad}%
\def\nqquor {\qquad\mbox{\normalsize or}\qquad}%
\def\qqquiff {\quad\qquad\mbox{iff}\qquad\quad}%
\def\qcdots{\quad\cdots\quad}
\def\clap#1{\hbox to 0pt{\hss#1\hss}}
\def\sqlap#1{\hbox to .5em{\hss#1\hss}}
\def\qlap#1{\hbox to 1em{\hss#1\hss}}
\def\qqlap#1{\hbox to 2em{\hss#1\hss}}
\def\qqqlap#1{\hbox to 3em{\hss#1\hss}}
\def\qqqqlap#1{\hbox to 4em{\hss#1\hss}}
\def\qqqqqlap#1{\hbox to 5em{\hss#1\hss}}
\def\qqqqqqlap#1{\hbox to 6em{\hss#1\hss}}
\def\qqqqqqqlap#1{\hbox to 7em{\hss#1\hss}}
\def\qqqqqqqqlap#1{\hbox to 8em{\hss#1\hss}}
\def\qqqqqqqqqlap#1{\hbox to 9em{\hss#1\hss}}
\newcommand{\wlap}[2][10ex]{\hbox to#1{\hss#2\hss}}
\newcommand{\wlapm}[2][10ex]{\hbox to#1{\hss$#2$\hss}}
\def\rlapm#1{\hbox to 0pt{$#1$\hss}}
\def\llapm#1{\hbox to 0pt{\hss$#1$}}
\def\qqquad{\quad\qquad}
\newcommand{\vclap}[2][0pt]{\hbox to #1{\hss#2\hss}}
\newcommand{\vclapm}[2][0pt]{\hbox to #1{\hss$#2$\hss}}
\def\semiproofadjust{\vadjust{\nobreak\vskip-1.3ex\nobreak}}
\def\interdisplayskip{.5ex}
\newskip\mydisplaywidth
\newcommand{\twolinedisplay}[3][10pt]{%
  \mydisplaywidth=\displaywidth
  \advance\mydisplaywidth-#1
  \begin{array}{c}
    \clap{\hbox to\mydisplaywidth{$\displaystyle#2$\hss}}\\[\interdisplayskip]
    \clap{\hbox to\mydisplaywidth{\hss$\displaystyle#3$}}
  \end{array}
}
\newlength{\dhatheight}
\newcommand{\doublehat}[1]{%
    \settoheight{\dhatheight}{\ensuremath{\hat{#1}}}%
    \addtolength{\dhatheight}{-0.35ex}%
    \hat{\vphantom{\rule{1pt}{\dhatheight}}%
    \smash{\hat{#1}}}}
\newcommand{\union}{\cup}
\newcommand{\Union}{\bigcup}
\newcommand{\sop}{[}
\newcommand{\scl}{]}
\newcommand{\sel}[2]{#1 \backslash #2}
\newcommand{\unsubst}[2]{\sop \sel{#1}{#2} \scl}
\newcommand{\Pos}{\mathrm{Pos}} 
\newcommand{\Lang}[1]{L(#1)}	
\newcommand{\Gram}[1]{\mathrm{G}(#1)}	
\newcommand{\Gramreg}[1]{\mathrm{G}_0(#1)}	
\newcommand{\TR}[1]{\mathcal{T}_{#1}}
\newcommand{\TRV}[2]{\mathcal{T}_{#1}(#2)}
\newcommand{\DD}{\mathcal{D}}
\newcommand{\DDt}{\tilde{\mathcal{D}}}
\newcommand{\PP}{\mathcal{P}}
\newcommand{\derel}[1]{\mathbin{\rightarrow_{#1}}}
\newcommand{\sderel}[1]{\mathbin{\rightarrow_{#1}^*}}
\newcommand{\atpos}[2]{#1|_{#2}}
\newcommand{\sreach}[1]{\mathbin{\prec_{#1}^*}}
\newcommand{\preach}[1]{\mathbin{\prec_{#1}^+}}
\newcommand{\nreach}[1]{\mathbin{\prec_{#1}^{}}}
\newcommand{\reach}[1]{\mathbin{\prec_{#1}}}
\newcommand{\nonrig}[2]{\mathbf{n}(#1,#2)}
\newcommand{\nonrigd}[1]{\mathbf{n}(#1)}
\newcommand{\pathto}[1][]{\mathbin{\stackrel{#1}{\dashrightarrow}}}
\newcommand{\length}[1]{\left|#1\right|}
\newcommand{\Hcont}[1]{[\mkern-3mu[#1]\mkern-3mu]}
\newcommand{\scutred}{\leadsto}
\newcommand{\dual}[1]{\wcneg{#1}}
\newcommand{\Lor}{\bigvee}
\newcommand{\EV}{\mathrm{EV}}	
\newcommand{\EVc}{\mathrm{EV_c}}	
\newcommand{\Hseq}{\mathrm{H}}	
\newcommand{\Bsub}{\mathrm{B}}	
\newcommand{\terms}[1]{\mathrm{tm}(#1)}	
\newcommand{\cred}{\rightsquigarrow}	
\newcommand{\credm}{\stackrel{\mathit{ne}}{\rightsquigarrow}}	
\newcommand{\genrel}{\longrightarrow}
\newcommand{\sk}{\mathrm{sk}}
\newcommand{\lkcut}[3][]{\infer[\cutr{#1}]{#2}{#3}}
\newcommand{\lku}[3][]{\infer[\forall{#1}]{#2}{#3}}
\def\weakr{\mathsf{w}}
\def\conr{\mathsf{c}}
\newcommand{\forallrr}[1]{\forall_{#1}}
\newcommand{\existsrr}[1]{\exists_{#1}}
\newcommand{\cutrr}[1]{\cutr_{#1}}
\newcommand{\leftabove}[2]{#1\mathbin{\Lsh}#2}
\newcommand{\rightabove}[2]{#2\mathbin{\Rsh}#1}
\newcommand{\rparallel}[2]{#1\mathbin{\Lsh\Rsh}#2}
\newcommand{\Hsub}{\le}
\newcommand{\Hsup}{\ge}
\newcommand{\deprelz}{\prec}
\newcommand{\deprel}{\llcurly}
\title[Herbrand-Confluence]{Herbrand-Confluence\rsuper*}
\author[S.~Hetzl]{Stefan Hetzl\rsuper a}
\address{{\lsuper a}Institute of Discrete Mathematics and Geometry\\
  Vienna University of Technology\\
  Wiedner Hauptstra{\ss}e 8-10\\
  1040 Vienna, Austria}
\email{stefan.hetzl@tuwien.ac.at}
\author[L.~Stra\ss burger]{Lutz Stra\ss burger\rsuper b}
\address{{\lsuper b}INRIA Saclay -- \^Ile-de-France\\
1 rue Honor\'e d'Estienne d'Orves\\
B\^atiment Alan Turing, Campus de l'\'Ecole Polytechnique\\
91120 Palaiseau, France}
\email{lutz@lix.polytechnique.fr}
\keywords{proof theory, first-order logic, tree languages, term rewriting, semantics of proofs}
\begin{document}

\begin{abstract}
We consider cut-elimination in the sequent calculus for classical
first-order logic. It is well known that this system, in its most
general form, is neither confluent nor strongly normalizing. In this
work we take a coarser (and mathematically more realistic) look at
cut-free proofs. We analyze which witnesses they choose for which
quantifiers, or in other words: we only consider the
Herbrand-disjunction of a cut-free proof. Our main theorem is a
confluence result for a natural class of proofs: all (possibly
infinitely many) normal forms of the non-erasing reduction lead to the
same Herbrand-disjunction.
\end{abstract}

\maketitle


\section{Introduction}

The constructive content of proofs has always been a central topic of
proof theory and it is also one of the most important influences that
logic has on computer science.  Classical logic is widely used and
presents interesting challenges when it comes to understanding the
constructive content of its proofs. These challenges have therefore
attracted considerable attention, see, for
example,~\cite{Parigot92LambdaMu,Danos97New,Curien00Duality},
\cite{Barbanera96Symmetric},
\cite{Urban00Classical,Urban01Strong},
\cite{Berger02Refined},
\cite{Kohlenbach08Applied}, or~\cite{Baaz00CutElimination} for
different investigations in this direction.

A well-known, but not yet well-understood, phenomenon is that a single
classical proof usually allows several different constructive
readings. From the point of view of applications this means that we
have a choice among different programs that can be extracted.
In~\cite{Ratiu12Exploring} the authors show that two different
extraction methods applied to the same proof produce two programs, one
of polynomial and one of exponential average-case complexity. This
phenomenon is further exemplified by case studies
in~\cite{Urban00Classical,Baaz05Experiments,Baaz08Ceres} as well as
the asymptotic
results~\cite{Baaz11Nonconfluence,Hetzl12Computational}. The reason
for this behavior is that classical ``proofs often leave algorithmic
detail underspecified''~\cite{Avigad10Computational}.

On the level of cut-elimination in the sequent calculus this
phenomenon is reflected by the fact that the standard proof reduction
without imposing any strategy is not confluent.  In this paper we
consider cut-elimination in classical first-order logic and treat the
question which cut-free proofs one can obtain (by the strategy-free
rewriting system) from a single proof with cuts. As our aim is to
compare cut-free proofs we need a notion of equivalence of proofs:
clearly the syntactic equality makes more differences than those which
are mathematically interesting. Being in a system with quantifiers, a
natural and more realistic choice is to consider two cut-free proofs
equivalent if they choose the same terms for the same quantifiers, in
other words: if they have the same Herbrand-disjunction.

A cut-reduction relation will then be called {\em Herbrand-confluent}
if all its normal forms have the same Herbrand-disjunction. The main
result of this paper is that, for a natural class of proofs, the
standard reduction without erasing of subproofs is
Herbrand-confluent. This result is surprising as this reduction is
neither confluent nor strongly normalizing and may produce normal
forms of arbitrary size (which---as our result shows---arise only
from repetitions of the same instances).

As a central proof technique we use rigid tree languages which have
been introduced in~\cite{Jacquemard09Rigid} with applications in
verification (e.g.\ of cryptographic protocols as
in~\cite{Jacquemard11Rigid}) as their primary purpose. To a proof we
will associate a rigid tree grammar whose language is invariant under
non-erasing cut-elimination and hence equal to the only obtainable
Herbrand-disjunction. This property suggests the new notion of
\emph{Herbrand-content} of a proof, which is defined as the language of the
grammar of the proof, and which is a strong invariant.  A side effect of this
proof technique is a combinatorial description of how the structure of
a cut-free proof is related to that of a proof with cut. Such
descriptions are important theoretical results which underlie
applications such as algorithmic cut-introduction as
in~\cite{Hetzl12Towards,HetzlXXAlgorithmic}.

This paper is an extended version of~\cite{Hetzl12Herbrand}, where we
have worked in a setting that was restricted to proofs of formulas of
the shape $\exists x_1 \cdots \exists x_n\, A$, for $A$
quantifier-free. In this paper we extend the results obtained
in~\cite{Hetzl12Herbrand} to proofs of arbitrary end-sequents. For
this, we first carry out the central technical work in a setting of
skolemized end-sequents, and then extend these results to the general
case via deskolemization. This proof strategy is analogous to the proof of
the second $\varepsilon$-Theorem from the first $\varepsilon$-Theorem
in~\cite{Hilbert39Grundlagen2}.  

More precisely, this paper is structured as follows: in
Section~\ref{sec.seqcalc_cutelim} we briefly review the sequent
calculus and cut-elimination for classical first-order logic. In
Section~\ref{sec.grammars} we describe regular and rigid tree grammars,
which we relate to proofs in Section~\ref{sec.proofs_grammars}. In
Section~\ref{sec.grammars_cutel} we prove the main invariance lemma in
the skolemized setting, and in Section~\ref{sec.skol_deskol} we
establish the necessary techniques and results for lifting the
invariance lemma to the general case. This lifting is carried out in
Section~\ref{sec.herbrand_content} followed by a discussion of several
corollaries such as Herbrand-confluence.


\section{Sequent Calculus and Cut-Elimination}\label{sec.seqcalc_cutelim}

For the sake of simplicity, we consider only a one-sided sequent
calculus and formulas in negation normal form, but the results can be
proved for a two-sided sequent calculus in the same way. Thus, our
\emph{formulas} (denoted by $A$, $B$, \ldots) are generated from the
literals and the constants $\top$ and $\bot$ via the binary
connectives $\land$ (\emph{and}) and $\lor$ (\emph{or}) and the
quantifiers $\exists$ and $\forall$ in the usual way. The
\emph{negation} $\dual{A}$ of a formula $A$ is defined via the usual
De~Morgan laws. A \emph{sequent} (denoted by $\Gamma$, $\Delta$,
\ldots) is a multiset of formulas.

\begin{defi}
A \emph{proof} is a tree of sequents, such that every
node forms together with its children an instance of one of the
inference rules shown below:
\begin{equation*}
\vlinf{}{\axr}{\;A,\dual{A}\;}{}
\qqquad
\vlinf{}{\top}{\;\top\;}{}
\qqquad
\vlinf{}{\weakr}{\Gamma, A}{\Gamma}
\qqquad
\vlinf{}{\conr}{\Gamma, A}{\Gamma, A, A}
\qqquad
\vliinf{}{\cutrr{}}{\Gamma,\Delta}{\Gamma, A\;}{\;\dual{A}, \Delta}
\end{equation*}
\begin{equation*}
\vlinf{}{\lor}{\Gamma, A\lor B}{\Gamma, A, B}
\qqquad
\vliinf{}{\land}{\Gamma,\Delta,A\land B}{\Gamma,A\;\;}{\;\;\Delta, B}
\qqquad
\vlinf{}{\forall}{\Gamma, \forall x\, A}{\Gamma, A\unsubst{x}{\alpha}}
\qqquad
\vlinf{}{\exists}{\Gamma, \exists x\, A}{\Gamma, A\unsubst{x}{t}}
\end{equation*}
where in the $\axr$-rule $A$ has to be a literal, in the
$\forall$-rule the $\alpha$ is called {\em eigenvariable} and does not
appear in $\Gamma,\forall x\, A$, and in the $\exists$-rule the term
$t$ does not contain a variable bound in $A$. We use the notation
$\unsubst{x}{\alpha}$ for the substitution that replaces $x$ by the
eigenvariable $\alpha$. Similarly, $\unsubst{x}{t}$ is the
substitution that replaces $x$ with $t$.
\end{defi}
The explicitly mentioned formula in a conclusion of an inference rule, like
$A\lor B$ for $\lor$ is called {\em main formula}. Analogously, the explicitly
mentioned formulas in the premises of an inference rule, like $A$ and $B$ for
$\lor$, are called {\em auxiliary formulas}. In the context of a concrete
derivation we speak about {\em main} and {\em auxiliary formula occurrences} of
inferences.

\begin{figure*}[!t]
  \footnotesize
  \def\myskip{0.2ex}
  $
    \begin{array}{r@{\qquad}r@{\qquad}c@{\qquad}l}
      \rlap{Axiom reduction:}\\[\myskip]
      \qqquad\qqquad&
      \vlderivation{
        \vliin{}{\cutrr{}}{\Gamma,A}{
          \vlhtr{\psi}{\Gamma,A}}{
          \vlin{}{\axr}{\dual{A},A}{
            \vlhy{}}}}
      &\scutred&
      \vlderivation{
        \vlhtr{\psi}{\Gamma,A}}
      \\\\
      \rlap{Quantifier reduction:}\\[\myskip]
      &
      \vlderivation{
        \vliin{}{\cutrr{}}{\Gamma,\Delta}{
          \vlin{}{\exists}{\Delta, \exists x\, \dual{A}}{
            \vlhtr{\psi_1}{\Delta, \dual{A}\unsubst{x}{t}}}}{
          \vlin{}{\forall}{\forall x\, A, \Gamma}{
            \vlhtr{\psi_2}{A\unsubst{x}{\alpha}, \Gamma}}}}
      &\scutred&
      \vlderivation{
        \vliin{}{\cutrr{}}{\Gamma,\Delta}{
          \vlbtrf{\psi_1}{\Delta, \dual{A}\unsubst{x}{t}}{2em}{1}}{
          \vlbtrf{\psi_2\unsubst{\alpha}{t}}{A\unsubst{x}{t},\Gamma}{
            4.5em}{.7}}}
      \\ \\
      \rlap{Propositional reduction:}\\[\myskip]
      &
      \vlderivation{
        \vliin{}{\cutrr{}}{\Gamma,\Delta,\Pi}{
          \vliin{}{\land}{\Gamma, \Delta, A\land B}{
            \vlhtr{\psi_1}{\Gamma,A}}{
            \vlhtr{\psi_2}{\Delta,B}}}{
          \vlin{}{\lor}{\dual{A}\lor \dual{B}, \Pi}{
            \vlhtr{\psi_3}{\dual{A},\dual{B},\Pi}}}}
      &\scutred&
      \vlderivation{
        \vliin{}{\cutrr{}}{\Gamma,\Delta,\Pi}{
          \vlhtr{\psi_2}{\Delta,B}}{
          \vliin{}{\cutrr{}}{\dual{B}, \Gamma, \Pi}{
            \vlhtr{\psi_1}{\Gamma,A}}{
            \vlhtr{\psi_3}{\dual{A},\dual{B},\Pi}}}}
      \\\\
      \rlap{Contraction reduction:}\\ 
      &
      \vlderivation{
        \vliin{}{\cutrr{}}{\Gamma,\Delta}{
          \vlin{}{\conr}{\Gamma, A}{
            \vlhtr{\psi_1}{\Gamma,A,A}}}{
          \vlhtr{\psi_2}{\dual{A},\Delta}}}
      &\scutred&
      \vlderivation{
        \vliq{}{\conr^\ast}{\Gamma,\Delta}{
          \vliin{}{\cutrr{}}{\Gamma,\Delta,\Delta}{
            \vliin{}{\cutrr{}}{\Gamma, \Delta, A}{
              \vlhtr{\psi_1}{\Gamma,A,A}}{
              \vlhtr{~\psi_2\rho'}{\dual{A},\Delta}}}{
            \vlhtr{~\psi_2\rho''}{\dual{A},\Delta}}}}
      \\\\
      \rlap{Weakening reduction:}\\[\myskip]
      &
      \vlderivation{
        \vliin{}{\cutrr{}}{\Gamma,\Delta}{
          \vlin{}{\weakr}{\Gamma, A}{
            \vlhtr{\psi_1}{\Gamma}}}{
          \vlhtr{\psi_2}{\dual{A},\Delta}}}
      &\scutred&
      \vlderivation{
        \vliq{}{\weakr^\ast}{\Gamma,\Delta}{
          \vlhtr{\psi_1}{\Gamma}}}
      \\\\
      \rlap{Unary inference permutation:}\\[\myskip]
      &
      \vlderivation{
        \vliin{}{\cutrr{}}{\Gamma,\Delta}{
          \vlin{}{\rr}{\Gamma,A}{
            \vlhtr{\psi_1}{\Gamma',A}}}{
          \vlhtr{\psi_2}{\dual{A},\Delta}}}
      &\scutred&
      \vlderivation{
        \vlin{}{\rr}{\Gamma,\Delta}{
          \vliin{}{\cutrr{}}{\Gamma',\Delta}{
            \vlhtr{\psi_1}{\Gamma',A}}{
            \vlhtr{\psi_2}{\dual{A},\Delta}}}}
      \\\\
      \rlap{Binary inference permutation:}\\[\myskip]
      &
      \vlderivation{
        \vliin{}{\cutrr{}}{\Gamma,\Delta}{
          \vliin{}{\rr}{\Gamma,A}{
            \vlhtr{\psi_1}{\Gamma'}}{
            \vlhtr{\psi_2}{\Gamma'',A}}}{
          \vlhtr{\psi_3}{\dual{A},\Delta}}}
      &\scutred&
      \vlderivation{
        \vliin{}{\rr}{\Gamma,\Delta}{
          \vlhtr{\psi_1}{\Gamma'}}{
          \vliin{}{\cutrr{}}{\Gamma'',\Delta}{
            \vlhtr{\psi_2}{\Gamma'',A}}{
            \vlhtr{\psi_3}{\dual{A},\Delta}}}}
    \end{array}
  $
  \caption{Cut-reduction steps\vadjust{\vskip-2.5ex}}
  \label{fig:cut-red}
\end{figure*}

\begin{defi}
  A proof is called {\em regular} if different $\forall$-inferences
  have different eigenvariables.
\end{defi}

We use the following convention: We use lowercase Greek letters
$\alpha,\beta,\gamma,\delta,\ldots$ for \emph{eigenvariables} in
proofs, and $\pi,\psi,\ldots$ for proofs.  For a proof $\pi$, we write
$|\pi|$ for the number of occurrences of inferences in~$\pi$.  Furthermore, we write
$\EV(\pi)$ for the set of eigenvariables of $\forall$-inferences of
$\pi$.

In a sequent calculus proof, each formula occurrence can be traced
downwards via its descendants to either a cut formula or the
end-sequent. We write $\EVc(\pi)$ for the set of those eigenvariables
in $\pi$ that are introduced by a $\forall$-inference whose main
formula occurrence can be traced downwards to a cut formula, i.e., is
not part of the end-sequent of~$\pi$. The elements of $\EVc(\pi)$ will
also be called \emph{cut-eigenvariables}.

\begin{defi}
  A \emph{weak sequent} is a sequent that does not contain any
  $\forall$-quantifier.
\end{defi}

\begin{fact}
  If the end-sequent of a proof $\pi$ is a weak sequent then
  $\EV(\pi)=\EVc(\pi)$.
\end{fact}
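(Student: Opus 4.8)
The inclusion $\EVc(\pi)\subseteq\EV(\pi)$ is immediate from the definitions and needs no hypothesis, so the content of the statement is the converse inclusion $\EV(\pi)\subseteq\EVc(\pi)$, for which the weakness of the end-sequent is used. The plan is to fix an arbitrary $\forall$-inference of $\pi$, say with main formula occurrence $\forall x\,A$ and eigenvariable $\alpha$, and to trace $\forall x\,A$ downwards through $\pi$. By the definition of the trace this descendant path terminates either at a cut formula or in the end-sequent; I want to rule out the latter, for then $\alpha\in\EVc(\pi)$ by definition.

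The crucial step is an invariant along the descendant relation: \emph{every descendant of $\forall x\,A$ contains a $\forall$-quantifier}. I would prove this by induction on the length of the trace. The base case is trivial, as $\forall x\,A$ has a leading $\forall$. For the induction step I would run through the inference rules and inspect how a traced occurrence is related to its descendant. For the context formulas of any rule, and for the two occurrences merged by a $\conr$-inference, the descendant is literally the same formula, so it keeps its $\forall$; a $\weakr$-inference introduces a formula with no ancestor and is therefore irrelevant to any trace. When the traced occurrence is an auxiliary formula of a $\lor$- or $\land$-inference, its descendant is the corresponding main formula, which contains the occurrence itself as a subformula, so again the $\forall$ survives.

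The one case requiring care, and the only place where the descendant does not literally contain the traced occurrence, is a quantifier inference. If $\forall x\,A$ is the auxiliary formula of a $\forall$-inference, its descendant is itself of the form $\forall y\,B$ and thus trivially carries a $\forall$; if it is the auxiliary formula $B\unsubst{y}{t}$ of an $\exists$-inference, the descendant $\exists y\,B$ contains $B$, and since the substituted term $t$ is quantifier-free, every quantifier occurring in $B\unsubst{y}{t}$ already occurs in $B$, whence $B$ — and therefore $\exists y\,B$ — still contains a $\forall$. This exhausts the cases and establishes the invariant.

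With the invariant in hand the conclusion is immediate: were the trace to terminate in the end-sequent, some formula of the end-sequent would contain a $\forall$-quantifier, contradicting the assumption that it is weak. Hence the trace ends at a cut formula, so $\alpha\in\EVc(\pi)$; as the $\forall$-inference was arbitrary we obtain $\EV(\pi)\subseteq\EVc(\pi)$ and therefore equality. I expect the main obstacle to be precisely the quantifier case of the invariant, where one must observe that variable-for-variable and (quantifier-free) term-for-variable substitutions never erase a quantifier occurrence; the remaining cases are a routine rule-by-rule check.
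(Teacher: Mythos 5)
Your proof is correct. The paper states this as a \emph{Fact} with no proof at all, treating it as immediate from the definitions, and your argument---tracing the main formula occurrence of each $\forall$-inference downwards and maintaining the invariant that every descendant still contains a $\forall$-quantifier (the only non-trivial case being the $\exists$-rule, where quantifier-freeness of the substituted term matters)---is exactly the routine verification the paper leaves implicit.
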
 

\begin{rem}
  Our results do not depend on technical differences in the definition
  of the calculus (which in classical logic are inessential) such as
  the choice between multiplicative and additive rules and the
  differences in the cut-reduction induced by these choices. However,
  for the sake of precision, we will formally define the cut-reduction
  used in this paper.
\end{rem}

\begin{defi}
  Cut-reduction is defined on regular proofs and consists of the proof
  rewrite steps shown in Figure~\ref{fig:cut-red} (as well as all
  corresponding symmetric variants), where in the contraction
  reduction~step
  $$\rho'=\unsubst{\alpha}{\alpha'}_{\alpha\in \EV(\psi_2)}
  \qqquand
  \rho''=\unsubst{\alpha}{\alpha''}_{\alpha\in \EV(\psi_2)}$$ are
  substitutions replacing each eigenvariable $\alpha$
  in $\psi_2$ by fresh copies, i.e., $\alpha'$ and $\alpha''$ are
  fresh for the whole proof.
  We write $\cred$
  for the compatible (w.r.t.\ the inference rules), 
  reflexive and
  transitive closure of~$\scutred$.
\end{defi}

The above system for cut-reduction consists of purely local, minimal
steps and therefore allows the simulation of many other reduction relations.
We chose to work in this system in order to obtain invariance results of maximal strength.
Among the systems that can be simulated literally are for example all color annotations
of~\cite{Danos97New} in the multiplicative version of LK defined there. The real strength
of the results in this paper lies however in the general applicability of the
used proof techniques: the extraction of a grammar from a proof (that is described
in the next sections) is possible in all versions
of sequent calculus for classical logic and in principle also in other
systems like natural deduction. In particular, our results also apply to
inversion-based cut-elimination procedures such as for example that
in~\cite{Schwichtenberg77Proof}.


\section{Regular and Rigid Tree Grammars}\label{sec.grammars}

Formal language theory constitutes one of the main areas of
theoretical computer science.  Traditionally, a formal language is
defined to be a set of strings but this notion can be generalized in a
straightforward way to considering a language to be a set of
first-order terms. Such tree languages possess a rich theory and many
applications, see e.g.~\cite{Gecseg97Tree},~\cite{Comon07Tree}. In
this section we introduce notions and results from the theory of tree
languages that we will use for our proof-theoretic purposes.

A \emph{ranked alphabet} $\Sigma$ is a finite set of symbols which
have an associated arity (their \emph{rank}). For $f\in\Sigma$, we
sometimes use the notation $f/n$ for saying that $n$ is the arity
of~$f$. We write $\TR{\Sigma}$ to denote the set of all finite trees
(or terms) over $\Sigma$, and we write $\TRV{\Sigma}{X}$ to denote the
set of all trees over $\Sigma$ and a set $X$ of variables (seen as
symbols of arity 0).  We also use the notion of \emph{position} in a
tree, which is a list of natural numbers. We write $\varepsilon$ for
the empty list (the root position), and we write $p.q$ for the
concatenation of lists $p$ and~$q$. We write $p\le q$ if $p$ is a
prefix of $q$ and $p<q$ if $p$ is a proper prefix of $q$. Clearly,
$\le$ is a partial order and $<$ is its strict part. We write
$\Pos(t)$ to denote the set of all positions in a
tree~$t\in\TRV{\Sigma}{X}$. Furthermore, for a given tree or term~$t$
and position~$p$, we write $\atpos{t}{p}$ to denote the subterm of $t$
that occurs at position~$p$.

\begin{defi}
  A \emph{regular tree grammar} is a tuple $G=\tuple{N,\Sigma,\theta,P}$,
  where $N$ is a finite set of \emph{non-terminal symbols}, where
  $\Sigma$ is a ranked alphabet, such that $N\cap\Sigma=\emptyset$,
  where $\theta$ is the \emph{start symbol} with $\theta\in N$, and where $P$ is a finite
  set of production rules of the form $\beta\to t$ with $\beta \in N$ and 
  $t\in\TRV{\Sigma}{N}$.
\end{defi}

The derivation relation
$\derel{G}$ of a regular
tree grammar $G=\tuple{N,\Sigma,\theta,P}$ is defined as follows. We
have $s\derel{G} r$ if there is a production rule $\beta\to t$ in $P$
and a position $p\in\Pos(s)$, such that $\atpos{s}{p} = \beta$ and $r$
is obtained from $s$ by replacing $\beta$ at $p$ by $t$. The
\emph{language} of $G$ is then defined as
$\Lang{G}=\set{t\in\TR{\Sigma}\mid\theta\sderel{G}t}$, where
$\sderel{G}$ is the reflexive and transitive closure of $\derel{G}$. A
\emph{derivation $\DD$} of a term $t\in\Lang{G}$ is a sequence
$t_0\derel{G}t_1\derel{G}\ldots\derel{G}t_n$ with $t_0=\theta$ and
$t_n=t$. Note that a term $t$ might have different derivations in~$G$.

In~\cite{Jacquemard09Rigid} the class of rigid tree languages has been
introduced with applications in verification (e.g.\ of cryptographic
protocols as in~\cite{Jacquemard11Rigid}) as primary motivation. It
will turn out that this class is appropriate for describing
cut-elimination in classical first-order logic. In contrast
to~\cite{Jacquemard09Rigid} we do not use automata but grammars---their
equivalence is shown
in~\cite{Hetzl12Applying}.

\begin{defi}
  A \emph{rigid tree grammar} is a tuple
  $\tuple{N,N_R,\Sigma,\theta,P}$, where $\tuple{N,\Sigma,\theta,P}$, is a
  regular tree grammar and $N_R\subseteq N$ is the set of \emph{rigid
    non-terminals}.  We speak of a \emph{totally rigid tree grammar} if
  $N_R=N$. In this case we will just write
  $\tuple{N_R,\Sigma,\theta,P}$.
\end{defi}

A derivation $\theta=t_0\derel{G}t_1\derel{G}\ldots\derel{G}t_n=t$ of a
rigid tree grammar $G=\tuple{N,N_R,\Sigma,\theta,P}$ is a derivation in
the underlying regular tree grammar satisfying the additional
\emph{rigidity condition}: If there are $i,j<n$, a non-terminal
  $\beta\in N_R$, and positions $p$ and $q$ such that
  $\atpos{t_i}{p}=\beta$ and $\atpos{t_j}{q}=\beta$ then
  $\atpos{t}{p}=\atpos{t}{q}$.
The language $\Lang{G}$ of the rigid
tree grammar $G$ is the set of all terms $t\in\TR{\Sigma}$ which can
be derived under the rigidity condition.
For a given derivation
$\DD\colon\theta=t_0\derel{G}t_1\derel{G}\ldots\derel{G}t_n=t$ and a
non-terminal $\beta$ we say that $p\in\Pos(t)$ is a
\emph{$\beta$-position in~$\DD$} if there is an $i\le n$ with
$\atpos{t_i}{p}=\beta$, i.e., either a production rule
$\beta\rightarrow s$ has been applied at $p$ in $\DD$, or $\beta$
occurs at position~$p$ in~$t$.  In the context of a given grammar $G$,
we sometimes write $\DD\colon\alpha\sderel{G} t$ to specify that
$\DD$ is a derivation starting with $\alpha$ and ending with the
term~$t$.

\begin{exa}
Let $\Sigma=\{0/0,s/1\}$. A simple pumping argument shows that the
language $L=\{ f(t,t) \mid t\in\TR{\Sigma} \}$ is not regular. On the
other hand, $L$ is generated by the rigid tree grammar 
\begin{eqnarray*}
  G&=&\left\langle \set{\theta,\alpha,\beta}, \set{\alpha} , \set{ 0/0, s/1, f/2 }, \theta,P\right\rangle
  \quad\text{where}\\
  P&=&\{\; \theta \rightarrow f(\alpha,\alpha),\\
  &&\phantom{\{\;}\alpha\rightarrow 0 \mid s(\beta), \\
  &&\phantom{\{\;}\beta \rightarrow 0 \mid s(\beta)\; \}
\end{eqnarray*}
\end{exa}

\begin{lem}\label{lem.rigid_projection}
Let $G=\tuple{N,N_R,\Sigma,\theta,P}$ be a rigid tree grammar and let
$t\in\Lang{G}$. Then there is a derivation
$\theta\derel{G}\ldots\derel{G}t$ which uses at most one
$\beta$-production for each $\beta\in N_R$.
\end{lem}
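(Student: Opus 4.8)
The plan is to start from an arbitrary rigid derivation of $t$ and repeatedly perform a cut-and-paste surgery that reduces the number of rigid non-terminals which get expanded in more than one way, until none remain.

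First I would record the consequence of the rigidity condition that drives everything. Fix a rigid derivation $\DD\colon\theta\sderel{G}t$. For each $\beta\in N_R$ occurring in $\DD$, the rigidity condition says that all $\beta$-positions $p$ carry one and the same subterm $\atpos{t}{p}$, which I denote $s_\beta$. In particular a rigid non-terminal can never occur at two nested positions $p<q$ of the derivation, since that would force $s_\beta=\atpos{t}{p}$ to be a proper subterm of itself. Hence, for fixed $\beta$, the $\beta$-positions are pairwise incomparable, the subtrees they root are disjoint, and they can be manipulated independently. Next, since applications of productions at disjoint positions commute, I would assume without loss of generality that $\DD$ is organised so that the sub-derivation producing any given subtree of $t$ occurs contiguously; this is what makes the surgery below well defined.

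Then I would introduce the measure $m(\DD)=\#\{\beta\in N_R:\DD\text{ uses at least two distinct }\beta\text{-productions}\}$ and argue by induction on $m(\DD)$. If $m(\DD)=0$ we are done. Otherwise pick some $\beta\in N_R$ for which $\DD$ uses at least two distinct productions, let $\beta\to u$ be the production applied at the first step at which a $\beta$-production is used, and let $\DD_0\colon u\sderel{G}s_\beta$ be the contiguous sub-derivation that grows this first occurrence into its final subterm $s_\beta$. For every other $\beta$-position $q$ (where $\atpos{t}{q}=s_\beta$ by rigidity), I would delete from $\DD$ the sub-derivation currently growing the $\beta$ at $q$ into $s_\beta$ and replace it by the single step $\beta\to u$ followed by a fresh copy of $\DD_0$ relocated below $q$. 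The resulting $\DD'$ is again a derivation of $t$, because only sub-derivations of equal subterms $s_\beta$ have been exchanged, and it still satisfies the rigidity condition, since every occurrence of a rigid non-terminal $\delta$ still sits at a position carrying the corresponding subterm $s_\delta$.

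Finally I would verify that the measure strictly decreases. After the surgery $\beta$ is expanded only by $\beta\to u$, so $\beta$ no longer contributes to $m$. For every other rigid $\delta$, the set of productions used by $\DD'$ is a subset of those used by $\DD$: the grafted copies of $\DD_0$ are pieces of $\DD$ and introduce no new productions, while the deletions only remove applications. Thus no $\delta\neq\beta$ turns from good into bad, so $m(\DD')<m(\DD)$, and the induction hypothesis applies to $\DD'$. The main technical obstacle is the bookkeeping of this cut-and-paste step, namely making the commutation and contiguity precise and checking that regrafting an equal-subterm sub-derivation preserves the rigidity condition; once that is in place, the counting argument for the measure is immediate.
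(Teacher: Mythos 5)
Your proof is correct and takes essentially the same route as the paper's: by rigidity, the subterms of $t$ at any two $\beta$-positions coincide, so the sub-derivation below each repeated $\beta$-position can be replaced by a copy of the one below a fixed $\beta$-position, and this preserves rigidity because only existing parts of the derivation are copied. Your additional bookkeeping---the non-nesting observation, the contiguity rearrangement, and the explicit termination measure $m(\DD)$---just makes precise what the paper's terser argument leaves implicit.
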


\begin{proof}
Given any derivation of $t$, suppose
both $\beta\rightarrow s_1$ and $\beta\rightarrow s_2$ are
used at positions $p_1$ and $p_2$ respectively. Then by the rigidity
condition $t|_{p_1} = t|_{p_2}$ and we can replace the derivation at
$p_2$ by that at $p_1$ (or the other way round). This transformation
does not violate the rigidity condition because it only copies
existing parts of the derivation.
\end{proof}

\begin{lem}\label{lem.elim_rigid_NT}
Let $G=\tuple{N_R,\Sigma,\theta,P}$ be a totally rigid tree grammar and
$\theta\neq \beta \in N_R$, such that there is exactly one $t$ with
$\beta\rightarrow t$ in $P$. If
$G'=\tuple{N_R\setminus\set{\beta},\Sigma,\theta,
  (P\setminus\set{\beta\to t})\unsubst{\beta}{t}}$ then
$\Lang{G}=\Lang{G'}$.
\end{lem}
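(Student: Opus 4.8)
The plan is to prove the two inclusions $\Lang{G'}\subseteq\Lang{G}$ and $\Lang{G}\subseteq\Lang{G'}$ separately, by translating derivations back and forth. The guiding observation is that, because $G$ is totally rigid and $\beta$ carries the \emph{unique} production $\beta\to t$, every occurrence of $\beta$ arising in a successful $G$-derivation is forced (using that same rule, and by the rigidity condition) to be rewritten into one and the same terminal tree; the substitution $\unsubst{\beta}{t}$ used to build $G'$ simply performs this unfolding in advance. Throughout I write $\gamma\to s$ for a production of $G$ other than $\beta\to t$, so that the corresponding production of $G'$ is $\gamma\to s\unsubst{\beta}{t}$.

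For $\Lang{G'}\subseteq\Lang{G}$ I would lift a given $G'$-derivation of $t$ step by step: a step using $\gamma\to s\unsubst{\beta}{t}$ is replaced by the $G$-step $\gamma\to s$ followed by one application of $\beta\to t$ at each position at which this step has just introduced a $\beta$. The resulting $G$-derivation produces the same term $t$. The only non-routine point is the rigidity condition. Every $\beta$-position is filled by the terminal expansion of $t$, so all $\beta$-positions carry equal subterms automatically; and for any other rigid non-terminal $\delta$, its positions in the lifted $G$-derivation are exactly the images of its positions in the original $G'$-derivation, since the copies of $t$ inserted for $\beta$ contribute the \emph{same} occurrences of $\delta$ in both grammars. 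Hence $G'$-rigidity of $\delta$ yields $G$-rigidity of $\delta$.

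For $\Lang{G}\subseteq\Lang{G'}$ I would start from an arbitrary $G$-derivation of $t$ (by Lemma~\ref{lem.rigid_projection} one may moreover assume the rule $\beta\to t$ is the only $\beta$-production in play, which here is automatic) and permute it so that each application of $\beta\to t$ occurs immediately after the step introducing the $\beta$ it rewrites. This permutation is legitimate because a non-terminal sits at a leaf, so the intervening steps all act at positions not below that leaf and therefore commute with the expansion. Folding each such pair, consisting of a step $\gamma\to s$ together with the ensuing expansions of the newly introduced $\beta$'s, into the single $G'$-production $\gamma\to s\unsubst{\beta}{t}$ then yields a $G'$-derivation of $t$, and the rigidity condition transfers back because the $\delta$-positions for $\delta\neq\beta$ are left unchanged by the folding.

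The step I expect to require the most care is the verification that the rigidity condition is matched \emph{exactly} under unfolding and folding. The substitution $\unsubst{\beta}{t}$ creates several identical copies of $t$, hence several copies of every non-terminal occurring in $t$, and one must check that the subterm-equalities these copies are subject to are precisely those already imposed by rigidity, never genuinely new constraints. A separate small subtlety is the degenerate case $\beta\in\Var(t)$: here $\beta\to t$ is self-referential, so by the rigidity condition a $\beta$-position would have to equal a proper subterm of itself, which is impossible for a finite tree. Thus $\beta$ is unproductive and occurs in no successful derivation of either grammar, and both languages coincide with that of the grammar obtained by simply deleting $\beta$ together with $\beta\to t$.
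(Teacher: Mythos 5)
Your proposal has the same architecture as the paper's proof: both inclusions are handled by folding $G$-derivations into $G'$-derivations and unfolding $G'$-derivations into $G$-derivations. Your fold direction (the permutation argument plus the observation that the rigidity constraints of $G'$ are a subset of those of $G$) is exactly the paper's first half, and your handling of the self-referential case $\beta\in t$ is a sensible addition that the paper passes over in silence (it cannot arise in the paper's application, where the grammar is acyclic by Lemma~\ref{lem.simple_acyclic}).

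There is, however, one genuine gap, in the unfold direction, at the claim that ``all $\beta$-positions carry equal subterms \emph{automatically}.'' This is not automatic, and it is precisely the point where the hypothesis of \emph{total} rigidity must be used. After unfolding, the copies of $t$ inserted at the various $\beta$-positions still contain non-terminals, and nothing in the mere uniqueness of the production $\beta\to t$ prevents the remainder of the derivation from expanding different copies differently. Indeed, if a non-terminal occurring in $t$ were allowed to be non-rigid, the statement would fail: take the productions $\theta\to f(\beta,\beta)$, $\beta\to g(\gamma)$, $\gamma\to a$, $\gamma\to b$ with $\beta$ rigid but $\gamma$ not; then $f(g(a),g(b))\in\Lang{G'}$, yet $f(g(a),g(b))\notin\Lang{G}$ because rigidity of $\beta$ forces the two subterms below $f$ to coincide. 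So rigidity at the $\beta$-positions is a genuinely new constraint of the unfolded derivation, and your closing paragraph defers exactly the verification that it is implied by the old ones, rather than carrying it out. The paper closes this in one step, and so should you: let $\gamma_1,\ldots,\gamma_n$ be the non-terminals occurring in $t$; since the grammar is totally rigid, each $\gamma_i$ is rigid, so the given $G'$-derivation assigns a unique subterm $u_i$ of the final term to \emph{all} $\gamma_i$-positions; hence every copy of $t$ expands to the same term $t\unsubst{\gamma_1}{u_1}\cdots\unsubst{\gamma_n}{u_n}$, which is exactly what rigidity of $\beta$ demands. With that sentence inserted, your argument is complete and coincides with the paper's proof.
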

\begin{proof}
If a $G$-derivation of a term $s$ uses $\beta$, it must replace
$\beta$ by $t$ hence $s$ is derivable using the productions of $G'$ as
well. The rigidity condition is preserved as the equality constraints
of the $G'$-derivation are a subset of those of the $G$-derivation.
Conversely, given a $G'$-derivation of a term $s$ we obtain a
derivation of $s$ from the productions of $G$ by replacing
applications of $\delta\to r\unsubst{\beta}{t}$ by $\delta\to r$
followed by a copy of $\beta\to t$ for each occurrence of $\beta$ in
$r$. Let $\gamma_1,\ldots,\gamma_n$ be the non-terminals that appear
in $t$. By the rigidity condition for $i\in\{ 1,\ldots,n \}$ there is
a unique term at all $\gamma_i$-positions in the derivation. Hence
$\beta$ fulfills the rigidity condition as well, and we have obtained a
$G$-derivation of~$s$.
\end{proof}

%

\begin{nota}
For a given non-terminal $\beta$ and a term $t$, we will write
$\beta\in t$ or $t\ni\beta$ for denoting that $\beta$ occurs in~$t$.
\end{nota}

\begin{defi}
  Let $G$ be a tree grammar. A \emph{path} of $G$ is a list $\PP$ of
  productions $\alpha_1\to t_1$, \ldots, $\alpha_n\to t_n$ with
  $n\geq1$ and $\alpha_{i+1}\in t_i$ for all
  $i\in\set{1,\ldots,n-1}$. The \emph{length} of a path is
  $\length{\PP}=n$.  We will also write $\PP\colon\alpha_1\to
  t_1\ni\alpha_2\to\ldots\ni\alpha_n\to t_n$ to denote a path.
\end{defi}

For a given path $\PP\colon\alpha_1\to
t_1\ni\alpha_2\to\ldots\ni\alpha_n\to t_n$ we say that
$\alpha_1,\ldots,\alpha_n$ are \emph{on the path} $\PP$ and write
$\alpha_i\in\PP$ for that. We also write
$\PP\colon\alpha_1\pathto t_n$ and $\PP\colon\alpha_1\pathto\alpha_n$,
if we do not want to explicitly mention the intermediate steps.  For
a fixed grammar $G$, we write $\alpha\pathto\beta$ to denote that
there is a path $\PP$ in $G$ with $\PP\colon\alpha\pathto\beta$.  

For a set $P$ of production rules, we write $\alpha\reach{P}\beta$ (or
simply $\alpha\reach{}\beta$, when $P$ is clear from context) if there
is a production $\alpha\to t$ in $P$ with $\beta\in t$. We write
$\preach{}$ for the transitive closure of $\reach{}$, and $\sreach{}$
for its reflexive, transitive closure. Note that $\alpha\pathto\beta$
implies $\alpha\preach{}\beta$, but not the other way around, since
$\beta$ could be a non-terminal with no production $\beta\to s$
in~$P$.

\begin{defi}
  A tree grammar $\tuple{N,\Sigma,\theta,P}$ is called \emph{cyclic} if
  $\alpha\preach{P}\alpha$ for some $\alpha\in N$, and
  \emph{acyclic} otherwise.
\end{defi}

\begin{lem}\label{lem.totrig_acyclic_language}
  If $G$ is totally rigid and acyclic, then we have that up to renaming of the
  non-terminals
  $G=\tuple{\set{\alpha_1,\ldots,\alpha_n},\Sigma,\alpha_1,P}$ with
  $\Lang{G}=\set{\alpha_1\unsubst{\alpha_1}{t_1}\cdots\unsubst{\alpha_n}{t_n}
    \mid \alpha_i \rightarrow t_i \in P}$.
\end{lem}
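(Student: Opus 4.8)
The plan is to use acyclicity to arrange the non-terminals in a topological order in which every production ``points forward'', and then to read off that a totally rigid derivation is nothing but the simultaneous replacement of each non-terminal by (the right-hand side of) a single chosen production.

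First I would trim the grammar so that every non-terminal is reachable from $\theta$ and productive (can derive some term of $\TR{\Sigma}$). Discarding the remaining non-terminals together with the productions mentioning them changes neither $\Lang{G}$ nor the hypotheses: total rigidity is trivially preserved, and deleting productions can only remove reachability edges, so acyclicity survives. After trimming, every non-terminal carries at least one production, which is exactly what makes the right-hand side of the claimed identity nonempty. Since $G$ is acyclic, $\preach{}$ is a transitive and irreflexive relation on the finite set $N$, hence a strict partial order; I extend it to a linear order and rename $N=\set{\alpha_1,\ldots,\alpha_n}$ accordingly with $\alpha_1=\theta$. (In the trimmed grammar $\theta$ is $\preach{}$-minimal, since $\alpha\preach{}\theta$ for $\alpha\neq\theta$ together with reachability of $\alpha$ from $\theta$ would give $\theta\preach{}\theta$, contradicting acyclicity.) By construction each production satisfies $t_i\in\TRV{\Sigma}{\set{\alpha_{i+1},\ldots,\alpha_n}}$, so for any choice of productions the composite substitution applied to $\alpha_1$ leaves no non-terminal, i.e. $\alpha_1\unsubst{\alpha_1}{t_1}\cdots\unsubst{\alpha_n}{t_n}\in\TR{\Sigma}$, and the right-hand side is indeed a set of ground terms.

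For the inclusion from right to left, fix productions $\alpha_i\to t_i$ and apply them in the order $i=1,\ldots,n$, each time at all current occurrences of $\alpha_i$; this is a regular derivation of $\alpha_1\unsubst{\alpha_1}{t_1}\cdots\unsubst{\alpha_n}{t_n}$. It meets the rigidity condition for free, because every non-terminal $\alpha_i$ is rewritten by one and the same term $t_i$, so any two $\alpha_i$-positions carry identical subterms in the final tree; hence the term lies in $\Lang{G}$. For the converse, take $t\in\Lang{G}$ with a derivation $\DD$. As $G$ is totally rigid, Lemma~\ref{lem.rigid_projection} lets me assume $\DD$ uses at most one production per non-terminal; write $\alpha_i\to t_i$ for the one it uses, choosing an arbitrary production for the non-terminals $\DD$ never touches (these exist after trimming and will turn out irrelevant). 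Since $t$ is ground, every non-terminal occurring in $\DD$ is eventually rewritten, necessarily by its unique $t_i$. Replaying these replacements in the topological order $i=1,\ldots,n$ reconstructs $t$: the step $\unsubst{\alpha_i}{t_i}$ fires only once every occurrence of $\alpha_i$ is present, and all such occurrences originate inside some $t_j$ with $j<i$, hence are already in place. Therefore $t=\alpha_1\unsubst{\alpha_1}{t_1}\cdots\unsubst{\alpha_n}{t_n}$ lies in the right-hand side, completing the equality.

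I expect the main obstacle to be the last step: making precise that Lemma~\ref{lem.rigid_projection} together with the forward-pointing order collapses the many possible scheduling choices of a derivation into the single sequential-substitution normal form, so that the outcome of $\DD$ is independent of the order of rewriting. The edge cases---non-terminals with no production, or reachable but unproductive ones---are genuine, since they would force the right-hand side to be empty while $\Lang{G}$ need not be, and they are exactly what the initial trimming step removes.
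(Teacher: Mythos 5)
Your proof is correct and follows essentially the same route as the paper's: use acyclicity to order the non-terminals topologically, invoke Lemma~\ref{lem.rigid_projection} to restrict attention to one production per non-terminal, and then collapse the derivation into the sequential-substitution form (the paper phrases this as ``rearranging'' the derivation and treats the $\supseteq$ direction as obvious). The only substantive difference is your initial trimming step, which explicitly repairs the degenerate cases (non-terminals with no production, or unreachable/unproductive ones) that the paper's proof glosses over when it inserts ``any substitution'' for non-terminals not used in the derivation.
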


\begin{proof}
  Acyclicity permits a renaming of non-terminals, such that
  $\alpha_i\preach{P}\alpha_j$ implies $i<j$. Then $\Lang{G}\supseteq
  \set{\alpha_1\unsubst{\alpha_1}{t_1}\cdots\unsubst{\alpha_n}{t_n}
    \mid \alpha_i \rightarrow t_i \in P}$ is obvious. For the
  left-to-right inclusion, let
  $\DD\colon\alpha_1=s_1\derel{G}\ldots\derel{G}s_n=s\in\TR{\Sigma}$
  be a derivation in $G$. By Lemma~\ref{lem.rigid_projection} we can
  assume that for each $j$ at most one production whose left-hand side
  is $\alpha_j$ is applied, say $\alpha_j\rightarrow t_j$. By
  acyclicity we can rearrange the derivation so that
  $\alpha_j\rightarrow t_j$ is only applied after $\alpha_i\rightarrow
  t_i$ for all $i<j$.  For those $\alpha_j$ which do not appear in the
  derivation we can insert any substitution without changing the final
  term so we obtain
  $s=\alpha_1\unsubst{\alpha_1}{t_1}\cdots\unsubst{\alpha_n}{t_n}$.
\end{proof}

This lemma entails that $|\Lang{G}| \leq \prod_{i=1}^n |\{ t \mid \alpha_i\rightarrow t \in P \}|$,
in particular we are dealing with a finite language. The central questions in this
context are (in contrast to the standard setting in formal language theory)
not concerned with {\em representability} but with the {\em size of a
representation}.


\section{Proofs and Grammars}\label{sec.proofs_grammars}

In this section we will relate sequent calculus proofs to rigid tree grammars.
A central tool for establishing this relation is Herbrand's
theorem~\cite{Herbrand30Recherches,Buss95Herbrand}. In its simplest form
it states that $\exists x\, A$, for $A$ quantifier-free, is valid iff
there are terms $t_1,\ldots, t_n$ such that $\Lor_{i=1}^n A\unsubst{x}{t_i}$
is a tautology. Such tautological disjunctions of instances are hence
called {\em Herbrand-disjunctions}. 
Such a disjunction, or equivalently: the set
of terms, can be considered a compact representation of a cut-free proof.
The relation to tree grammars is based on the observation that a (finite) set
of terms is just a (finite) tree language. While the Herbrand-disjunction of
a cut-free proof will be considered a tree language, a proof with cut will
give a rise to a grammar and its cut-elimination will be described by the
computation of the language of its grammar. 

There are different options for extending Herbrand's theorem to
non-prenex formulas, e.g.\ the Herbrand proofs
of~\cite{Buss95Herbrand} or the expansion trees
of~\cite{Miller87Compact}.  For our purposes it will be most useful to
follow the approach of~\cite{Baaz94Skolemization}.

\begin{defi}\label{def.Hseq}
  Let $\pi$ be a proof and let $O$ be a formula occurrence in
  $\pi$. Then we define the \emph{Herbrand-set} $\Hseq(O)$ of $O$
  inductively as follows:
  \begin{itemize}
  \item If $O$ is the occurrence of a formula $A$ in an axiom, then
    $\Hseq(O)=\set{A}$.
  \item If $O$ is in the conclusion sequent of an inference rule
    without being its main occurrence, then $O$ has exactly one
    ancestor $O'$ in one of the premises, and we let
    $\Hseq(O)=\Hseq(O')$.
  \item If $O$ is the main occurrence in the conclusion of a $\circ$-rule with
    $\circ\in\set{\land,\lor}$ and with auxiliary occurrences $O_1$
    and $O_2$, then $\Hseq(O)=\set{A\circ B\mid A\in\Hseq(O_1),
      B\in\Hseq(O_2)}$.
  \item If $O$ is the main occurrence in the conclusion of a
    $\forall$- or $\exists$-rule with auxiliary occurrence $O_1$ in the
    premise, then $\Hseq(O)=\Hseq(O_1)$.
  \item If $O$ is the main occurrence in the conclusion of a $\weakr$-rule, then
    $\Hseq(O)=\set{\bot}$.
  \item If $O$ is the main occurrence in the conclusion of a $\conr$-rule with auxiliary
    occurrences $O_1$ and $O_2$ in the premise, then
    $\Hseq(O)=\Hseq(O_1)\cup\Hseq(O_2)$.
  \end{itemize}
  Finally, we define $$\Hseq(\pi)=\bigcup_{P\in\Gamma}\Hseq(P)$$ 
  where
  $\Gamma$ is the end-sequent of $\pi$ and $P$ ranges over all formula
  occurrences in $\Gamma$.
\end{defi}

Besides to the Herbrand-set of a formula occurrence, we also need the
set of terms associated with an occurrence of an $\exists$-formula.

\begin{defi}
  Let $Q$ be an occurrence of a formula $\exists x\, A$ in a proof. We
  define the set $\terms{Q}$ of \emph{terms associated with $Q$} as
  follows: if $Q$ is introduced as the main formula of a weakening,
  then $\terms{Q}=\emptyset$. If $Q$ is introduced by an $\exists$-rule 
  \begin{equation*}
    \vlinf{}{\exists}{\Gamma, \exists x\, A}{\Gamma,A\unsubst{x}{t}}
  \end{equation*}
  then $\terms{Q}=\set{t}$. If $Q$ is the main formula in the
  conclusion of a contraction, and $Q_1$ and $Q_2$ are the two auxiliary
  occurrences of the same formula in the premise,
  then $\terms{Q}=\terms{Q_1}\cup\terms{Q_2}$. In all other cases, an
  inference with the occurrence $Q$ in the conclusion has a
  corresponding occurrence $Q'$ of the same formula in one of its
  premises, and we let $\terms{Q}=\terms{Q'}$.
\end{defi}

In the following, we will restrict our attention to a certain class of proofs,
that we call \emph{simple proofs} below.

\begin{defi}
A proof $\pi$ is called \emph{simple} if
\begin{itemize}
\item it is regular (i.e., different $\forall$-inferences
  have different eigenvariables),
\item every cut in $\pi$ is of
  one of the following forms:
  \begin{equation}
    \label{eq:simple}
    \vcenter{\lkcut{\Gamma,\Delta}{
        \Gamma, B
        &&
        \dual{B},\Delta
    }}
    \qqquor
    \vcenter{\lkcut{\Gamma,\Delta}{
        \Gamma,\exists x\, B
        &&
        \lku{\forall x\, \dual{B},\Delta}{
          \dual{B}\unsubst{x}{\alpha},\Delta
    }}}
  \end{equation}
  where $B$ is quantifier-free.
\end{itemize}
\end{defi}

Let us make some remarks on this definition. First, we require
regularity which is a necessary assumption in the context of
cut-elimination. But since every proof can be trivially transformed
into a regular one, this is no real restriction.  Second, the
requirement of the $\forall$-rule being applied directly above the cut
is natural as the rule is invertible. Moreover, any proof which does
not fulfill this requirement can be pruned to obtain one that does, by
simply permuting $\forall$-inferences down and identifying their
eigenvariables when needed. Thus, the only significant restriction is
that of disallowing quantifier alternations in the cut formulas. This
corresponds to allowing only $\Sigma_1$ (or $\Pi_1$) formulas in cuts.

We conjecture that our central result can be extended to
$\Sigma_n$-cuts. However, this will require the development of an
adequate class of grammars first (see also Section~\ref{sec.conclusion}).

\begin{obs}\label{obs:1-1}
  Simple proofs have the technically convenient property of exhibiting
  a 1-1 relationship between cut-eigenvariables and cuts.  For an
  eigenvariable $\alpha\in\EVc(\pi)$ we will therefore write
  $\cutrr{\alpha}$ for the corresponding cut and $\forallrr\alpha$ for
  the inference introducing $\alpha$ (when read from bottom to top).
\end{obs}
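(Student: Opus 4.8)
The plan is to exploit the strong syntactic restriction that simple proofs place on their cut formulas: by the two permitted shapes in~\eqref{eq:simple}, every cut formula is either quantifier-free ($B$ or $\dual{B}$) or is of the form $\exists x\, B$ or $\forall x\,\dual{B}$ with $B$ quantifier-free. First I would recall how formula occurrences behave under the descendant relation used to define $\EVc(\pi)$: reading downwards, an occurrence either stays literally the same formula (when it is a side formula, or the merged result of a contraction) or it is absorbed as an auxiliary occurrence into the strictly larger main formula of a logical rule. In particular, the formula attached to an occurrence can only grow along a descendant chain, and the chain terminates either at a cut formula or at the end-sequent. This monotonicity, together with the fact that a quantifier-free formula can never be a descendant of a genuine $\forall$-formula, is what drives everything.

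Next I would analyse what membership $\alpha\in\EVc(\pi)$ means. By definition the $\forall$-inference introducing $\alpha$ has a main occurrence of some formula $\forall y\, C$ whose descendant chain ends at a cut formula $F$. By the monotonicity just noted, $F$ contains $\forall y\, C$ as a subformula, so $F$ contains a $\forall$. Among the four shapes of cut formula available in a simple proof only $\forall x\,\dual{B}$ contains the quantifier $\forall$, and since $\dual{B}$ is quantifier-free its only $\forall$-subformula is the whole formula. Hence no growth can have occurred: $F=\forall y\, C=\forall x\,\dual{B}$, and $F$ is the $\forall$-cut-formula of a cut of the second shape in~\eqref{eq:simple}.

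Then I would pin down which inference introduces $\alpha$. In the second shape of~\eqref{eq:simple}, the cut formula occurrence $\forall x\,\dual{B}$ is precisely the main occurrence of the $\forall$-inference drawn directly above the cut, and its unique ancestor one step up is the quantifier-free auxiliary occurrence $\dual{B}\unsubst{x}{\alpha}$. Thus $F$ has no $\forall$-formula as a proper ancestor; combined with the no-growth property, this forces the descendant chain from the main occurrence of $\alpha$'s $\forall$-inference down to $F$ to have length zero, i.e.\ that main occurrence \emph{is} $F$ itself. Since distinct $\forall$-inferences carry distinct eigenvariables by regularity, the inference introducing $\alpha$ is exactly the one directly above that cut. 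This shows that $\alpha\mapsto\cutrr{\alpha}$ is a well-defined injection sending each cut-eigenvariable to a cut of the second shape, with $\forallrr{\alpha}$ the inference above it.

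Finally I would verify the converse direction and the emptiness of the first shape, completing the bijection. For any cut of the second shape, the eigenvariable of the $\forall$-inference immediately above it has main occurrence $\forall x\,\dual{B}$ equal to the cut formula, which therefore traces (in zero steps) to a cut formula; hence this eigenvariable lies in $\EVc(\pi)$ and maps back to the cut. For a cut of the first shape both cut formulas $B$ and $\dual{B}$ are quantifier-free, so by the monotonicity argument no $\forall$-inference main occurrence can trace to either, and the cut carries no cut-eigenvariable. Together these establish the claimed $1$-$1$ correspondence between $\EVc(\pi)$ and the cuts of the second shape, which justifies the notations $\cutrr{\alpha}$ and $\forallrr{\alpha}$. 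The step needing the most care — and the one I would state most explicitly — is the no-shrinking property of descendants coupled with the precise bookkeeping of the ancestor occurrences at the $\forall$-inference, since the entire argument rests on a quantifier-free formula never being a descendant of a genuine $\forall$-formula.
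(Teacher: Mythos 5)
Your proof is correct, and it matches the paper's treatment: the paper offers no argument at all for this Observation, treating it as immediate from the definition of simple proofs (the $\forall$-inference of a quantified cut sits directly above that cut, and regularity keeps eigenvariables distinct), which is precisely the content your descendant-chain bookkeeping makes explicit. Your careful monotonicity argument (a quantifier-free occurrence is never a descendant of a $\forall$-formula, so the main occurrence of $\forallrr{\alpha}$ must literally be the cut-formula occurrence of a cut of the second shape in~\eqref{eq:simple}) is the formal justification the paper leaves implicit, not a different route.
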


\begin{defi}
Let $\pi$ be a simple proof, let
$\alpha\in\EVc(\pi)$, and let $Q$ be the occurrence of the
existentially quantified cut-formula in the premise of
$\cutrr{\alpha}$. Then we write $\Bsub(\alpha)$ for the set of
substitutions $\set{\;\unsubst{\alpha}{t} \mid t\in\terms{Q}\;}$ and
we define $$\Bsub(\pi)=\Union_{\alpha\in\EVc(\pi)} \Bsub(\alpha)\quadfs$$
\end{defi}

Structures similar to the above $\Bsub(\pi)$ have been investigated
also in~\cite{Heijltjes10Classical} and~\cite{McKinley13Proof}
where they form the basis of proof net like formalisms using local
reductions for quantifiers in classical first-order logic.  Our aim in
this work is however quite different: we use these structures for a
global analysis of the sequent calculus.

\begin{defi}
Let $\pi$ be a simple proof. Then the
\emph{grammar of $\pi$} is the totally rigid
grammar $\Gram{\pi}=\tuple{N_R,\Sigma,\theta,P}$ with
\begin{align*}
  N_R&=\EVc(\pi)\cup\set{\theta}\\
  \Sigma&=\Sigma(\pi) \union \set{\land,\lor,\top,\bot}\\
  P&=\set{\theta\to A \mid A\in \Hseq(\pi)}\union 
\set{\alpha\to t \mid \unsubst{\alpha}{t}\in\Bsub(\pi)}
\end{align*}
where $\Sigma(\pi)$ is the signature of $\pi$,\footnote{We consider the
  eigenvariables in $\EV(\pi)\setminus\EVc(\pi)$ to be part of
  $\Sigma(\pi)$.} the rank of $\land$ and~$\lor$ is~2, the rank of
$\top$ and $\bot$ is~0, and $\theta$ does not occur in~$\pi$.
\end{defi}

\begin{lem}\label{lem.simple_acyclic}
  If $\pi$ is a simple proof, then $\Gram{\pi}$ is acyclic.
\end{lem}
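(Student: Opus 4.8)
The plan is to prove that the reachability relation $\reach{}$ of $\Gram\pi$ admits no cycle, which is exactly acyclicity.

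First I would dispose of the start symbol. Since $\theta$ does not occur in $\pi$, it never appears on the right-hand side of a production: the right-hand sides are either elements of $\Hseq(\pi)$, built from $\Sigma(\pi)$ and the connectives, or witness terms from $\Bsub(\pi)$, built from $\Sigma(\pi)$. Hence $\theta$ is never reachable and cannot lie on a cycle, so it suffices to rule out cycles among the cut-eigenvariables $\EVc(\pi)$. For these the relation is easy to read off: $\alpha\reach{}\beta$ holds precisely when $\beta$ occurs in some witness term $t\in\terms{Q}$, where $Q$ is the existential cut-formula occurrence in the left premise of $\cutrr\alpha$. In particular, $\alpha\reach{}\beta$ entails that $\beta$ occurs in the sub-derivation $L_\alpha$ sitting above the \emph{left} (existential) premise of $\cutrr\alpha$.

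The key structural input is a locality statement for eigenvariables: by the eigenvariable condition together with regularity, every (witness-term) occurrence of a cut-eigenvariable $\beta$ lies in the sub-derivation $R_\beta$ above $\forallrr\beta$, equivalently above the \emph{right} (universal) premise of $\cutrr\beta$ (using Observation~\ref{obs:1-1} and the shape~\eqref{eq:simple} of simple cuts). Thus, whenever $\alpha\reach{}\beta$, the two subtrees $L_\alpha$ and $R_\beta$ share the node carrying that occurrence of $\beta$; since both are upward-closed subtrees of the proof, their roots are comparable and one is contained in the other. This yields a clean dichotomy: either (a)~$R_\beta\subseteq L_\alpha$, so that $\cutrr\beta$ lies strictly inside the existential subtree of $\cutrr\alpha$, or (b)~$L_\alpha\subseteq R_\beta$, so that $\cutrr\alpha$ lies strictly inside the universal subtree of $\cutrr\beta$; strictness holds because the two premises involved are distinct premises of distinct cuts.

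Now suppose toward a contradiction that there is a cycle $\gamma=\alpha_0\reach{}\alpha_1\reach{}\cdots\reach{}\alpha_{n-1}\reach{}\alpha_0$, and choose the labelling so that $\cutrr\gamma$ is closest to the root (of minimal depth) among the cuts on the cycle. The step $\gamma\reach{}\alpha_1$ cannot be of type (b), as that would place $\cutrr{\alpha_1}$ strictly below $\cutrr\gamma$, contradicting minimality; hence it is of type (a) and $\cutrr{\alpha_1}$ lies in the existential subtree $L_\gamma$. Symmetrically, the step $\alpha_{n-1}\reach{}\gamma$ must be of type (b), so $\cutrr{\alpha_{n-1}}$ lies in the universal subtree $R_\gamma$. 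A short induction along the cycle then shows that every $\cutrr{\alpha_i}$ with $1\le i\le n-1$ lies in $L_\gamma$: a type-(a) step descends into a smaller existential subtree still contained in $L_\gamma$, while a type-(b) step leaving $L_\gamma$ would move to a cut below $\cutrr\gamma$, which by minimality would have to be $\cutrr\gamma$ itself and is thus excluded for $i<n$. But then $\cutrr{\alpha_{n-1}}$ lies in both $L_\gamma$ and $R_\gamma$, which are disjoint since they sit above the two different premises of $\cutrr\gamma$ — a contradiction. Hence no cycle exists and $\Gram\pi$ is acyclic. I expect the main obstacle to be the locality statement (the confinement of a cut-eigenvariable's witness-term occurrences to the sub-derivation above its introducing $\forall$-inference), since this is where the eigenvariable condition and regularity must be used carefully; once it is in place, the remainder is bookkeeping with the tree order, the decisive geometric fact being the disjointness of the existential and universal subtrees of a single cut.
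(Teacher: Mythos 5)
Your proof is correct and takes a genuinely different route from the paper's. The paper argues by induction on the number of cuts: it locates the lowest binary inference $\rr$ with subproofs $\pi_1$ and $\pi_2$ (either $\rr$ is a cut, or both $\pi_i$ contain cuts), decomposes the reachability relation as $\reach{P}=\reach{P_1}\cup\reach{P_2}\cup\reach{P_\rr}$, applies the induction hypothesis to $\reach{P_1}$ and $\reach{P_2}$, and then observes that, since the cut formula of $\rr$ contains at most one quantifier, all productions in $P_\rr$ cross from one subproof to the other in a single direction, so no cycle can close. Your argument is global rather than inductive: you classify each edge $\alpha\reach{}\beta$ by whether $\cutrr{\beta}$ lies inside the existential subtree of $\cutrr{\alpha}$ or $\cutrr{\alpha}$ lies inside the universal subtree of $\cutrr{\beta}$, then trap a hypothetical cycle at its cut of minimal depth, exploiting the disjointness of the two subtrees above that cut. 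What the paper's induction buys is brevity and modularity (the one-quantifier restriction on cut formulas does all the work in one local step); what your argument buys is that it isolates, as an explicit lemma, the eigenvariable-locality fact that the paper uses silently --- the paper's assertions that a cycle must alternate between $\EVc(\pi_1)$ and $\EVc(\pi_2)$ and that the $P_\rr$-productions point only one way are exactly your locality statement in disguise.

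One caution on that locality statement. You attribute it to ``the eigenvariable condition together with regularity,'' but the paper's definition of regular requires only that distinct $\forall$-inferences have distinct eigenvariables, and the eigenvariable condition is local to the conclusion sequent of each $\forall$-inference. Neither forbids an eigenvariable $\beta$ from occurring in a witness term of an $\exists$-inference in a branch parallel to $\forallrr{\beta}$: such an occurrence is abstracted away by the $\exists$-rule and never violates any stated side condition, yet it would put $\beta$ outside the subtree above $\forallrr{\beta}$ (and one can in fact build two parallel simple cuts whose witness terms mention each other's eigenvariables, producing a cyclic grammar). So locality is really the standard pure-variable convention, tacitly strengthening regularity. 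Since the paper's own proof needs exactly the same convention, this is a shared presupposition of both arguments rather than a defect of yours; just be aware that it does not follow formally from the two conditions you cite.
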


\begin{proof}
  By induction on the number of cuts in $\pi$. The grammar of a
  cut-free proof is trivially acyclic. For the induction step, let
  $\rr$ be the lowest binary inference with subproofs $\pi_1$ and
  $\pi_2$ such that either (i) $\rr$ is a cut or (ii) $\rr$ is not a cut
  but both $\pi_1$ and $\pi_2$ contain at least one cut. Let $P$,
  $P_1$, and $P_2$ be the set of productions induced by the cuts in
  $\pi$, $\pi_1$, $\pi_2$, respectively. In case (ii),
  $\reach{P}=\reach{P_1}\cup\reach{P_2}$, which is acyclic by
  induction hypothesis (since
  $\EVc(\pi_1)\cap\EVc(\pi_2)=\emptyset$). In case (i), let $P_\rr$ be
  the productions induced by the cut $\rr$, then
  $\reach{P}=\reach{P_1}\cup\reach{P_2}\cup\reach{P_\rr}$. By
  induction hypothesis, $\reach{P_1}$ and $\reach{P_2}$ are acyclic
  and as the cut-formula in $\rr$ contains at most one quantifier,
  also $\reach{P_\rr}$ is acyclic. Therefore, a cycle in $\preach{P}$
  must be of the form
  $\alpha_1\sreach{P_1}\beta_1\nreach{P_\rr}\alpha_2\preach{P_2}\beta_2
  \nreach{P_\rr}\alpha_1$ where $\alpha_1,\beta_1\in\EVc(\pi_1)$ and
  $\alpha_2,\beta_2\in\EVc(\pi_2)$. However, $\rr$ contains only one
  quantifier and depending on its polarity all productions in
  $P_\rr$ lead from $\pi_1$ to $\pi_2$ or from $\pi_2$ to $\pi_1$
  but not both, so $\reach{P}$ is acyclic.
\end{proof}

\section{Grammars and Cut-Elimination}\label{sec.grammars_cutel}

In this section we will show that the language of the grammar of a
proof defined in the previous section is an invariant under cut
elimination. Before formulating this invariance result precisely we have to consider the
following three aspects of the situation:

First, note that all the reductions shown in Figure~\ref{fig:cut-red} preserve
simplicity, except the following:
\begin{equation*}
  \vlderivation{ 
    \vliiin{}{\cutrr{\beta}}{\qcdots}{
      \vliin{}{\cutrr{\alpha}}{\qcdots}{ 
        \vlhy{\qcdots}}{
        \vlin{}{\forallrr{\alpha}}{\qcdots}{
          \vlhy{\qcdots}}}}{
      \vlhy{\;\;}}{
      \vlin{}{\forallrr{\beta}}{\qcdots}{
        \vlhy{\qcdots}}}}
  \qqualto
  \vlderivation{
    \vliin{}{\cutrr{\alpha}}{\qcdots}{
      \vlhy{\qcdots}}{
      \vliiin{}{\cutrr{\beta}}{\qcdots}{
        \vlin{}{\forallrr{\alpha}}{\qcdots}{
          \vlhy{\qcdots}}}{
        \vlhy{\;\;}}{
        \vlin{}{\forallrr{\beta}}{\qcdots}{
          \vlhy{\qcdots}}}}}
\end{equation*}
where $\cutrr{\alpha}$ is permuted down under $\cutrr{\beta}$
(using the bottommost reduction in Fig.~\ref{fig:cut-red}) and the
cut formula of $\cutrr{\beta}$ has its ancestor on the right side of
$\cutrr{\alpha}$.
So in the following, when we speak about a
{\em reduction sequence of simple proofs} we require that the above
reduction is immediately followed by permuting $\forallrr{\alpha}$
down as well, in order to arrive at
\begin{equation*}
  \vlderivation{
    \vliin{}{\cutrr{\alpha}}{\qcdots}{
      \vlhy{\qcdots}}{
      \vlin{}{\forallrr{\alpha}}{\qcdots}{
        \vliiin{}{\cutrr{\beta}}{\qcdots}{
          \vlhy{\qcdots}}{
          \vlhy{\;\;}}{
          \vlin{}{\forallrr{\beta}}{\qcdots}{
            \vlhy{\qcdots}}}}}}
\end{equation*}
which is again simple.

Secondly, observe that there is no mechanism for deletion in the grammar, but
there is one in cut-elimination: the reduction of weakening which erases a
sub-proof (see Fig.~\ref{fig:cut-red}). It
is hence natural and will turn out to be useful to also consider the reduction relation without this step.

\begin{defi}
We define the \emph{non-erasing cut-reduction} $\credm$ as $\cred$ without
the reduction rule for weakening.
\end{defi}

Note that a $\credm$-normal form $\pi$ is an analytic proof
too as $\Hseq(\pi)$ is also a Herbrand-disjunction, i.e.\ a tautological
collection of instances. In contrast to a $\cred$-normal form
(which might contain implicit redundancy) a $\credm$-normal form might
also contain explicit redundancy in the form of cuts whose cut-formulas
are introduced by weakening on one or on both sides.  Non-erasing
reduction is also of interest in the context of the $\lambda$-calculus
where it is often considered in the form of the
$\lambda\mathrm{I}$-calculus and gives rise to the conservation
theorem (see Theorem 13.4.12 in~\cite{Barendregt84Lambda}).  Our
situation here is however quite different: neither $\cred$ nor
$\credm$ is confluent and neither of them is strongly normalizing.

Thirdly, in contrast to the case treated in~\cite{Hetzl12Herbrand} in
our more general setting it may happen that the reduction of a
weakening deletes sub-formulas of formula instances from the proof. In
order to treat this situation adequately, we need to define a
generalization of the $\subseteq$-relation between sets of
formulas. For this reason, we use the symbol~$\bot$ for representing
subformulas introduced by weakening, a technique also employed
in~\cite{Baaz12Complexity,Weller11Elimination} for the purpose of a
tighter complexity-analysis.

\begin{defi}
  The relation $\Hsub$ is defined inductively on quantifier-free
  formulas as follows:
  \begin{itemize}
  \item for all formulas $A$ we have $\bot\Hsub A$ and $A\Hsub A$, and
  \item whenever $A'\Hsub A$ and $B'\Hsub B$ then also 
    $A'\land B'\Hsub A\land B$ and $A'\lor B'\Hsub A\lor B$
  \end{itemize}
  Let $\cA$ and $\cB$ be sets of quantifier-free
  formulas. Then we define 
  \begin{equation*}
    \cA\Hsub\cB \qqquiff \text{for all $A\in\cA$ there
      is a $B\in\cB$ with $A\Hsub B$}\quad.    
  \end{equation*}
\end{defi}

\begin{fact}
  The relation $\Hsub$ is transitive on formula sets.
\end{fact}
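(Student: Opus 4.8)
The plan is to reduce the claimed transitivity on formula sets to transitivity of $\Hsub$ on single quantifier-free formulas, and to prove the latter by structural induction. Concretely, I would first prove the auxiliary statement: for all quantifier-free $A$, $B$, $C$, if $A \Hsub B$ and $B \Hsub C$ then $A \Hsub C$; and then read off the set version directly from the definition.

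For the auxiliary statement I would induct on the structure of $A$. If $A = \bot$, then $A \Hsub C$ holds at once by the clause $\bot \Hsub C$. If $A$ is atomic but different from $\bot$ (a literal or $\top$), then the only rule that can derive $A \Hsub B$ is reflexivity, forcing $B = A$; hence the goal $A \Hsub C$ coincides with the hypothesis $B \Hsub C$. The main case is $A = A_1 \circ A_2$ with $\circ \in \set{\land,\lor}$. Since $A \neq \bot$, inverting the definition on $A \Hsub B$ shows that $B$ is again a $\circ$-formula, say $B = B_1 \circ B_2$, with $A_i \Hsub B_i$ for $i \in \set{1,2}$ (the reflexivity case being absorbed by taking $B_i = A_i$). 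As $B$ is a $\circ$-formula it is not $\bot$, so the same inversion applied to $B \Hsub C$ gives $C = C_1 \circ C_2$ with $B_i \Hsub C_i$. The induction hypothesis, applied to the strictly smaller subformulas $A_i$, yields $A_i \Hsub C_i$, and a single use of the connective clause gives $A_1 \circ A_2 \Hsub C_1 \circ C_2$, i.e.\ $A \Hsub C$.

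The passage to sets is then immediate: assuming $\cA \Hsub \cB$ and $\cB \Hsub \cC$, fix an arbitrary $A \in \cA$, choose $B \in \cB$ with $A \Hsub B$ and then $C \in \cC$ with $B \Hsub C$, and conclude $A \Hsub C$ by the formula-level transitivity. As $A$ was arbitrary, $\cA \Hsub \cC$. I expect essentially all the work to sit in the inductive inversion step of the formula case --- verifying that a non-$\bot$ $\circ$-formula can only be $\Hsub$-related to formulas with the same top connective, so that the decomposition into $A_i, B_i, C_i$ is always available --- whereas the reduction from sets to formulas is routine.
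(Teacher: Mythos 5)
Your proof is correct: the structural induction on $A$ with inversion of the defining clauses of $\Hsub$, followed by the routine lifting to sets, is exactly the verification needed. The paper states this as a \emph{Fact} with no proof at all, treating it as immediate, so your argument simply supplies the standard reasoning the paper leaves implicit.
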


We are now in a position to precisely state our main invariance lemma which
connects grammars with cut-elimination for weak sequents.

\begin{lem}\label{lem.inv_weak}
If $\pi\cred \pi'$ is a reduction sequence of simple proofs of a weak sequent, then
$\Lang{\Gram{\pi}}\Hsup \Lang{\Gram{\pi'}}$.
If $\pi\credm \pi'$ is a reduction sequence of simple proofs of a weak sequent,
then $\Lang{\Gram{\pi}}=\Lang{\Gram{\pi'}}$.
\end{lem}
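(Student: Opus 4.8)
The plan is to prove this by induction on the length of the reduction sequence, where it suffices to handle a single reduction step $\pi\scutred\pi'$ (and its symmetric variants) from Figure~\ref{fig:cut-red}, since both $\cred$ and $\credm$ are the transitive closures and the relations $\Hsup$ and $=$ are transitive. For each cut-reduction step I would compare $\Gram{\pi}$ and $\Gram{\pi'}$ by examining how the step changes the three ingredients of the grammar: the set $\Hseq(\pi)$ of productions for $\theta$, the set $\EVc(\pi)$ of rigid non-terminals, and the substitution set $\Bsub(\pi)$ giving the remaining productions. Crucially, by Lemma~\ref{lem.simple_acyclic} and Lemma~\ref{lem.totrig_acyclic_language}, the language is computed by the explicit formula $\Lang{\Gram{\pi}}=\set{\theta\unsubst{\alpha_1}{t_1}\cdots\unsubst{\alpha_n}{t_n}\mid\alpha_i\to t_i\in P}$, so I can argue about languages by reasoning about these iterated substitutions rather than about derivations directly.

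First I would dispose of the permutation steps (unary and binary inference permutation, the axiom step, and the propositional step), which only rearrange inferences without changing which existential witnesses are chosen or which eigenvariables trace to cuts; here I expect $\Hseq$, $\EVc$, and $\Bsub$ to be literally unchanged, giving $\Gram{\pi}=\Gram{\pi'}$ up to renaming and hence equal languages. The genuinely interesting cases are the quantifier reduction, the contraction reduction, and the weakening reduction. In the quantifier step, the cut $\cutrr{\alpha}$ is eliminated and the substitution $\unsubst{\alpha}{t}$ is applied throughout $\psi_2$: in the grammar this corresponds exactly to eliminating the rigid non-terminal $\alpha$ whose sole production was $\alpha\to t$, so I would invoke Lemma~\ref{lem.elim_rigid_NT} to conclude $\Lang{\Gram{\pi}}=\Lang{\Gram{\pi'}}$. (One must check that in a simple proof of a weak sequent each cut-eigenvariable indeed has a single production at the moment its cut is reduced, which follows from Observation~\ref{obs:1-1} together with the structure forced by the reduction sequence convention on simple proofs.) The contraction step duplicates $\psi_2$ under fresh renamings $\rho',\rho''$; on the grammar side this splits each non-terminal $\alpha\in\EVc(\psi_2)$ into two copies $\alpha',\alpha''$ with the corresponding productions duplicated, and I would verify that the resulting iterated-substitution language is unchanged because the two fresh copies generate exactly the same terms as the original rigid non-terminal.

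The hardest step, and the source of the asymmetry between the two halves of the statement, is the weakening reduction, which is the only step erasing a subproof. For $\credm$ this step is excluded by definition, so the $=$ half of the lemma needs only the non-erasing steps and goes through cleanly. For the $\cred$ half, erasing $\psi_2$ deletes the associated productions and can remove witness terms from the Herbrand-set, or replace subformulas by $\bot$; this is precisely why only the inequality $\Hsup$ (via the relation $\Hsub$ on formula sets, using $\bot\Hsub A$) can be claimed. The main obstacle I anticipate is bookkeeping: I must show that deleting all productions for non-terminals in $\EVc(\psi_1)$ (the erased side) together with replacing the cut-formula instances by $\bot$ yields a grammar whose language is $\Hsub$-below the original. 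Here acyclicity is essential, because it guarantees that the erased non-terminals do not feed back into the surviving part of the derivation, so the iterated-substitution formula of Lemma~\ref{lem.totrig_acyclic_language} factors cleanly into a surviving part and an erased part, and the erased positions are exactly those that become $\bot$. I would make this precise by comparing the two explicit language descriptions term-by-term and checking the defining clauses of $\Hsub$.
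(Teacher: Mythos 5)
Your overall scaffolding---induction on the length of the reduction sequence with a case split on the reduction step, the permutation/axiom/propositional cases leaving the grammar literally unchanged, the quantifier case via Lemma~\ref{lem.elim_rigid_NT}, and the weakening case giving only the inequality---coincides with the paper's proof (its Lemmas~\ref{lem:H-invar}, \ref{lem:H-invar-quant} and \ref{lem:H-invar-weak}). But there is a genuine gap in the contraction case, which is precisely where the paper invests essentially all of its technical machinery. You assert that after duplicating $\psi_2$ ``the two fresh copies generate exactly the same terms as the original rigid non-terminal'', so the language is unchanged. The inclusion $\Lang{\Gram{\pi}}\subseteq\Lang{\Gram{\pi'}}$ is indeed easy (project to one $\alpha$-production by Lemma~\ref{lem.rigid_projection} and apply $\rho'$ or $\rho''$ to the derivation). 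The converse inclusion, however, is not a grammatical triviality, and it is in fact \emph{false} for totally rigid acyclic grammars in general: splitting a rigid non-terminal $\alpha$ into two independent rigid copies $\alpha',\alpha''$ with the productions distributed between them \emph{relaxes} the rigidity constraint, because a single derivation may now instantiate $\alpha'$ and $\alpha''$ by different terms. Concretely, take productions $\theta\to f(\beta,\gamma)$, $\beta\to\alpha$, $\gamma\to\alpha$, $\alpha\to c\mid d$, where $\beta,\gamma$ are not among the duplicated non-terminals but $\alpha$ is: the language is $\set{f(c,c),f(d,d)}$, while after the duplication prescribed by the reduction step ($\beta\to\alpha'\mid\alpha''$, $\gamma\to\alpha'\mid\alpha''$, $\alpha'\to c$, $\alpha''\to d$) the term $f(c,d)$ becomes derivable. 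So no argument that, like yours, manipulates only the iterated-substitution formula of Lemma~\ref{lem.totrig_acyclic_language} can close this case; one must use the fact that the grammar comes from a proof.

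That is exactly what the paper does: Lemmas~\ref{lem.path_cut}, \ref{lem.path_cutpos} and \ref{lem.alphapos} relate paths in $\Gram{\pi}$ to the tree structure of $\pi$, and the contraction lemma (Lemma~\ref{lem.subset_contraction_reduction}) takes a $\Gram{\pi'}$-derivation, un-renames it into a pseudo-derivation over the productions of $\Gram{\pi}$, and then repairs the resulting violations of rigidity by induction on a non-rigidity measure, using Lemma~\ref{lem.alphapos} to show that each violation is either repairable (the two offending paths share a non-terminal whose rigidity is already secured by minimality) or impossible (both paths lie inside $\psi_2$, forcing a term to contain variables from both $\EV(\psi_2)\rho'$ and $\EV(\psi_2)\rho''$). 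Note also that this is the one case that needs the weak-sequent hypothesis, via $\EVc(\pi)=\EV(\pi)$, which your proposal never invokes anywhere; relatedly, you locate the main difficulty in the weakening step, which in the paper is a three-line argument. As it stands, your contraction case is an assertion of the lemma's hardest sub-statement rather than a proof of it.
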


The rest of this section is devoted to proving this result. The proof strategy
is to carry out an induction on the length of the reduction sequence
$\pi\cred \pi'$ (or $\pi\credm \pi'$ respectively) and to make a case distinction
on the type of reduction step. The most difficult step will turn out to be
the reduction of contraction which duplicates a sub-proof.

\begin{lem}\label{lem:H-invar}
  Let $\pi$ be a simple proof, and let $\pi'$ be
  obtained from $\pi$ by the single application of an axiom reduction,
  or a propositional reduction, or a unary or binary inference
  permutation (see Figure~\ref{fig:cut-red}). Then
  $\Lang{\Gram{\pi'}}=\Lang{\Gram{\pi}}$.
\end{lem}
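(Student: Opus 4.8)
The plan is to prove something slightly stronger than the stated language equality, namely that the \emph{grammar itself} is left unchanged, $\Gram{\pi'}=\Gram{\pi}$ as tuples $\tuple{N_R,\Sigma,\theta,P}$. Since a grammar is completely determined by the three data $\EVc(\pi)$ (which fixes $N_R=\EVc(\pi)\cup\set{\theta}$), the Herbrand-set $\Hseq(\pi)$ (which fixes the start productions $\theta\to A$), and the substitution set $\Bsub(\pi)$ (which fixes the productions $\alpha\to t$), it suffices to check that each of these three is preserved by a single axiom, propositional, unary-permutation, or binary-permutation step. These are exactly the reductions of Figure~\ref{fig:cut-red} that neither remove a quantified cut (as the quantifier step does, which is instead handled by Lemma~\ref{lem.elim_rigid_NT}), nor duplicate a subproof (contraction), nor erase one (weakening), so one expects the grammar to survive intact.

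The data $\EVc$ and $\Bsub$ are the easy part. An axiom step removes a cut on a literal, and a propositional step replaces a cut on a quantifier-free formula $A\land B$ by two cuts on the quantifier-free formulas $A$ and $B$; in a simple proof neither kind of cut carries a cut-eigenvariable, and the two permutation steps merely relocate an existing cut. Hence in all four cases the multiset of quantified cuts, together with their associated $\forallrr{\alpha}$-inferences, is untouched, so $\EVc(\pi')=\EVc(\pi)$ and $N_R$ is unchanged. For $\Bsub$, recall that the productions $\alpha\to t$ are read off from $\terms{Q}$, where $Q$ is the existential cut-formula occurrence of $\cutrr{\alpha}$; since $\terms{Q}$ depends only on the $\exists$-inferences feeding $Q$, and these are merely moved (never altered) by the four steps, we obtain $\Bsub(\pi')=\Bsub(\pi)$.

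The real work is the invariance of $\Hseq$, which I would carry out by tracking the formula occurrences of the end-sequent. For the propositional step and the two permutation steps the end-sequent is literally unchanged and, crucially, every end-sequent occurrence retains the same sub-derivation above it in the sense relevant to the inductive clauses defining $\Hseq$ (Definition~\ref{def.Hseq}): only cuts are rearranged, and a cut-formula occurrence always traces to a cut rather than to the end-sequent, so it never contributes to $\Hseq(\pi)=\bigcup_{P\in\Gamma}\Hseq(P)$. Recomputing each $\Hseq(P)$ therefore yields the same set, and one only has to observe that commuting an inference past a cut does not affect whether an occurrence is a main occurrence of that inference, so $\Hseq(\pi')=\Hseq(\pi)$.

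I expect the axiom step to be the main obstacle, because it is the one case in which the topology of the end-sequent occurrences genuinely changes: the cut-formula occurrence of the literal $A$ in $\psi$ is \emph{promoted} to an end-sequent occurrence, taking the place of the axiom's occurrence of $A$, which disappears. One must verify that this promoted occurrence has the same Herbrand-set as the axiom occurrence it replaces, i.e.\ that it contributes exactly $\set{A}$; here the hypothesis that $A$ is a literal is essential, since it bounds $\Hseq$ of that occurrence to a subset of $\set{A,\bot}$, and pinning it down to $\set{A}$ (so that the promoted occurrence is not one introduced purely by weakening, a situation that is instead absorbed by the weakening reduction and the $\Hsub$-clause of Lemma~\ref{lem.inv_weak}) is the delicate point. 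Once this is checked, $\Gram{\pi'}=\Gram{\pi}$ holds in all four cases, and the claimed equality $\Lang{\Gram{\pi'}}=\Lang{\Gram{\pi}}$ follows at once.
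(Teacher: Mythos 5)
Your strategy is the paper's own: its entire proof consists of the assertion that none of these four reductions changes the grammar, so that $\Gram{\pi'}=\Gram{\pi}$, and your verification of this for $\EVc$, for $\Bsub$, and for $\Hseq$ under the propositional and permutation steps correctly fills in that assertion. The problem is the axiom case, and it is exactly the point you flag as ``delicate'' and then leave unchecked: the promoted occurrence need \emph{not} have Herbrand-set $\set{A}$. Take $\psi$ to be a proof of $\Gamma=Q\lor\dual{Q}$ followed by a $\weakr$-inference introducing the literal $P$, and let $\pi$ cut this against the axiom $\dual{P},P$ as in the axiom reduction of Figure~\ref{fig:cut-red}; this $\pi$ is simple, since a cut on a literal is a quantifier-free cut. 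In $\pi$ the end-sequent occurrence of $P$ descends from the axiom, so $\Hseq(\pi)=\set{Q\lor\dual{Q},P}$ and $\Gram{\pi}$ has the production $\theta\to P$; after the axiom reduction the end-sequent occurrence of $P$ is the main occurrence of the weakening, so $\Hseq(\pi')=\set{Q\lor\dual{Q},\bot}$ and $\Gram{\pi'}$ has $\theta\to\bot$ instead. Hence $\Gram{\pi'}\neq\Gram{\pi}$ and indeed $\Lang{\Gram{\pi'}}=\set{Q\lor\dual{Q},\bot}\neq\set{Q\lor\dual{Q},P}=\Lang{\Gram{\pi}}$; the same failure occurs, via $\conr$, whenever the cut-formula occurrence in $\psi$ traces even partially to a weakening. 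Your parenthetical escape---that this situation is ``absorbed by the weakening reduction and the $\Hsub$-clause of Lemma~\ref{lem.inv_weak}''---is not available: the lemma asserts equality for an arbitrary single application of the axiom reduction, with no hypothesis that weakening redexes are reduced first, and the axiom reduction belongs to $\credm$, for which Lemma~\ref{lem.inv_weak} claims equality rather than $\Hsub$.

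In fairness, this gap is not yours alone: the paper's one-sentence proof makes precisely the assertion that fails on the example above, so your analysis has in fact located a genuine imprecision in Lemma~\ref{lem:H-invar} itself. What does survive in the axiom case, by the same reasoning as in Lemma~\ref{lem:H-invar-weak}, is the inequality $\Lang{\Gram{\pi'}}\Hsub\Lang{\Gram{\pi}}$; equality requires the side condition that the cut-formula occurrence of $A$ in $\psi$ does not trace, even through contractions, to a $\weakr$-inference. So your proof of the propositional and permutation cases stands, but the axiom case is a genuine gap: as written it cannot be closed, and repairing it means either adding such a side condition or weakening the conclusion of this reduction step to $\Hsub$ and propagating that change through Lemma~\ref{lem.inv_weak}.
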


\begin{proof}
  None of these reductions is changing the grammar of the proof, i.e.,
  $\Gram{\pi'}=\Gram{\pi}$ and therefore also
  $\Lang{\Gram{\pi'}}=\Lang{\Gram{\pi}}$.
\end{proof}

\begin{lem}\label{lem:H-invar-quant}
  Let $\pi$ be a simple proof, and let $\pi'$ be
  obtained from $\pi$ by the single application of a quantifier
  reduction (see Figure~\ref{fig:cut-red}). Then
  $\Lang{\Gram{\pi'}}=\Lang{\Gram{\pi}}$.
\end{lem}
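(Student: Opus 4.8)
The plan is to observe that the quantifier reduction eliminates the cut-eigenvariable of the reduced cut, and that the induced change on the grammar is exactly the elimination of a non-terminal governed by Lemma~\ref{lem.elim_rigid_NT}. Write the reduced cut as $\cutrr{\alpha}$, with existential cut-formula occurrence $Q$ of $\exists x\,\dual{A}$ on the left and eigenvariable $\alpha$ introduced by $\forallrr{\alpha}$ on the right. Since the quantifier reduction fires only when the $\exists$-inference sits directly above the cut, $Q$ is the main formula of a single $\exists$-inference, whence $\terms{Q}=\set{t}$ and $\Bsub(\alpha)=\set{\unsubst{\alpha}{t}}$; thus the totally rigid grammar $\Gram{\pi}$ contains exactly one production $\alpha\to t$. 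As moreover $\alpha\in\EVc(\pi)$ and $\alpha\neq\theta$, the hypotheses of Lemma~\ref{lem.elim_rigid_NT} are met with the eliminated non-terminal being $\alpha$.

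First I would localize $\alpha$: by the eigenvariable condition on $\forallrr{\alpha}$ and regularity of $\pi$, the variable $\alpha$ occurs only inside $\psi_2$, i.e.\ in the right subproof above the $\forall$-inference, and in particular neither in $\psi_1$ nor below the cut. Passing to $\pi'$ deletes the $\exists$- and $\forall$-inferences, turns the cut into a cut on the quantifier-free formula $\dual{A}\unsubst{x}{t}$, and replaces $\psi_2$ by $\psi_2\unsubst{\alpha}{t}$. Consequently $\EVc(\pi')=\EVc(\pi)\setminus\set{\alpha}$, so the non-terminals of $\Gram{\pi'}$ are precisely $N_R\setminus\set{\alpha}$, matching the grammar $G'$ of Lemma~\ref{lem.elim_rigid_NT}.

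The core step is to check that $\Gram{\pi'}$ coincides, as a grammar, with $G'$, i.e.\ that applying $\unsubst{\alpha}{t}$ to the right-hand sides of the remaining productions of $\Gram{\pi}$ yields exactly the productions of $\Gram{\pi'}$. This splits into two commutation claims, both by a routine induction on the inductive definitions of $\Hseq(\cdot)$ and $\terms{\cdot}$: (I) $\Hseq(\pi')=\set{A\unsubst{\alpha}{t}\mid A\in\Hseq(\pi)}$, which gives the $\theta$-productions; and (II) for every $\beta\in\EVc(\pi')$ the term set of the existential cut-formula of $\cutrr{\beta}$ in $\pi'$ is obtained from the one in $\pi$ by applying $\unsubst{\alpha}{t}$, which gives the $\beta$-productions. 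Both rest on the fact that a substitution on the proof induces the same substitution on the collected terms, combined with the localization above: the substitution is vacuous on productions arising from $\psi_1$ and from cuts outside $\psi_2$, exactly as $\pi'$ substitutes only inside $\psi_2$. One also notes that the cut-formula never contributes to $\Hseq$ of the end-sequent, so replacing $\exists x\,\dual{A}$ by $\dual{A}\unsubst{x}{t}$ is irrelevant to the $\theta$-productions.

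With $\Gram{\pi'}=G'$ established, Lemma~\ref{lem.elim_rigid_NT} yields $\Lang{\Gram{\pi}}=\Lang{G'}=\Lang{\Gram{\pi'}}$, as desired. I expect the main obstacle to be the bookkeeping in the core step: one must verify carefully that the single syntactic substitution $\psi_2\mapsto\psi_2\unsubst{\alpha}{t}$ propagates to both kinds of productions exactly as the grammar-level operation $(P\setminus\set{\alpha\to t})\unsubst{\alpha}{t}$ prescribes, which hinges on the confinement of $\alpha$ to $\psi_2$ guaranteed by the eigenvariable condition and on the fact that $t$ introduces no new cut-eigenvariable.
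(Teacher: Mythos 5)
Your proof is correct and takes essentially the same route as the paper's: both identify that $\Gram{\pi'}$ arises from $\Gram{\pi}$ by deleting the production $\alpha\to t$ and applying $\unsubst{\alpha}{t}$ to the right-hand sides of the remaining productions, and then conclude immediately by Lemma~\ref{lem.elim_rigid_NT}. The paper asserts this grammar identification in a single sentence; your additional checks (uniqueness of the $\alpha$-production because the $\exists$-inference is directly above the cut, $\alpha\neq\theta$, and confinement of $\alpha$ to $\psi_2$ via the eigenvariable condition) just make explicit the details the paper leaves implicit.
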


\begin{proof}
  Let $\alpha$ be the eigenvariable of the $\forall$-inference and $t$ be
  the term of the $\exists$-rule directly above the cut that is
  reduced. Then $\Gram{\pi'}$ can be obtained from $\Gram{\pi}$ by
  removing the production rule $\alpha\to t$ and by applying the
  substitution $\unsubst{\alpha}{t}$ to the right-hand side of all remaining
  production rules. Thus, $\Lang{\Gram{\pi'}}=\Lang{\Gram{\pi}}$
  follows immediately from Lemma~\ref{lem.elim_rigid_NT}.
\end{proof}

\begin{lem}\label{lem:H-invar-weak}
    Let $\pi$ be a simple proof, and let $\pi'$ be
    obtained from $\pi$ by the single application of a weakening
    reduction (see Figure~\ref{fig:cut-red}). Then
    $\Lang{\Gram{\pi'}}\Hsub\Lang{\Gram{\pi}}$.
\end{lem}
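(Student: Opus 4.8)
The plan is to read off how the weakening reduction changes the grammar, to compare the two Herbrand-sets, and then to transport the comparison to the languages. Write the reduced cut as the weakening step of Figure~\ref{fig:cut-red}: the cut-formula $A$ is introduced by weakening above $\psi_1$ (with conclusion $\Gamma$), while $\psi_2$ (with conclusion $\dual A,\Delta$) is the erased subproof, and in $\pi'$ the formulas of $\Delta$ are reintroduced by weakening. In the quantified case $A=\exists x\,B$ and $\psi_2$ starts with the inference $\forallrr\alpha$, so that its eigenvariable $\alpha$ is erased together with $\EVc(\psi_2)$; let $E$ be the set of cut-eigenvariables that disappear, so $\EVc(\pi')=\EVc(\pi)\setminus E$. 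Since the reduction only erases and never substitutes or duplicates, no production acquires a new right-hand side: the productions for the eigenvariables in $E$ are deleted outright, and for a surviving $\gamma\in\EVc(\pi')$ the only possible effect is that some $\exists$-instances feeding $\terms{\cdot}$ lived inside $\psi_2$, so that the set of $\gamma$-productions of $\Gram{\pi'}$ is a \emph{subset} of that of $\Gram{\pi}$. The remaining change is the passage from the $\theta$-productions $\Hseq(\pi)$ to $\Hseq(\pi')$.

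The heart of the argument is $\Hseq(\pi')\Hsub\Hseq(\pi)$, which I would obtain from the local statement $\Hseq_{\pi'}(O)\Hsub\Hseq_{\pi}(O)$ for every occurrence $O$ on or below the conclusion $\Gamma,\Delta$ of the reduced cut. The subproof below the cut is common to $\pi$ and $\pi'$, so one may proceed by induction along the inferences between this conclusion and the end-sequent, following the inductive definition of $\Hseq$. At the conclusion an occurrence in $\Gamma$ has the same ancestors through $\psi_1$ in both proofs, hence an unchanged Herbrand-set, whereas an occurrence in $\Delta$ is now introduced by weakening and so has Herbrand-set $\set{\bot}\Hsub\Hseq_{\pi}(O)$. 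For the step one uses that every inference below the cut preserves $\Hsub$: the $\land$/$\lor$-clauses of Definition~\ref{def.Hseq} are congruences for $\Hsub$, the contraction clause takes unions, and the $\exists$- and structural clauses pass the set through unchanged. This is exactly where the hypothesis of a \emph{weak} end-sequent enters: no $\forall$-inference, and hence no eigenvariable condition, interferes on the way down. Taking unions over the end-sequent occurrences gives $\Hseq(\pi')\Hsub\Hseq(\pi)$.

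To pass to languages I would invoke Lemma~\ref{lem.totrig_acyclic_language}, which applies because both grammars are acyclic by Lemma~\ref{lem.simple_acyclic}. Given $u'\in\Lang{\Gram{\pi'}}$, write $u'=A'\sigma'$ with $A'\in\Hseq(\pi')$ and $\sigma'$ a choice of one production per occurring non-terminal. Choose $A\in\Hseq(\pi)$ with $A'\Hsub A$; then the non-$\bot$ part of $A'$ coincides with the corresponding part of $A$, so every non-terminal of $A'$ occurs at the same position in $A$, and the production that $\sigma'$ used for it—being a $\Gram{\pi'}$-production—is still available in $\Gram{\pi}$. Extending $\sigma'$ by productions for the extra non-terminals of $A$, all of which sit in positions where $A'$ carries $\bot$, yields an admissible choice $\sigma$ for $\Gram{\pi}$. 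Since $\Hsub$ is preserved under substitution (an immediate induction) and $\bot\Hsub t$ for all $t$, we conclude $u'=A'\sigma'\Hsub A\sigma\in\Lang{\Gram{\pi}}$, which is the desired $\Lang{\Gram{\pi'}}\Hsub\Lang{\Gram{\pi}}$.

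I expect the main obstacle to be the bookkeeping around the erased eigenvariables. Erasing $\psi_2$ can remove eigenvariables of $E$ that occur only inside existential witnesses of the $\Delta$-material, and can strip $\exists$-instances from the cuts of surviving eigenvariables; the reason neither breaks the comparison is that every position touched by the erasure is a position where the reduct carries $\bot$, which $\Hsub$ absorbs. The delicate point is to verify that the lifted term $A\sigma$ really is an element of $\Lang{\Gram{\pi}}$—that filling the $\bot$-positions of $A'$ as dictated by $A$ respects rigidity (a surviving non-terminal occurring both in the shared part and in a $\bot$-position simply keeps the value already chosen by $\sigma'$) and acyclicity—and this holds because those positions are filled by a part of the derivation disjoint from the already-fixed, shared part coming from $\sigma'$.
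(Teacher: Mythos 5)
Your overall route is the same as the paper's---the paper's own proof is just a terser version of it: describe the two modifications the reduction makes to the grammar, then lift a $\Gram{\pi'}$-derivation to a $\Gram{\pi}$-derivation by filling in the $\bot$-positions---and much of your argument is sound, in places more careful than the paper's: the production sets of surviving cut-eigenvariables do shrink to subsets, and $\Hseq(\pi')\Hsub\Hseq(\pi)$ does hold position-faithfully as you use it. (Incidentally, the weak-end-sequent hypothesis you invoke for that induction is neither assumed in the statement nor needed: the $\Hseq$-clause for $\forall$-inferences passes sets through unchanged.) The genuine gap sits exactly at the step you yourself flagged as delicate: \emph{``Extending $\sigma'$ by productions for the extra non-terminals of $A$ \ldots yields an admissible choice $\sigma$ for $\Gram{\pi}$.''} This presupposes that every extra non-terminal occurring in a $\bot$-position of $A$ (and, recursively, every non-terminal in the right-hand sides one plugs in there) has at least one production in $\Gram{\pi}$. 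That can fail: a cut-eigenvariable $\beta$ has \emph{no} production at all precisely when the $\exists$-side cut-formula $Q$ of $\cutrr{\beta}$ is introduced by weakening, so that $\terms{Q}=\emptyset$. In that case $A$ can never be rewritten to a terminal term, no admissible $\sigma$ exists, and one cannot escape by picking a different dominating element either. Your closing rigidity/acyclicity argument addresses the \emph{admissibility} of the extension, not its \emph{existence}, which is the real problem.

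The gap is not presentational: the statement fails outright without an extra hypothesis. Take for $\pi$ the following simple proof of the weak sequent $\vdash (Q(c)\lor\dual{Q}(c))\lor\exists y\,P(y)$: on the left, derive $\vdash Q(c)\lor\dual{Q}(c)$ from an axiom by $\lor$ and weaken in $\exists x\,P(x)$; on the right, derive $\vdash \exists y\,P(y),\,\dual{P}(\beta)$ from the axiom $\vdash P(\beta),\dual{P}(\beta)$ by an $\exists$-inference with witness $\beta$, then apply $\forallrr{\beta}$; cut on $\exists x\,P(x)$ against $\forall x\,\dual{P}(x)$, and finish with one $\lor$-inference. Here $\terms{Q}=\emptyset$ for the weakened $\exists$-occurrence, so $\beta$ has no productions, while $\Hseq(\pi)=\set{(Q(c)\lor\dual{Q}(c))\lor P(\beta)}$; since $\beta$ can never be eliminated, $\Lang{\Gram{\pi}}=\emptyset$. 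Reducing this weakening-cut erases the right subproof and yields a cut-free $\pi'$ with $\Lang{\Gram{\pi'}}=\Hseq(\pi')=\set{(Q(c)\lor\dual{Q}(c))\lor\bot}\neq\emptyset$, so $\Lang{\Gram{\pi'}}\not\Hsub\Lang{\Gram{\pi}}$. You should know that the paper's own two-sentence proof has the same lacuna---it silently assumes that the formulas replacing the $\bot$-subformulas can be derived onward to terminal terms. Both arguments, yours included, do go through under the additional assumption that every cut-eigenvariable of $\pi$ possesses at least one production, i.e.\ that no $\exists$-side cut-formula is introduced by weakening alone; as stated, however, the extension step is where your proof (and the lemma) breaks.
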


\begin{proof}
  The grammar $\Gram{\pi'}$ is obtained from $\Gram{\pi}$ via two
  modifications. First, all productions coming from cuts or $\exists$-inferences in $\psi_2$
  are deleted, and second, the formulas in $\Delta$ which are ancestors of the end-sequent
  are replaced by $\bot$ in $\Hseq(\pi')$.
  Now let $A\in\Lang{\Gram{\pi'}}$. Then the derivation of $A$ in
  $\Gram{\pi'}$ is also a derivation in $\Gram{\pi}$, with the
  difference that some $\bot$-subformulas are replaced by other
  formulas, yielding a formula $B\in\Lang{\Gram{\pi}}$ with $B\Hsup
  A$. Hence $\Lang{\Gram{\pi'}}\Hsub\Lang{\Gram{\pi}}$.
\end{proof}

It remains to analyze the case of contraction. Surprisingly, also in
this case the language of the grammar of a proof remains
unchanged. However, the proof of this result is quite technical and
requires additional auxiliary results about the relationship between
proofs and grammars. Furthermore, this is the case which needs the
additional condition that the end-sequent of our proof is weak, i.e.,
does not contain $\forall$-quantifiers.

For simplifying the presentation, we assume in the
following (without loss of generality) that the $\forall$-side is on
the right of a cut and the $\exists$-side on the left. Then, a
production $\beta\rightarrow t$ in $\Gram{\pi}$ corresponds to three
inferences in $\pi$: a cut, an instance of the $\forall$-rule, and an
instance of the $\exists$-rule, that we denote by $\cutrr\beta$,
$\forallrr\beta$, and $\existsrr t$,
respectively, and that are, in general, 
arranged in $\pi$ as shown below.
\begin{equation}
\renewcommand{\vlvruler}{\vss\hbox{$\vdots$}\kern1pt}
  \vlderivation{
    \vliiin{}{\cutrr{\beta}}{\Gamma,\Delta}{
      \vlde{}{}{\Gamma,\vlex x A}{
        \vlin{}{\existsrr{t}}{\Gamma',\vlex x A}{
          \vlhy{\Gamma',A\unsubst{x}{t}}}}}{
      \vlhy{\qquad}}{
      \vlde{}{}{\vlfa x \dual A,\Delta}{
        \vlin{}{\forallrr\beta}{\vlfa x \dual A,\Delta'}{
          \vlhy{\dual A\unsubst{x}{\beta},\Delta'}}}}}
\end{equation}
The additional condition that $\forallrr\beta$ is directly above
$\cutrr\beta$, as indicated in~\eqref{eq:simple} is needed because
in the following we make extensive use of
Observation~\ref{obs:1-1}: there is a one-to-one correspondence
between the cuts and the eigenvariables in $\EVc(\pi)$, and thus, the
notation $\cutrr\beta$ makes sense.

\begin{defi}
  We say that the instances $\cutrr\beta$, $\forallrr\beta$, and
  $\existsrr t$ \emph{are on a path $\PP$ in $\Gram{\pi}$} if the
  production $\beta\to t$ is in $\PP$.
\end{defi}

\begin{defi}
  Let $\pi$ be a proof containing the configuration 
  \begin{equation*}
    \vlderivation{
      \vliiin{}{\rr_3}{\vdots}{
        \vlin{}{\rr_1}{\quad\ddots}{
          \vlhy{\vdots}}}{
        \vlhy{\quad}}{
        \vlin{}{\rr_2}{\iddots\quad}{
          \vlhy{\vdots}}}}
  \end{equation*}
  where $\rr_1$, $\rr_2$, and $\rr_3$ are arbitrary
  rule instances, and $\rr_3$ is a branching rule, and $\rr_1$ and
  $\rr_2$ might or might not be branching. Then we say that $\rr_1$ is
  \emph{on the left above} $\rr_3$, denoted by
  $\leftabove{\rr_1}{\rr_3}$, and $\rr_2$ is \emph{on the right above}
  $\rr_3$, denoted by $\rightabove{\rr_2}{\rr_3}$, and $\rr_1$ and
  $\rr_2$ are \emph{in parallel}, denoted by $\rparallel{\rr_1}{\rr_2}$.
\end{defi}

\begin{lem}\label{lem.path_cut}
  Let $\pi$ be a simple proof and $\PP\colon\alpha_1\rightarrow
  t_1\ni\alpha_2\ldots\rightarrow t_n$ be a path in $\Gram{\pi}$. Then
  there is a $k\in\set{1,\ldots,n}$ such that $\cutrr{\alpha_k}$ is
  lowermost among all inferences on $\PP$. Furthermore,
  $\forallrr{\alpha_1}$ is on the right above $\cutrr{\alpha_k}$ and
  $\existsrr{t_n}$ is on the left above $\cutrr{\alpha_k}$.
\end{lem}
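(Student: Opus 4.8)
The plan is to induct on the length $n$ of the path $\PP\colon\alpha_1\to t_1\ni\cdots\ni\alpha_n\to t_n$, after first isolating two elementary facts about how the three inferences attached to a production are placed in $\pi$. (Throughout, ``$X$ is above $Y$'' means $X$ occurs in the subtree rooted at a premise of $Y$, i.e.\ $Y$ lies on the branch from $X$ to the root.) First, for each production $\alpha_i\to t_i$ the inference $\existsrr{t_i}$ introduces $t_i$ on the $\exists$-side of $\cutrr{\alpha_i}$, so by our convention $\leftabove{\existsrr{t_i}}{\cutrr{\alpha_i}}$. Second, since $\alpha_{i+1}\in t_i$ and $\alpha_{i+1}$ is a cut-eigenvariable, the eigenvariable condition forces every occurrence of $\alpha_{i+1}$ --- in particular the one at $\existsrr{t_i}$ --- to lie strictly above $\forallrr{\alpha_{i+1}}$; as $\pi$ is simple, $\forallrr{\alpha_{i+1}}$ sits directly above $\cutrr{\alpha_{i+1}}$ on its $\forall$-side, whence $\rightabove{\existsrr{t_i}}{\cutrr{\alpha_{i+1}}}$. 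Thus $\existsrr{t_i}$ lies above both $\cutrr{\alpha_i}$ and $\cutrr{\alpha_{i+1}}$, so these two cuts lie on the single branch from $\existsrr{t_i}$ to the root and are therefore comparable. I will also use the obvious monotonicity of ``on the left/right above'': if $u$ is on the left (right) above $v$ and $w$ lies above $u$, then $w$ is on the left (right) above $v$; and if $u$ is on the right above $v$ while $w$ lies on the branch strictly between them, then $w$ too is on the right above $v$, and symmetrically for the left.

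For the base case $n=1$ the inferences on $\PP$ are $\cutrr{\alpha_1}$, $\forallrr{\alpha_1}$ directly above it on the right, and $\existsrr{t_1}$ on its left, so $k=1$ works. For the induction step I apply the hypothesis to the length-$n$ prefix $\alpha_1\to t_1\ni\cdots\ni\alpha_n\to t_n$, obtaining a lowermost cut $\cutrr{\alpha_{k'}}$ with $\rightabove{\forallrr{\alpha_1}}{\cutrr{\alpha_{k'}}}$ and $\leftabove{\existsrr{t_n}}{\cutrr{\alpha_{k'}}}$. Now $\existsrr{t_n}$ lies above both $\cutrr{\alpha_{k'}}$ (by the prefix hypothesis) and $\cutrr{\alpha_{n+1}}$ (by the second fact applied to the new step $\alpha_n\to t_n\ni\alpha_{n+1}$), so $\cutrr{\alpha_{k'}}$ and $\cutrr{\alpha_{n+1}}$ are comparable; I take $\cutrr{\alpha_k}$ to be the lower of the two. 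It is then below all $n+1$ cuts, and since every $\forallrr{\alpha_i}$ and $\existsrr{t_i}$ lies above $\cutrr{\alpha_i}$, it is below every inference on $\PP$, i.e.\ lowermost.

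It remains to transport the two boundary properties to $\cutrr{\alpha_k}$, and here the monotonicity facts do the work. If $k=k'$, then $\cutrr{\alpha_{n+1}}$ is above $\cutrr{\alpha_{k'}}$; the branch from $\existsrr{t_n}$ down to $\cutrr{\alpha_{k'}}$ lies in the left subtree of $\cutrr{\alpha_{k'}}$ and passes through $\cutrr{\alpha_{n+1}}$, so $\leftabove{\cutrr{\alpha_{n+1}}}{\cutrr{\alpha_{k'}}}$, and hence $\existsrr{t_{n+1}}$ --- being on the left above $\cutrr{\alpha_{n+1}}$ --- is on the left above $\cutrr{\alpha_{k'}}=\cutrr{\alpha_k}$; the property for $\forallrr{\alpha_1}$ is inherited unchanged from the hypothesis. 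Symmetrically, if $k=n+1$, then $\cutrr{\alpha_{k'}}$ lies on the branch from $\existsrr{t_n}$ (which is on the right above $\cutrr{\alpha_{n+1}}$) down to $\cutrr{\alpha_{n+1}}$, so $\rightabove{\cutrr{\alpha_{k'}}}{\cutrr{\alpha_{n+1}}}$, and therefore $\forallrr{\alpha_1}$, being above $\cutrr{\alpha_{k'}}$, is on the right above $\cutrr{\alpha_{n+1}}=\cutrr{\alpha_k}$, while $\existsrr{t_{n+1}}$ is on the left above $\cutrr{\alpha_{n+1}}$ directly by the first fact. In both cases $\cutrr{\alpha_k}$ has the required properties, completing the induction.

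I expect the main obstacle to be the second structural fact together with the comparability it yields: one has to combine the eigenvariable condition with the simple-proof placement of $\forallrr{\alpha_{i+1}}$ directly above its cut in order to pin $\existsrr{t_i}$ into the \emph{right} subtree of $\cutrr{\alpha_{i+1}}$, and then observe that two cuts admitting a common inference above them necessarily lie on one branch. Note that the cuts on $\PP$ need \emph{not} be totally ordered, and their heights need not be unimodal along $\PP$, so the argument must produce the global minimum directly (as the induction does) rather than by monotone descent. Once comparability and the left/right monotonicity are in hand, the remaining bookkeeping in the induction step is routine.
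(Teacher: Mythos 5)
Your proof is correct and follows essentially the same route as the paper's: induction on the path length, the key observation that $\alpha_{n+1}\in t_n$ forces $\existsrr{t_n}$ to lie on the right above $\cutrr{\alpha_{n+1}}$ (hence the two candidate cuts are comparable on a common branch), and a two-way case split taking the lower of the inductively obtained cut and $\cutrr{\alpha_{n+1}}$. The only difference is presentational — the paper exhibits the two cases as proof-tree diagrams, whereas you spell out the same transport of the left/right-above properties via explicit monotonicity facts.
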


\begin{proof}
  We proceed by induction on $n$. If $n=1$, then $n=k=1$. For the
  induction step consider a path $\alpha_1\to
  t_1\ni\ldots\ni\alpha_n\to t_n\ni\alpha_{n+1}\to t_{n+1}$.  By
  induction hypothesis, there is some $l\in\{1,\ldots,n\}$ such that
  we have this configuration
  \begin{equation*}
    \vlderivation{
        \vliiin{}{\cutrr{\alpha_l}}{\vdots}{
          \vlin{}{\existsrr{t_{n}}}{\quad\ddots}{
            \vlhy{\vdots}}}{
          \vlhy{\quad}}{
          \vlin{}{\forallrr{\alpha_1}}{\iddots\quad}{
            \vlhy{\vdots}}}}
  \end{equation*}
  As $\alpha_{n+1}\in t_n$ we know that
  $\existsrr{t_n}$ must be on the right above
  $\cutrr{\alpha_{n+1}}$. Hence, we are in one of the following two
  situations
  \begin{equation*}
    \vlderivation{
      \vliin{}{\cutrr{\alpha_{n+1}}}{\vdots}{
        \vlin{}{\existsrr{t_{n+1}}}{\quad\ddots}{
          \vlhy{\vdots}}}{
        \vliin{}{\cutrr{\alpha_l}}{\iddots\quad}{
          \vlin{}{\existsrr{t_{n}}}{\quad\ddots}{
            \vlhy{\vdots}}}{
          \vlin{}{\forallrr{\alpha_1}}{\iddots\quad}{
            \vlhy{\vdots}}}}}
    \nqquor
    \vlderivation{
      \vliin{}{\cutrr{\alpha_l}}{\vdots}{
        \vliin{}{\cutrr{\alpha_{n+1}}}{\quad\ddots}{
          \vlin{}{\existsrr{t_{n+1}}}{\quad\ddots}{
            \vlhy{\vdots}}}{
          \vlin{}{\existsrr{t_{n}}}{\iddots\quad}{
            \vlhy{\vdots}}}}{
        \vlin{}{\forallrr{\alpha_1}}{\iddots\quad}{
          \vlhy{\vdots}}}}
  \end{equation*}
  In the first case we let $k=n+1$ and in the second we let $k=l$. In both cases 
  $\cutrr{\alpha_k}$ has the desired properties.
\end{proof}

\begin{lem}\label{lem.path_cutpos}
  Let $\pi$ be a simple proof,
  let $\Gram{\pi}=\tuple{N_R,\Sigma,\varphi,P}$, and
  let $\beta,\alpha\in\EVc(\pi)$.  If $\beta\pathto{}\alpha$ then either
  $\leftabove{\cutrr{\alpha}}{\cutrr{\beta}}$ or
  $\rightabove{\cutrr{\beta}}{\cutrr{\alpha}}$ or 
  $\rparallel{\cutrr{\alpha}}{\cutrr{\beta}}$.
\end{lem}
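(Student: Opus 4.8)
The plan is to reduce the whole statement to Lemma~\ref{lem.path_cut} applied to the path witnessing $\beta\pathto\alpha$. Concretely, I would fix a path $\PP\colon\beta=\alpha_1\to t_1\ni\alpha_2\to\ldots\ni\alpha_n\to t_n$ with $\alpha_n=\alpha$. Since $\pi$ is simple, $\Gram\pi$ is acyclic by Lemma~\ref{lem.simple_acyclic}, so the non-terminals $\alpha_1,\ldots,\alpha_n$ are pairwise distinct; in particular $\beta\neq\alpha$ in the case $n\geq2$, which is the one of interest. By Lemma~\ref{lem.path_cut} there is a $k\in\set{1,\ldots,n}$ such that $\cutrr{\alpha_k}$ is lowermost among all inferences on $\PP$, with $\forallrr\beta=\forallrr{\alpha_1}$ on the right above $\cutrr{\alpha_k}$ and $\existsrr{t_n}$ on the left above $\cutrr{\alpha_k}$.

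First I would translate these two positional facts from the inferences $\forallrr\beta$ and $\existsrr{t_n}$ to the cuts $\cutrr\beta$ and $\cutrr\alpha$, which is exactly where simplicity is used. Because in a simple proof $\forallrr\beta$ is directly above $\cutrr\beta$ (cf.~\eqref{eq:simple}), the cut $\cutrr\beta$ sits immediately below $\forallrr\beta$; and since $\cutrr\beta$ is itself an inference on $\PP$, it lies at or above the lowermost cut $\cutrr{\alpha_k}$. Hence either $\cutrr\beta=\cutrr{\alpha_k}$ (which happens precisely when $\beta=\alpha_k$, i.e.\ $k=1$) or $\cutrr\beta$ is on the right above $\cutrr{\alpha_k}$. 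Dually, $\existsrr{t_n}$ is the $\exists$-inference feeding the $\exists$-side, i.e.\ the left premise, of $\cutrr\alpha$, so $\cutrr\alpha$ lies below $\existsrr{t_n}$; as $\cutrr\alpha$ also lies on $\PP$ it is at or above $\cutrr{\alpha_k}$, so either $\cutrr\alpha=\cutrr{\alpha_k}$ (precisely when $k=n$) or $\cutrr\alpha$ is on the left above $\cutrr{\alpha_k}$.

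The conclusion then falls out by a case analysis on $k$. If $k=1$, then $\cutrr\beta=\cutrr{\alpha_k}$ is lowermost and $\cutrr\alpha$ is on the left above it, giving $\leftabove{\cutrr\alpha}{\cutrr\beta}$. If $k=n$, then $\cutrr\alpha=\cutrr{\alpha_k}$ is lowermost and $\cutrr\beta$ is on the right above it, giving $\rightabove{\cutrr\beta}{\cutrr\alpha}$. Finally, if $1<k<n$, then $\cutrr\beta$ is on the right above and $\cutrr\alpha$ is on the left above the same branching inference $\cutrr{\alpha_k}$, which is by definition $\rparallel{\cutrr\alpha}{\cutrr\beta}$. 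Throughout I use the standing convention that the $\exists$-side of a cut is on the left and the $\forall$-side on the right.

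I expect the only real subtlety — and hence the main obstacle — to be the bookkeeping of whether a cut lies strictly inside one subtree of $\cutrr{\alpha_k}$ or coincides with $\cutrr{\alpha_k}$ itself, i.e.\ the careful localisation of $\cutrr\beta$ relative to $\forallrr\beta$ and of $\cutrr\alpha$ relative to $\existsrr{t_n}$. This is precisely the point at which the definition of a simple proof is essential: without $\forallrr\beta$ being \emph{directly} above $\cutrr\beta$, the positional information that Lemma~\ref{lem.path_cut} supplies about $\forallrr\beta$ and $\existsrr{t_n}$ could not be transferred to the cuts themselves. Everything beyond this is a direct reading-off from the tree structure of $\pi$.
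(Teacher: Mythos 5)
Your proposal is correct and follows essentially the same route as the paper's proof: apply Lemma~\ref{lem.path_cut} to the path witnessing $\beta\pathto{}\alpha$, use simplicity to transfer the positional information from $\forallrr{\beta}$ and $\existsrr{t_n}$ to the cuts $\cutrr{\beta}$ and $\cutrr{\alpha}$, and split into three cases according to whether the lowermost cut $\cutrr{\alpha_k}$ equals $\cutrr{\beta}$, equals $\cutrr{\alpha}$, or neither. Your indexing of the cases by $k\in\set{1,n}$ versus $1<k<n$ (justified via acyclicity) is just a notational variant of the paper's case distinction on $\gamma=\beta$, $\gamma=\alpha$, or $\gamma\notin\set{\alpha,\beta}$.
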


\begin{proof}
  Since $\beta\pathto{}\alpha$, we have a path
  $\beta\to\ldots\ni\alpha\to t$ for some $t$.  By
  Lemma~\ref{lem.path_cut} there is a $\gamma$, such that
  $\leftabove{\existsrr t}{\cutrr{\gamma}}$ and 
  $\rightabove{\forallrr{\beta}}{\cutrr{\gamma}}$, and such that
 $\cutrr{\alpha}$ and $\cutrr{\beta}$ are not below
$\cutrr\gamma$. Furthermore, $\cutrr{\alpha}$ must be below
$\existsrr t$, and $\cutrr{\beta}$ below $\forallrr{\beta}$.
If $\gamma=\beta$, then $\leftabove{\cutrr{\alpha}}{\cutrr{\beta}}$. 
If $\gamma=\alpha$, then $\rightabove{\cutrr{\beta}}{\cutrr{\alpha}}$. 
And if $\gamma\neq\beta$ and $\gamma\neq\alpha$, then 
$\rparallel{\cutrr{\alpha}}{\cutrr{\beta}}$.
\end{proof}

\begin{lem}\label{lem.alphapos}
  Let $\Gram{\pi}=\tuple{N_R,\Sigma,\varphi,P}$ be the grammar of a
  simple proof $\pi$, such that there are two paths
  \begin{eqnarray*}
    &&\beta\to t\ni\gamma_0\to s_0\ni\gamma_1\to s_1\ni\ldots
    \to s_{n-1}\ni\gamma_n=\alpha\to s_n\\
    &&\beta\to t\ni\delta_0\to r_0\ni\delta_1\to r_1\ni\ldots
    \to r_{m-1}\ni\delta_m=\alpha\to r_m
  \end{eqnarray*}
  such that $\gamma_0$ and $\delta_0$ occur at two different positions
  in~$t$. Then we have
  one of the following two cases:
  \begin{enumerate}
  \item we have $\gamma_i=\delta_j$ for some $0\le i<n$ and $0\le j<m$, or \vadjust{\vskip.7ex}
  \item for all $0\le i<n$ and $0\le j<m$ we have\/
    $\rightabove{\cutrr{\gamma_i}}{\cutrr{\alpha}}$ and\/
    $\rightabove{\cutrr{\delta_j}}{\cutrr{\alpha}}$.
  \end{enumerate}
\end{lem}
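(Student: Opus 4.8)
The plan is to read the two grammar paths as statements about the relative tree positions of the cuts $\cutrr{\gamma_i},\cutrr{\delta_j},\cutrr\alpha$, exploiting that both paths begin with the \emph{same} production $\beta\to t$. I first dispose of the degenerate case $\gamma_0=\delta_0$: since the two occurrences sit at different positions of $t$, the paths then share a non-terminal at the start, so either $0<n$, $0<m$ and case~(1) holds with $i=j=0$, or one (hence both) of $n,m$ is $0$ and the index ranges in case~(2) are empty. So I may assume $\gamma_0\neq\delta_0$. The basic observation is that a single step $\mu\to s\ni\nu$ forces $\cutrr\mu$ and $\cutrr\nu$ to be comparable in the tree order, since $\existsrr s$ lies in the left subtree of $\cutrr\mu$ and---by the eigenvariable condition and $\forallrr\nu$ being directly above $\cutrr\nu$---also in the right subtree of $\cutrr\nu$; in particular, applied to $\beta\to t$, the inference $\existsrr t$ is on the right above \emph{both} $\cutrr{\gamma_0}$ and $\cutrr{\delta_0}$. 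Consequently a path can never pass between the two subtrees of a fixed cut without going through that cut, a device I use throughout.

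Next I would apply Lemma~\ref{lem.path_cut} to the sub-paths $\gamma_0\pathto\alpha$ and $\delta_0\pathto\alpha$ and let $\cutrr a$ be the lowermost cut of the $\gamma$-sub-path, so that $\forallrr{\gamma_0}$ is on the right above $\cutrr a$ and $\existsrr{s_n}$ (the $\exists$-inference of $\cutrr\alpha$) is on the left above $\cutrr a$. From this I read off that $\existsrr t$ lies in the right subtree of $\cutrr a$, and that either $\cutrr a=\cutrr\alpha$ or $\cutrr\alpha$ lies in the left subtree of $\cutrr a$. Moreover, by the confinement remark, $\cutrr a=\cutrr\alpha$ is precisely the case in which every intermediate $\cutrr{\gamma_i}$ is on the right above $\cutrr\alpha$.

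The crux is a no-dip claim: every cut of the $\delta$-sub-path lies at-or-above $\cutrr a$. Otherwise its lowermost cut $\cutrr b$ is strictly below $\cutrr a$; tracing $\cutrr\alpha$, which lies in the left subtree of both $\cutrr a$ and $\cutrr b$, shows that $\cutrr a$ and hence $\existsrr t$ lie in the left subtree of $\cutrr b$, whereas $\forallrr{\delta_0}$ on the right above $\cutrr b$ forces $\existsrr t$ into the right subtree of $\cutrr b$---a contradiction. Granting this, if $\cutrr a=\cutrr\alpha$ then the $\delta$-sub-path starts in the right subtree of $\cutrr\alpha$ and ends at $\cutrr\alpha$ while staying at-or-above it, so by confinement it stays in the right subtree of $\cutrr\alpha$ until it reaches $\cutrr\alpha$; hence all intermediate $\cutrr{\delta_j}$ are on the right above $\cutrr\alpha$ and case~(2) holds. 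If instead $\cutrr a\neq\cutrr\alpha$, the $\delta$-sub-path runs from the right subtree of $\cutrr a$ (where $\delta_0$ sits) to its left subtree (where $\alpha$ sits) without dipping below $\cutrr a$, hence must pass through $\cutrr a$ itself, so $\delta_j=a=\gamma_k$ for suitable $k<n$, $j<m$ (Observation~\ref{obs:1-1}), which is case~(1).

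I expect the main obstacle to be exactly the no-dip claim and the accompanying confinement steps: they demand precise bookkeeping of which subtree of which cut each of $\existsrr t,\cutrr{\gamma_0},\cutrr{\delta_0},\cutrr\alpha$ occupies, and the contradiction only closes because the single inference $\existsrr t$ is driven simultaneously into the left and the right subtree of the offending cut. The two enabling facts---that consecutive cuts on a path are comparable (so that crossings are detectable) and that $\existsrr t$ is shared by both paths---are what make this bookkeeping go through.
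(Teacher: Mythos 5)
Your proposal is correct in substance but takes a genuinely different route from the paper's proof. The paper first breaks the symmetry between the two paths (observing that $\existsrr{t}$ lies above both $\forallrr{\gamma_0}$ and $\forallrr{\delta_0}$, so the two cuts are comparable, and assuming w.l.o.g.\ that $\cutrr{\delta_0}$ is below $\cutrr{\gamma_0}$), and then invokes Lemma~\ref{lem.path_cutpos} on $\delta_0\pathto\alpha$ to obtain a three-way case split ($\leftabove{\cutrr{\alpha}}{\cutrr{\delta_0}}$, or $\rightabove{\cutrr{\delta_0}}{\cutrr{\alpha}}$, or $\rparallel{\cutrr{\alpha}}{\cutrr{\delta_0}}$), resolving each case by applications of Lemma~\ref{lem.path_cut}; the degenerate indices $n=0$, $m=0$ are quarantined at the end. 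You bypass Lemma~\ref{lem.path_cutpos} entirely: you apply Lemma~\ref{lem.path_cut} to both sub-paths, pivot on the dichotomy $\cutrr{a}=\cutrr{\alpha}$ versus $\cutrr{a}\neq\cutrr{\alpha}$ for the lowermost cut $\cutrr{a}$ of the $\gamma$-sub-path, and discharge the two branches with your no-dip claim and the confinement principle. I checked these: consecutive cuts on a path are indeed comparable (both lie below the $\exists$-inference joining them), the no-dip contradiction closes exactly as you say (the offending lowermost cut $\cutrr{b}$ of the $\delta$-path would have $\existsrr{t}$ in its left subtree via $\cutrr{\alpha}$ and $\cutrr{a}$, and in its right subtree via $\forallrr{\delta_0}$), and the two branches then yield cases~2 and~1 respectively. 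Both proofs rest on the same two geometric facts (the eigenvariable condition placing $\existsrr{s}$ above $\forallrr{\nu}$ whenever $\nu\in s$, and tree-comparability of inferences below a common node); what yours buys is a two-case instead of a three-case analysis and no w.l.o.g., at the price of having to establish the no-dip/confinement machinery from scratch rather than reusing the already-proven Lemma~\ref{lem.path_cutpos}.

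One point needs repair. In the branch $\cutrr{a}\neq\cutrr{\alpha}$ you assert that $\cutrr{\delta_0}$ sits in the right subtree of $\cutrr{a}$ and conclude a crossing $\delta_j=\gamma_k$ with $j<m$, $k<n$. When $m=0$ (so $\delta_0=\alpha$) no such $j$ exists; what actually happens is that your own placement argument puts $\cutrr{\delta_0}=\cutrr{\alpha}$ simultaneously in the right subtree of $\cutrr{a}$ (via $\existsrr{t}$, the eigenvariable condition and no-dip) and in its left subtree (via Lemma~\ref{lem.path_cut} on the $\gamma$-path), a contradiction showing that this branch is vacuous for $m=0$, so that case~2 holds via the other branch. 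The ingredients for this are all present in your proposal, but the write-up should state it explicitly; this is precisely why the paper treats $n=0$ and $m=0$ separately.
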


\begin{proof}
  Note that because of acyclicity of $\Gram{\pi}$, we have that
  $\beta\neq\gamma_i$ for all $i\le n$ and $\beta\neq\delta_j$ for all
  $j\le m$, in particular $\beta\neq\alpha$. Assume, for the moment,
  that $m,n>0$; the case of one of them being zero will be treated at
  the very end of the proof. Then $\gamma_0\neq\alpha$ and
  $\delta_0\neq\alpha$.  If $\gamma_0=\delta_0$, we have case~1. So,
  assume also $\gamma_0\neq\delta_0$.  As $\beta\rightarrow t$ is a
  production in $\Gram{\pi}$, the proof $\pi$ contains a formula which
  contains both $\gamma_0$ and $\delta_0$ hence $\forallrr{\gamma_0}$
  and $\forallrr{\delta_0}$ are not parallel. Since we have
  $\rightabove{\forallrr{\gamma_0}}{\cutrr{\gamma_0}}$ and
  $\rightabove{\forallrr{\delta_0}}{\cutrr{\delta_0}}$, we also have
  that $\cutrr{\gamma_0}$ and $\cutrr{\delta_0}$ are not
  parallel. Without loss of generality, assume that $\cutrr{\delta_0}$
  is below $\cutrr{\gamma_0}$. Then
  $\rightabove{\cutrr{\gamma_0}}{\cutrr{\delta_0}}$ (since
  $\leftabove{\cutrr{\gamma_0}}{\cutrr{\delta_0}}$ would entail
  $\rparallel{\forallrr{\gamma_0}}{\forallrr{\delta_0}}$). Since we
  have $\delta_0\pathto{}\alpha$, we can apply
  Lemma~\ref{lem.path_cutpos}, giving us three possibilities:
  \begin{itemize}
  \item If $\leftabove{\cutrr{\alpha}}{\cutrr{\delta_0}}$ then we have
    the situation
    \begin{equation*}
      \vlderivation{
      \vliin{}{\cutrr{\delta_0}}{\vdots}{
        \vliin{}{\cutrr{\alpha}}{\quad\ddots}{
          \vlin{}{\existsrr{s_n}}{\quad\ddots}{
            \vlhy{\vdots}}}{
          \vlin{}{\forallrr{\alpha}}{\iddots\quad}{
            \vlhy{\vdots}}}}{
        \vliin{}{\cutrr{\gamma_0}}{\iddots\quad}{
          \vlin{}{\existsrr{s_0}}{\quad\ddots}{
            \vlhy{\vdots}}}{
          \vlin{}{\forallrr{\gamma_0}}{\iddots\quad}{
            \vlhy{\vdots}}}}}
    \end{equation*}
    By Lemma~\ref{lem.path_cut} applied to the path
    $\gamma_0\pathto{}s_n$ we have that $\cutrr{\delta_0}$ must
    coincide with $\cutrr{\gamma_i}$ for some $0\le i< n$ (since $\pi$
    is a tree), so $\delta_0=\gamma_i$ (by Observation~\ref{obs:1-1}),
    and we are in case~1.
  \item If $\rightabove{\cutrr{\delta_0}}{\cutrr{\alpha}}$ then we are
    in \emph{both} of the following two situations:
    \begin{equation*}
      \vlderivation{
      \vliin{\qquad}{\cutrr{\alpha}}{\vdots}{
        \vlin{}{\existsrr{s_n}}{\quad\ddots}{
          \vlhy{\vdots}}}{
        \vliin{}{\cutrr{\delta_0}}{\iddots\quad}{
          \vlhy{\quad\ddots}}{
          \vliin{}{\cutrr{\gamma_0}}{\iddots\quad}{
            \vlhy{\quad\ddots}}{
            \vlin{}{\forallrr{\gamma_0}}{\iddots\quad}{
              \vlhy{\vdots}}}}}}
      \nqquand
      \vlderivation{
      \vliin{}{\cutrr{\alpha}}{\vdots}{
        \vlin{}{\existsrr{r_m}}{\quad\ddots}{
          \vlhy{\vdots}}}{
        \vliin{}{\cutrr{\delta_0}}{\iddots\quad}{
          \vlhy{\quad\ddots}}{
          \vlin{}{\forallrr{\delta_0}}{\iddots\quad}{
              \vlhy{\vdots}}}}}
     \end{equation*}
    Thus, by Lemma~\ref{lem.path_cut} applied to the paths
    $\gamma_0\pathto{}s_n$ and $\delta_0\pathto{}r_m$ we know that
    $\cutrr{\alpha}=\cutrr{\gamma_k}=\cutrr{\delta_l}$ for some $0\le
    k\le n$ and $0\le l\le m$ hence
    $\gamma_k=\alpha=\delta_l$. Furthermore $k=n$ and $l=m$ by
    acyclicity of $\Gram{\pi}$ and assumption $\gamma_n = \alpha =
    \delta_m$.
    Now consider any $\gamma_i$ with $0\le
    i<n$. Since $\gamma_i\pathto{}\alpha$, we can apply
    Lemma~\ref{lem.path_cutpos} and get either
    $\leftabove{\cutrr{\alpha}}{\cutrr{\gamma_i}}$ or
    $\rightabove{\cutrr{\gamma_i}}{\cutrr{\alpha}}$ or
    $\rparallel{\cutrr{\alpha}}{\cutrr{\gamma_i}}$. Since by
    Lemma~\ref{lem.path_cut} $\cutrr{\gamma_i}$ must be above
    $\cutrr{\alpha}$, we conclude
    $\rightabove{\cutrr{\gamma_i}}{\cutrr{\alpha}}$. With the same
    reasoning we can conclude that
    $\rightabove{\cutrr{\delta_j}}{\cutrr{\alpha}}$ for all $0\le
    j<m$. We are therefore in case~2.
  \item If $\rparallel{\cutrr{\alpha}}{\cutrr{\delta_0}}$ then we are
    in \emph{both} of the following two situations:
    \begin{equation*}
      \vlderivation{
        \vliin{}{\rr}{\vdots}{
          \vliin{}{\cutrr{\alpha}}{\quad\ddots}{
            \vlin{}{\existsrr{r_m}}{\quad\ddots}{
              \vlhy{\vdots}}}{
            \vlhy{\iddots\quad}}}{          
          \vliin{}{\cutrr{\delta_0}}{\iddots\quad}{
            \vlhy{\quad\ddots\quad}}{
            \vlin{}{\forallrr{\delta_0}}{\iddots\quad}{
              \vlhy{\vdots}}}}}
    \end{equation*}
    and\semiproofadjust
    \begin{equation*}
      \vlderivation{
        \vliin{}{\rr}{\vdots}{
          \vliin{}{\cutrr{\alpha}}{\quad\ddots}{
            \vlin{}{\existsrr{s_n}}{\quad\ddots}{
              \vlhy{\vdots}}}{
            \vlhy{\iddots\quad}}}{
          \vliin{}{\cutrr{\delta_0}}{\iddots\quad}{
            \vlhy{\quad\ddots}}{
            \vliin{}{\cutrr{\gamma_0}}{\iddots\quad}{
              \vlhy{\quad\ddots\quad}}{
              \vlin{}{\forallrr{\gamma_0}}{\iddots\quad}{
                \vlhy{\vdots}}}}}}
    \end{equation*}
    By Lemma~\ref{lem.path_cut} applied to the paths $\gamma_0\to
    \ldots \to s_n$ and $\delta_0\to \ldots \to r_m$, the rule $\rr$
    coincides with $\cutrr{\gamma_i}$ and $\cutrr{\delta_j}$ for some
    $0<i<n$ and $0<j<m$, therefore $\gamma_i=\delta_j$ (by
    Observation~\ref{obs:1-1}), and we are in case~1.
  \end{itemize}
  It remains to treat the case $n=0$ or $m=0$. If $m=n=0$ then we are
  trivially in case~2 (there is no $0\le i<n$ or $0\le j<m$). If
  $n=0$ and $m>0$, we can apply Lemma~\ref{lem.path_cut} to the path
  $\delta_0\to \ldots \to r_m$ and obtain an $l\in\set{0,\ldots,m}$
  such that we are in the situation
  \begin{equation*}
      \vlderivation{
        \vliin{}{\cutrr{\delta_l}}{\vdots}{
          \vliin{}{\cutrr{\alpha}}{\quad\ddots}{
            \vlin{}{\existsrr{r_m}}{\quad\ddots}{
              \vlhy{\vdots}}}{
            \vlin{}{\forallrr{\alpha}}{\iddots\quad}{
              \vlhy{\vdots}}}}{          
          \vliin{}{\cutrr{\delta_0}}{\iddots\quad}{
            \vlhy{\quad\ddots\quad}}{
            \vlin{}{\forallrr{\delta_0}}{\iddots\quad}{
              \vlhy{\vdots}}}}}
  \end{equation*}
  But by the same argument as at the beginning of the proof, we also have
  that $\forallrr{\alpha}$ and $\forallrr{\delta_0}$ cannot be in
  parallel ($\alpha$ and $\delta_0$ both appear in $t$), and therefore
  either $\rightabove{\cutrr{\alpha}}{\cutrr{\delta_0}}$ or
  $\rightabove{\cutrr{\delta_0}}{\cutrr{\alpha}}$. Since
  $\delta_0\pathto{}\alpha$, the only possibility is
  $\rightabove{\cutrr{\delta_0}}{\cutrr{\alpha}}$, by
  Lemma~\ref{lem.path_cutpos}. Thus
  $\cutrr{\alpha}=\cutrr{\delta_l}$, and therefore $l=m$ and we are in
  case~2. The case $m=0$ and $n>0$ is similar.
\end{proof}

We have now finally collected together all necessary tools for describing
the reduction step for contraction.

\begin{lem}\label{lem.subset_contraction_reduction}
Let $\pi$ be a simple proof of a weak sequent such that $\pi$ contains
a subproof $\psi$, shown on the left below,
\begin{equation*}
  \psi
  \quad = \quad
  \vlderivation{
    \vliin{}{\cutr}{\Gamma,\Delta}{
      \vlin{}{\conr}{\Gamma,A}{
        \vlbtrf{\psi_1}{\Gamma,A,A}{2em}{.8}}}{
      \vlbtrf{\psi_2}{\dual A,\Delta}{2em}{.8}}}
  \qqualto
  \vlderivation{
    \vliq{}{\conr^\ast}{\Gamma,\Delta}{
      \vliin{}{\cutr}{\Gamma,\Delta,\Delta}{
        \vliin{}{\cutr}{\Gamma,\Delta,A}{
          \vlbtrf{\psi_1}{\Gamma,A,A}{2em}{.8}}{
          \vlbtrf{\psi_2\rho'}{\dual A,\Delta}{2em}{.8}}}{
        \vlbtrf{\psi_2\rho''}{\dual A,\Delta}{2em}{.8}}}}
  \quad = \quad
  \psi'
\end{equation*}
and let $\pi'$ be the proof obtained from $\pi$ from replacing $\psi$
by $\psi'$ shown on the right above, where
$\rho'=\unsubst{\alpha}{\alpha'}_{\alpha\in \EV(\psi_2)}$ and
$\rho''=\unsubst{\alpha}{\alpha''}_{\alpha\in \EV(\psi_2)}$ are
substitutions that replace all eigenvariables in $\psi_2$ by fresh
copies.  Then $\Lang{\Gram{\pi'}}=\Lang{\Gram{\pi}}$.
\end{lem}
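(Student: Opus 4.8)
The plan is to first pin down exactly how $\Gram{\pi'}$ differs from $\Gram{\pi}$, and then to prove the two inclusions separately, using the acyclic-language characterization of Lemma~\ref{lem.totrig_acyclic_language} (applicable since $\Gram{\pi}$ is acyclic by Lemma~\ref{lem.simple_acyclic}) together with the path-analysis of Lemmas~\ref{lem.path_cut}--\ref{lem.alphapos}.

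First I would describe the grammar change. Writing $\beta$ for the eigenvariable of the contracted cut and $Q_1,Q_2$ for the two occurrences of the existential cut-formula in the conclusion of $\psi_1$, inspection of $\Hseq$ and $\Bsub$ shows the following. Since the contraction distributes as a \emph{union} over Herbrand-sets and term-sets, while $\rho',\rho''$ rename only the eigenvariables in $\EV(\psi_2)$ (which occur nowhere outside $\psi_2$), the grammar $\Gram{\pi'}$ is obtained from $\Gram{\pi}$ by: replacing every $\alpha\in\EV(\psi_2)$ by two fresh copies $\alpha\rho',\alpha\rho''$; splitting the productions $\beta\to t$ into $\beta\rho'\to t$ and $\beta\rho''\to t$ according to which copy of the existential cut-formula ($Q_2$ resp.\ $Q_1$) the witness $t$ belongs to; duplicating every other $\psi_2$-production $\alpha\to s$ into $\alpha\rho'\to s\rho'$ and $\alpha\rho''\to s\rho''$; and replacing each $\theta$-production whose right-hand side mentions $\psi_2$-material by its $\rho'$- and $\rho''$-variants. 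All productions stemming from $\psi_1$ or from the part of $\pi$ below and beside $\psi$ are unchanged, though by the doubling of $\Delta$ under the contraction they may now reach \emph{both} copies.

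Second, for $\Lang{\Gram{\pi}}\subseteq\Lang{\Gram{\pi'}}$, Lemma~\ref{lem.totrig_acyclic_language} presents a term of $\Lang{\Gram{\pi}}$ as determined by a single production choice for each non-terminal; in particular $\beta$ receives exactly one value $t$. Choosing the primed copy when $t\in\terms{Q_2}$ and the doubly-primed copy otherwise, and routing every $\psi_2$-non-terminal (and every non-$\psi_2$ value that happens to mention $\psi_2$-material) through that same copy, yields a $\Gram{\pi'}$-derivation of the identical term; since only one copy is ever used, the rigidity condition is met trivially, and the shared non-$\psi_2$ productions cause no difficulty.

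Third, the reverse inclusion $\Lang{\Gram{\pi'}}\subseteq\Lang{\Gram{\pi}}$ is the heart of the matter and the main obstacle. Given a $\Gram{\pi'}$-derivation one would like to \emph{unprime} it, mapping $\alpha\rho'$ and $\alpha\rho''$ back to $\alpha$; this is harmless unless both copies of some eigenvariable are reached in a single derivation, in which case unpriming could assign $\alpha$ two different values and so violate the rigidity of $\alpha$ in $\Gram{\pi}$, producing an ``off-diagonal'' term outside $\Lang{\Gram{\pi}}$. The crux is to show this never occurs: whenever both copies are reached, their fully resolved subterms must coincide, so that unpriming returns a genuine, rigidity-respecting $\Gram{\pi}$-derivation of the same term. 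To prove it I would trace the two occurrences back to two positions of a common production right-hand side in $\Gram{\pi}$ and invoke Lemma~\ref{lem.alphapos} (which rests on Lemmas~\ref{lem.path_cut} and~\ref{lem.path_cutpos}): the two branches either merge at a shared non-terminal, and were therefore forced through one and the same copy, or they lie entirely on the right above $\cutrr\alpha$ — precisely the duplicated subproof $\psi_2$ — which ties the two resolutions rigidly to a single value. This is exactly the point at which the weak-sequent hypothesis is indispensable, as it excludes the configurations in which the two copies could be instantiated independently. With this consistency in place the unprimed derivation is legitimate, which gives the reverse inclusion and hence the claimed equality $\Lang{\Gram{\pi'}}=\Lang{\Gram{\pi}}$.
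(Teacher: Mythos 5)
Your description of how $\Gram{\pi'}$ arises from $\Gram{\pi}$ and your argument for $\Lang{\Gram{\pi}}\subseteq\Lang{\Gram{\pi'}}$ match the paper (the paper phrases the latter via Lemma~\ref{lem.rigid_projection}: pick at most one production for the contracted cut's eigenvariable and apply $\rho'$ or $\rho''$ to the whole derivation). The gap is in the reverse inclusion, and it is exactly at your ``crux''. You claim that whenever a $\Gram{\pi'}$-derivation reaches both copies $\alpha\rho'$ and $\alpha\rho''$, their resolved subterms \emph{must coincide}, so that unpriming is already a rigidity-respecting $\Gram{\pi}$-derivation. This is false, and your justification for the merge case does not hold: if the two branches merge at a shared non-terminal $\eta$, they are \emph{not} forced through one and the same copy. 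The shared $\eta$ may lie outside $\EV(\psi_2)$ (hence be a single, unprimed non-terminal of $\Gram{\pi'}$), and $\Gram{\pi'}$ contains productions $\eta\to s\rho'$ \emph{and} $\eta\to s'\rho''$ whenever the corresponding $\exists$-inferences sit inside $\psi_2$; so one occurrence of $\eta$ can continue into the $\rho'$-copy and the other into the $\rho''$-copy. Rigidity of $\DD'$ at $\eta$ only forces the two \emph{full} subterms below the two $\eta$-occurrences to be equal; since the derivations below them may differ, the positions and values of $\alpha'$ and $\alpha''$ inside those equal subterms can still differ (e.g.\ $s=f(\alpha,c)$, $s'=f(d,\alpha)$ resolve to the same term with $\alpha'\mapsto d$, $\alpha''\mapsto c$). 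So unpriming genuinely can break rigidity, and your proof stops exactly where the real work begins.

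The paper's proof accepts this: after unpriming it obtains a derivation $\DD$ that ``possibly violates the rigidity condition'' and then \emph{repairs} it, by induction on the measure $\nonrigd{\DD}$ counting violating position pairs. One picks a violating $\alpha$ that is minimal w.r.t.\ $\sreach{}$ (acyclicity of $\Gram{\pi}$, Lemma~\ref{lem.simple_acyclic}, guarantees existence), locates the production $\beta\to t$ where the two paths to $\alpha$ branch, and applies Lemma~\ref{lem.alphapos}. In the merge case ($\gamma_i=\delta_j=\eta$), minimality of $\alpha$ gives that $\eta$ \emph{does} respect rigidity in $\DD$, hence the two subderivations below the two $\eta$-positions derive the same term; replacing one by the other keeps the derived term fixed and strictly decreases $\nonrigd{\DD}$, so induction applies. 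Only in the second case of Lemma~\ref{lem.alphapos} (both paths entirely inside $\psi_2$) does one get the outright contradiction you describe, because there the uniformly-primed productions would force the right-hand side $t$ to contain both a $\rho'$- and a $\rho''$-variable, which no production of $\Gram{\pi'}$ does. Your proposal is missing the minimality choice, the surgery step, and the induction; without them the claimed ``consistency'' of the two copies simply does not hold.
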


\begin{proof}
Let us first show
$\Lang{\Gram{\pi}}\subseteq\Lang{\Gram{\pi'}}$. Write $P$ for the
productions of $\Gram{\pi}$ and $P'$ for those of $\Gram{\pi'}$. Let
$F\in\Lang{\Gram{\pi}}$ and $\DD$ be its derivation.  If the
duplicated cut is quantifier-free, then $P' = P\rho' \union P\rho''$,
since the substitutions $\rho$ and $\rho'$ do not affect the
eigenvariables outside $\psi_2$. Hence $\DD\rho'$ (as well as
$\DD\rho''$) is a derivation of $F$ in $\Gram{\pi'}$. If the
duplicated cut contains a quantifier, let $\alpha$ be its
eigenvariable, let $t_1,\ldots,t_k$ be its terms coming from the left
copy of $A$ and $t_{k+1},\ldots,t_n$ those from the right copy of $A$
and let $Q=\{ \alpha\to t_1,\ldots,\alpha\to t_n \}\subseteq P$. We
then have
\begin{equation*}
  P'=(P\setminus Q)\rho'\union\{ \alpha'\to t_1,\ldots,\alpha'\to t_k \}
  \union 
  (P\setminus Q)\rho'' \union \{ \alpha''\to t_{k+1},\ldots,
  \alpha''\to t_n \}\quad.
\end{equation*}
If $\DD$ does not contain $\alpha$, then $\DD\rho'$ (as well as
$\DD\rho''$) is a derivation of $F$ in $\Gram{\pi'}$. If $\DD$ does
contain $\alpha$, then by Lemma~\ref{lem.rigid_projection} we can
assume that it uses only one $\alpha$-production, say $\alpha\to
t_i$. If $1\leq i \leq k$, then $\DD\rho'$ is a derivation of $F$
in~$\Gram{\pi'}$ and if $k< i\leq n$, then $\DD\rho''$ is a derivation
of $F$ in~$\Gram{\pi'}$.

Let us now show $\Lang{\Gram{\pi'}}\subseteq\Lang{\Gram{\pi}}$. Let
$F$ be a formula in $\Lang{\Gram{\pi'}}$, and let $\DD'$ be a
derivation of $F$ in $\Gram{\pi'}$. We construct
$\DD=\DD'(\rho')^{-1}(\rho'')^{-1}$ by ``undoing'' the renaming of the
variables in $\psi_2$. Then $\DD$ is a derivation for $F$, using the
production rules of $\Gram{\pi}$, but possibly violating the rigidity
condition.

First, recall that $\EVc(\pi)=\EV(\pi)$ and observe that only
non-terminals $\alpha\in\EV(\psi_2)$ can violate the rigidity
condition in $\DD$: if $\beta\notin\EV(\psi_2)$ violates the rigidity
condition then there are $\beta$-positions $p_1,p_2$ in $\DD$ with
$F|_{p_1}\neq F|_{p_2}$ and as $\beta\rho'\rho''=\beta$ the positions
$p_1,p_2$ are also $\beta$-positions in $\DD'$ and they violate the
rigidity condition in $\DD'$ which is a contradiction to $\DD'$ being
a $\Gram{\pi'}$-derivation.

Now define for each $\alpha\in\EV(\psi_2)$ the value
$\nonrig{\DD}{\alpha}$ to be the number of pairs
$(p_1,p_2)\in\Pos(F)\times\Pos(F)$ where $p_1$ and $p_2$ are
$\alpha$-positions in $\DD$ with $p_1\neq p_2$ and $F|_{p_1}\neq
F|_{p_2}$, and define
$\nonrigd{\DD}=\sum_{\alpha\in\EV(\psi_2)}\nonrig{\DD}{\alpha}$. We
proceed by induction on
$\nonrigd{\DD}$ to show that $\DD$
can be transformed into a derivation which does no longer violate
rigidity. If $\nonrigd{\DD}=0$ then
$\DD$ obeys the rigidity condition, and we are done. Otherwise there
is at least one $\alpha\in\EV(\psi_2)$ with
$\nonrig{\DD}{\alpha}>0$. We now pick one such $\alpha$ which is
minimal with respect to $\sreach{}$ (which exists since
$\Gram{\pi}$ is acyclic). Let $p_1$ and $p_2$ be $\alpha$-positions in
$\DD$ with $p_1\neq p_2$ and $F|_{p_1}\neq F|_{p_2}$, let $p$ be
the maximal common prefix of $p_1$ and $p_2$ and let $q$ be the
maximal prefix of $p$ where a production rule has been applied in $\DD$. Due to the tree
structure of $F$, the position $q$ is uniquely defined, and $q$ is a
$\beta$-position for some non-terminal $\beta$, and some production
rule $\beta\to t$ has been applied at position $q$ in $\DD$, and we
have two paths:
\begin{equation*}
  \begin{array}{c}
    \beta\to t\ni\gamma_0\to s_0\ni\gamma_1\to s_1\ni\ldots
    \to s_{n-1}\ni\gamma_n=\alpha\to s_n
    \\[1ex]
    \beta\to t\ni\delta_0\to r_0\ni\delta_1\to r_1\ni\ldots
    \to r_{m-1}\ni\delta_m=\alpha\to r_m
  \end{array}
\end{equation*}
where $\gamma_0$ and $\delta_0$ occur at two different positions
in~$t$. Thus, we can apply Lemma~\ref{lem.alphapos}, giving us the
following two cases:
\begin{itemize}
\item We have $\gamma_i=\delta_j$ for some $0\le i<n$ and $0\le
  j<m$. Say $\eta=\gamma_i=\delta_j$, and let $p_{\gamma}$ and
  $p_{\delta}$ be the positions of $\gamma_i$ and $\delta_j$
  (respectively) in $\DD$. Since $\eta\preach{}\alpha$ we know that
  $\eta$ does not violate the rigidity condition (we chose $\alpha$ to
  be minimal), and therefore $F|_{p_{\gamma}}=F|_{p_\delta}=F'$. Let
  $\DD_\gamma\colon\gamma_i\sderel{\Gram{\pi}}F'$ and
  $\DD_\delta\colon\delta_j\sderel{\Gram{\pi}}F'$ be the two
  subderivations of $\DD$ starting in positions $p_{\gamma}$ and
  $p_{\delta}$, respectively. Without loss of generality, we can
  assume that $\nonrigd{\DD_\gamma}\le\nonrigd{\DD_\delta}$. Then let
  $\DDt$ be the derivation obtained from $\DD$ by replacing
  $\DD_\delta$ by $\DD_\gamma$. Then $\DDt$ is still a derivation for
  $F$, but $\nonrigd{\DDt}<\nonrigd{\DD}$.
\item For all $0\le i<n$ and $0\le j<m$ we have
  $\rightabove{\cutrr{\gamma_i}}{\cutrr{\alpha}}$ and
  $\rightabove{\cutrr{\delta_j}}{\cutrr{\alpha}}$. So all inferences
  of the path $\gamma_0\to \ldots \to s_{n-1}$ as well as all
  inferences of $\delta_0\to\ldots\to r_{m-1}$ are in
  $\psi_2$. Therefore all variables of of these paths are in
  $\EV(\psi_2)$. As $\alpha$ violates the rigidity in $\DD$ one of
  $p_1,p_2$ must be a $\alpha'$-position and the other a
  $\alpha''$-position in $\DD'$ because $\DD'$ does satisfy the
  rigidity condition. Without loss of generality we can assume
  that~$p_1$ is the $\alpha'$-position and $p_2$ the
  $\alpha''$-position. As the paths are contained completely in
  $\psi_2$ we have $\gamma_0\in\EV(\psi_2)\rho'$ and
  $\delta_0\in\EV(\psi_2)\rho''$ which is a contradiction as no term
  can contain both a variable from $\EV(\psi_2)\rho'$ and one from
  $\EV(\psi_2)\rho''$.  \qedhere
\end{itemize}
\end{proof}

\begin{proof}[Proof of Lemma~\ref{lem.inv_weak}]
By induction on the length of the reduction $\pi\cred\pi'$
or $\pi\credm\pi'$ respectively using one of Lemmas~\ref{lem:H-invar}, 
\ref{lem:H-invar-quant}, \ref{lem:H-invar-weak} or \ref{lem.subset_contraction_reduction}
depending on the current reduction step.
\end{proof}


\section{Skolemization and Deskolemization}\label{sec.skol_deskol}

In this section we will describe some results that allow one to extend the above invariance lemma
to proofs of arbitrary end-sequents
(including $\forall$-quantifiers). Carrying out the above argument directly 
for arbitrary end-sequents would require dealing with variable-names on
the level of the grammar in order to describe the changes of eigenvariables
of the $\forall$-quantifiers in the end-sequent. This can be avoided completely
by skolemizing proofs to reduce the general case to that of weak sequents and then
translating back the results by deskolemization. Skolemization and deskolemization
are simple operations on the level of Herbrand-disjunctions or
expansion trees~\cite{Miller87Compact} and their use in this context suffices
for our purposes. In contrast, they have surprising complexity-effects
on the level of proofs, see e.g.~\cite{Baaz12Complexity}. The reason why this transfer
is possible is that the form of the end-sequent,
and in particular the question whether it contains universal quantifiers,
does not have an effect on the dynamics of cut-elimination. This 
observation has been well known for a long time and is apparent already in Gentzen's consistency
proof for Peano Arithmetic~\cite{Gentzen38Neue} which is carried out on a (hypothetical) proof of
the empty sequent as well as in the proof of the second $\varepsilon$-Theorem
from the first $\varepsilon$-Theorem by
deskolemization~\cite{Hilbert39Grundlagen2}.

Let us now first define the notion of Herbrand-disjunction precisely. We assume
w.l.o.g.\ that in a formula every variable is bound by at most one quantifier.

\begin{defi}
For a given formula $F$, we write $\hat{F}$ for the formula obtained from
$F$ by removing all quantifiers. Now let $x_1,\ldots,x_n$ be the
existentially bound variables in $F$, and let $y_1,\ldots,y_m$ be the
universally bound variables in $F$. Then any formula of the shape 
$$\doublehat{F}\sop
\sel{x_1}{t_1},\ldots,\sel{x_n}{t_n},\sel{y_1}{\alpha_1},\ldots,\sel{y_m}{\alpha_m}\scl$$
where $\doublehat{F}$ is an arbitrary formula with $\doublehat{F}
\Hsub \hat{F}$, where $t_1,\ldots,t_n$ are arbitrary terms, and where
$\alpha_1,\ldots,\alpha_m$ are fresh variables, is called an
\emph{instance of $F$}.
If $\Gamma$ is a sequent we say that a set $\cI$ of formulas is a \emph{set
of instances of $\Gamma$} if for every $I \in \cI$ there is a $F\in\Gamma$, s.t.\ $I$ is instance
of $F$.
\end{defi}

Often we will work in the context of a proof $\pi$ of a sequent $\Gamma$ and
consider the instances of the formulas in $\Gamma$ that are induced by $\pi$.
Then the above fresh variables $\alpha_1,\ldots,\alpha_m$ will be eigenvariables
of the proof and their occurrences in terms will be restricted by an acylicity-condition, see below.

Let $\Gamma = F_1,\ldots,F_n$ be a sequent, let $\cI$ be a set of
instances of $\Gamma$, let $m_i$ be the number of quantifiers in
$F_i$, and let $l_i$ be the number of instances of $F_i$ in $\cI$. If
we impose an arbitrary linear ordering on the instances of $F_i$ in
$\cI$, then a tuple $\langle i, j, k \rangle$ for $1\leq i \leq n$ and
$1 \leq j \leq m_i$ and $1 \leq k \leq l_i$ uniquely identifies the
term which is substituted for the quantifier $Q x_j$ in the $k$-th
instance of the formula $F_i$. We will write $t_{i,j,k}$ for this term
(which could just be an eigenvariable if $Q x_j$ happens to be an
$\forall$-quantifier).  The $k$-th instance of $F_i$ can hence be
written as $F_{i,k}\sop
\sel{x_1}{t_{i,1,k}},\ldots,\sel{x_{m_i}}{t_{i,m_i,k}} \scl$, where
$x_1,\ldots,x_{m_i}$ are the bound variables in $F_i$, and $F_{i,k}$
is some formula with $F_{i,k}\Hsub\hat{F_i}$.  Such a tuple $\langle
i, j, k\rangle$ is called \emph{existential position} if $x_j$ is
bound existentially in $F_i$, and \emph{universal position} if $x_j$
is bound universally in $F_i$.

A position $\langle i_1,j_1,k_1 \rangle$ is said to \emph{dominate}
another position $\langle i_2,j_2,k_2 \rangle$, if $i_1 = i_2$, and
$k_1 = k_2$, and the quantifier $Q x_{j_2}$ is in the scope of the
quantifier $Q x_{j_1}$ in $F_i$. A set $\cI$ of instances induces a
relation $\deprelz$ on its existential positions as: $\langle
i_1,j_1,k_1 \rangle \deprelz \langle i_2,j_2,k_2 \rangle$ if there is
a universal position $\langle i_3,j_3,k_3 \rangle$, such that the term
$t_{i_2,j_2,k_2}$ contains a variable $\alpha$ with $\alpha =
t_{i_3,j_3,k_3}$ and $\langle i_1,j_1,k_1 \rangle$ dominates $\langle
i_3,j_3,k_3 \rangle$. Furthermore we define the \emph{dependency
  relation $\deprel$} on the existential positions of $\cI$ as
transitive closure of~$\deprelz$.

\begin{rem}
  A proof $\pi$ with the property that $\Hseq(\pi)=\cI$ is sometimes
  called a \emph{sequentialization of $\cI$}.  If $\cI$ has positions
  $\langle i_1,j_1,k_1 \rangle$ and $\langle i_2,j_2,k_2 \rangle$ with
  $\langle i_1,j_1,k_1 \rangle \deprelz \langle i_2,j_2,k_2 \rangle$,
  then in each sequentialization of $\cI$ the inference corresponding
  to $\langle i_1, j_1, k_1\rangle$ is below that of $\langle i_2,
  j_2, k_2\rangle$. In the literature on proof nets, relations like
  $\deprelz$ are known as \emph{jumps}.
\end{rem}

\begin{defi}
  A set $\cI$ of instances of $\Gamma$ is called
  \emph{Herbrand-disjunction of $\Gamma$} if
  \begin{itemize}
  \item the dependency relation $\deprel$ of $\cI$ is acyclic, and
  \item $\displaystyle\Lor_{I\in\cI} I$ is a tautology.
  \end{itemize}
\end{defi}

This notion of Herbrand-disjunction is essentially a flat (as opposed to tree-like)
formulation of expansion tree proofs~\cite{Miller87Compact}. A similar flat formulation
can, for instance, be found in~\cite{Baaz94Skolemization}.

\begin{theorem}
$\Gamma$ is valid iff it has a Herbrand-disjunction.
\end{theorem}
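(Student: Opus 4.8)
The plan is to reduce the general statement to the basic form of Herbrand's theorem for weak sequents (recalled at the beginning of Section~\ref{sec.proofs_grammars}) by Skolemization, and to transfer the result back by deskolemization, exactly along the lines indicated just before the theorem. So first I would form the \emph{Skolemization} $\Gamma^{\sk}$ of $\Gamma$: every universally bound variable $y$ of $\Gamma$ (these are the strong quantifiers of the disjunctive reading $\Lor\Gamma$) is replaced by a term $f_y(x_1,\ldots,x_r)$, where $f_y$ is a fresh function symbol and $x_1,\ldots,x_r$ are the existentially bound variables in whose scope $y$ lies, after which the $\forall$-quantifiers are dropped. The resulting $\Gamma^{\sk}$ is a weak sequent. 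The classical fact I would invoke here is that Skolemization preserves validity, i.e.\ $\Gamma$ is valid iff $\Gamma^{\sk}$ is valid; this is the validity-dual of the usual satisfiability version of Skolem's theorem and is entirely standard.

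The second ingredient is the basic Herbrand theorem applied to the weak sequent $\Gamma^{\sk}$: since $\Gamma^{\sk}$ contains no $\forall$, it is valid iff there is a tautological set $\cI^{\sk}$ of instances of $\Gamma^{\sk}$ obtained purely by substituting terms for its existential variables (and weakening subformulas to $\bot$ as permitted by $\Hsub$). For a weak sequent there are no universal positions, so the relation $\deprelz$ is empty and $\deprel$ is trivially acyclic; hence such a $\cI^{\sk}$ is automatically a Herbrand-disjunction of $\Gamma^{\sk}$ in the sense of our definition.

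It then remains to establish the correspondence between Herbrand-disjunctions of $\Gamma$ and those of $\Gamma^{\sk}$ purely on the level of term-sets. Deskolemization uniformly replaces, in all instances at once, each maximal subterm of the form $f_y(\ldots)$ by an eigenvariable $\alpha$ (equal Skolem terms being sent to the same $\alpha$), which then serves as the eigenvariable recorded at the universal position of $\Gamma$ corresponding to $y$; skolemization does the reverse, using acyclicity to guarantee termination of the replacement. I would check three things about this translation: (i) it is a bijection between sets of instances of $\Gamma^{\sk}$ and sets of instances of $\Gamma$; (ii) it preserves the tautology property, because it is a \emph{uniform} term operation and thus maps atoms to atoms, and propositional tautologies are preserved under any atom-to-atom substitution (given an assignment $v$ after the operation, the assignment $A\mapsto v(\phi A)$ witnesses truth before it); and (iii) the dependency relation $\deprel$ of a set of instances of $\Gamma$ is acyclic precisely when its skolemization is a well-formed term-set over the Skolem signature. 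Indeed an edge $\langle i_1,j_1,k_1\rangle\deprelz\langle i_2,j_2,k_2\rangle$ records that the term at the existential position $\langle i_2,j_2,k_2\rangle$ contains an eigenvariable $\alpha$ coming from a universal position dominated by $\langle i_1,j_1,k_1\rangle$; after skolemization this says exactly that the Skolem term substituted for $\alpha$ takes $t_{i_1,j_1,k_1}$ among its arguments, so acyclicity of $\deprel$ corresponds to the (always valid) well-foundedness of the nesting of Skolem terms.

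Combining the three ingredients yields both directions at once: $\Gamma$ valid $\iff$ $\Gamma^{\sk}$ valid $\iff$ $\Gamma^{\sk}$ has a Herbrand-disjunction $\cI^{\sk}$ $\iff$ $\Gamma$ has a Herbrand-disjunction $\cI$, namely the deskolemization of $\cI^{\sk}$. The step I expect to be the main obstacle is (iii): carefully matching the combinatorics of the dependency relation $\deprel$ (the ``jumps'') with the argument structure of the Skolem terms, so that acyclicity on the one side corresponds exactly to well-formed, well-founded Skolem terms on the other, and doing so uniformly enough that tautologicity in (ii) is not destroyed. The remaining bookkeeping---freshness of the Skolem symbols and eigenvariables and the handling of the $\Hsub$/$\bot$ relaxation---is routine.
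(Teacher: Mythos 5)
Your proposal is correct, but it takes a genuinely different route from the paper. The paper disposes of this theorem with a one-line sketch: translate back and forth with the cut-free sequent calculus (validity gives a cut-free proof by completeness, from which one reads off a Herbrand-disjunction; conversely a Herbrand-disjunction sequentializes to a cut-free proof), or alternatively invoke expansion tree proofs in the sense of Miller. You instead reduce the general case to the weak-sequent case by Skolemization and transfer back by deskolemization --- precisely the Hilbert--Bernays strategy (second $\varepsilon$-theorem from the first) that the paper announces as its overall architecture and implements \emph{later} in Section~\ref{sec.skol_deskol}: your steps (i)--(iii) are essentially Lemma~\ref{lem.desk_canonic}(1)--(2) together with the observation that $\deprel$ is trivially acyclic for weak sequents, and your acyclicity/well-foundedness correspondence in (iii) is exactly the paper's remark that acyclicity of $\deprel$ under deskolemization comes from the subterm ordering on Skolem-terms, while Skolem-terms are well-defined under Skolemization because $\deprel$ is acyclic. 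What your route buys is modularity and reuse of the paper's own machinery; what it does not buy is elementarity, since the base case (Herbrand's theorem for weak sequents) still rests on cut-free provability or an equivalent, so the appeal to proof theory is relocated rather than removed. Two minor corrections: the translation is not a \emph{bijection} between sets of instances --- Skolemization can identify instances differing only in eigenvariable names, and $\sk^{-1}(\sk(\cI)) = \cI\rho_\mathrm{c}$ rather than $\cI$ --- but only the two preservation directions are needed for the existence statement, so this does no harm; and your step (ii) should be phrased so that the atom-level map is a function (uniform replacement), which deskolemization of maximal Skolem-terms indeed is, making the assignment-transfer argument go through.
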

\begin{proof}[Proof Sketch]
Via translating back and forth with cut-free sequent calculus or alternatively via
expansion tree proofs.
\end{proof}
\begin{exa}\label{ex.drinker}
Let $\Gamma = \exists x\, (\dual{P}(x) \lor \forall y\, P(y))$, let
$\cI =\set{\dual{P}(c)\lor P(\alpha), \dual{P}(\alpha)\lor P(\beta)$} and
fix the numbering of quantifiers and instances to be from the left to the
right. Then there are the two existential positions $\langle 1, 1, 1\rangle$
with $t_{1,1,1} = c$ and $\langle 1,1,2 \rangle$ with $t_{1,1,2} = \alpha$
and two universal positions $\langle 1,2,1 \rangle$ with $t_{1,2,1} = \alpha$
and $\langle 1,2,2\rangle$ with $t_{1,2,2} = \beta$. As $\langle 1,1,k \rangle$
dominates $\langle 1,2,k \rangle$, we have $\langle 1,1,1 \rangle \deprelz \langle 1,1,2\rangle$,
but not the other way round because $t_{1,1,1} = c$ is variable-free. Therefore
$\deprel$ is acyclic. Furthermore $\cI$ is a tautology and hence a Herbrand-disjunction.
\end{exa}
Note that for a weak sequent $\Gamma$, the induced dependency ordering $\deprel$ is empty and
hence trivially acyclic. The Herbrand-disjunctions of weak sequents are therefore exactly
the tautologies of instances.
\begin{defi}
Let $F[\forall y\, G]$ be a formula containing a universal quantifier
and let $\exists x_1$, \ldots, $\exists x_n$ be the existential quantifiers in whose scope
$\forall y$ is. Then define the \emph{Skolemization} of this universal quantifier~as
\[
\sk_1(F[\forall y\, G]) = F[G\unsubst{y}{g(x_1,\ldots,x_n)}
\]
where $g$ is a fresh $n$-ary function symbol, called a {\em Skolem
  function symbol}.  The term $g(x_1,\ldots,x_n)$ is called
\emph{Skolem-term}.  For a formula $F$ define its \emph{Skolemization
  $\sk(F)$} to be the iteration of $\sk_1$ until no universal
quantifier is left, such that no Skolem function symbol is used for
two different universal quantifiers in $F$. For a sequent $\Gamma =
F_1,\ldots,F_n$ define its \emph{Skolemization $\sk(\Gamma) =
  \sk(F_1),\ldots,\sk(F_n)$}, where no Skolem function symbol is used
for two different universal quantifiers in $\Gamma$.
\end{defi}

\begin{rem}
Sometimes the above operation on formulas is also called Herbrandization. We
prefer to use the name Skolemization due to the simple duality between the
satisfiability-preserving replacement of existential quantifiers and the
validity-preserving replacement of universal quantifiers by new function
symbols. There is no danger of confusion as, in the proof-theoretic context
of this work, we are clearly dealing with validity only. This use of terminology
is due to~\cite{Hilbert39Grundlagen2}, see in particular Section~3.5.a.
\end{rem}

The above side condition on the choice of Skolem function symbols results in
a 1-1 mapping between universal quantifiers
in the sequent we skolemize and the Skolem function symbols. It could be
made formally more precise by equipping the $\sk$-operation with such a bijection
as second argument. However, for the sake of notational simplicity we refrain
from doing so here.

The Skolemization of formulas and sequents can be extended to a Skolemization of proofs. When
skolemizing a proof, all universal quantifiers in the end-sequent are removed and their variables are replaced
by Skolem-terms. In contrast, the cut-formulas remain unchanged, more precisely:
\begin{defi}
  Let $\pi$ be a proof of a sequent $\Gamma$, and let $y_1,\ldots,y_n$
  be the variables that are bound by a $\forall$-quantifier in
  $\Gamma$. Furthermore, for each $y_i$ let
  $\alpha_{i,1},\ldots,\alpha_{i,h_i}$ be the eigenvariables
  introduced in $\pi$ by an $\forall$-rule whose main formula is of the
  shape $\forall y_i\, A$.  Then the \emph{Skolemization of the proof $\pi$},
  denoted by $\sk(\pi)$, is the proof with end-sequent $\sk(\Gamma)$
  that is obtained from $\pi$ by 
  \begin{enumerate}
  \item removing all $\forall$-quantifiers binding one of
    $y_1,\ldots,y_n$ everywhere, and
  \item replacing each occurrence of $y_i$ (for
    $i\in\set{1,\ldots,n}$) and $\alpha_{i,j}$ (for
    $i\in\set{1,\ldots,n}$ and $j\in\set{1,\ldots,h_i}$) by the
    corresponding Skolem-term. This term is in each case uniquely
    determined if we proceed from the end-sequent of $\pi$ upwards to
    the axioms and demand that each rule application remains valid,
    or, in the case of the $\forall$-rule, becomes void (i.e., premise
    and conclusion coincide), and
  \item removing the void rule instances.
  \end{enumerate}
  Note that $\sk(\pi)$ still can contain $\forall$-quantifiers, namely
  those coming from a cut.
\end{defi}

The Skolemization of a proof $\pi$ also affects the quantifier-free
formulas in $\pi$ through the replacement of eigenvariables by Skolem
terms. In the context of proof Skolemization we hence extend the notation
$\sk(\cdot)$ to formulas $F$ from which some (or all) $\forall$-quantifiers
have been removed; then $\sk(F)$ denotes the formula obtained from
skolemizing the remaining $\forall$-quantifiers \emph{and} carrying out
the replacement of eigenvariables by Skolem-terms.
Skolemization of proofs has the following useful commutation properties.

\begin{lem}\label{lem.sk_cred_commute}
If $\pi \cred \pi'$ then $\sk(\pi)\cred\sk(\pi')$. If $\pi\credm \pi'$ then
$\sk(\pi)\credm\sk(\pi')$.
\end{lem}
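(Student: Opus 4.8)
The plan is to prove both commutation claims together by an analysis of each individual cut-reduction step, exploiting the fact that Skolemization acts only on the end-sequent quantifiers and the eigenvariables that descend to the end-sequent, while leaving the cut-structure untouched. First I would reduce to the case of a single reduction step $\pi\scutred\pi'$: since $\cred$ (respectively $\credm$) is by definition the reflexive-transitive-compatible closure of $\scutred$ (respectively of $\scutred$ minus the weakening rule), and since $\sk(\cdot)$ is manifestly compatible with the inference rules (it operates uniformly and locally, removing $\forall$'s binding end-sequent variables and substituting Skolem-terms), it suffices to show that a single $\scutred$-step in $\pi$ is mirrored by a single $\scutred$-step (or, for the erasing weakening step, the same kind of step) in $\sk(\pi)$, and that the non-erasing steps stay non-erasing.

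Next I would go through the reduction steps of Figure~\ref{fig:cut-red} one by one. The key observation is that $\sk(\cdot)$ leaves all cut-formulas unchanged and commutes with the substitutions appearing in the reductions. For the axiom, propositional, contraction, weakening, and both inference-permutation steps, the redex pattern involves only the cut-formula and the surrounding context; applying $\sk$ before or after the step yields the same proof, because Skolemization merely renames $y_i$ and the end-sequent eigenvariables $\alpha_{i,j}$ to fixed Skolem-terms and deletes the corresponding (now void) $\forall$-instances, none of which are the active inferences of these redexes. The only subtle case is the \textbf{quantifier reduction}, where a $\forall$-inference with eigenvariable $\alpha$ meets an $\exists$-inference with term $t$ across a cut, and the step propagates $\unsubst{\alpha}{t}$ through $\psi_2$. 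Here I must check that $\sk$ commutes with this propagation: the crucial point is that $\alpha\in\EVc(\pi)$ is a \emph{cut-eigenvariable}, so by the definition of $\sk(\pi)$ it is \emph{not} among the end-sequent eigenvariables $\alpha_{i,j}$ and hence is not replaced by a Skolem-term; thus $\sk$ and $\unsubst{\alpha}{t}$ act on disjoint sets of variables and commute, giving $\sk(\psi_2\unsubst{\alpha}{t})=\sk(\psi_2)\unsubst{\alpha}{\sk(t)}$.

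I expect the main obstacle to be precisely this bookkeeping around eigenvariables in the contraction and quantifier steps. In the contraction reduction the substitutions $\rho',\rho''$ rename \emph{all} of $\EV(\psi_2)$ to fresh copies, and some of these may be end-sequent eigenvariables that $\sk$ is supposed to turn into Skolem-terms. The point to verify carefully is that duplicating such a variable and then skolemizing is compatible with skolemizing and then duplicating, i.e.\ that the fresh copies $\alpha',\alpha''$ receive the appropriate (copied) Skolem-terms, so that the Skolem function symbols still obey the required 1-1 discipline; since the reduction introduces genuinely fresh variables this is a matter of choosing the Skolem-term assignment coherently, and the void-instance deletion in clause~(3) of the Skolemization definition handles the vanished $\forall$-inferences uniformly. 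Once the disjointness of cut-eigenvariables from end-sequent eigenvariables is made explicit, every step commutes on the nose, and since $\sk$ never introduces or removes a weakening redex, the non-erasing reduction $\credm$ is preserved exactly as $\cred$ is. I would conclude by assembling these per-step commutations along the length of the reduction sequence.
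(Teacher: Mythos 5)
There is a genuine gap, and it sits exactly at the case the paper singles out as the interesting one. Your reduction claims that a single $\scutred$-step in $\pi$ is always mirrored by a single $\scutred$-step in $\sk(\pi)$, and you justify the permutation cases by asserting that the $\forall$-instances deleted by Skolemization are ``none of ... the active inferences of these redexes.'' That assertion is false. In the unary inference permutation of Figure~\ref{fig:cut-red}, the permuted rule $\rr$ can itself be a $\forall$-inference whose main formula is an ancestor of the end-sequent (in a simple proof, any $\forall$-inference not sitting immediately above its cut is of this kind). Skolemization deletes precisely this inference as a void instance and replaces its eigenvariable by a Skolem-term, so in $\sk(\pi)$ the redex you want to contract no longer exists: there is no corresponding $\scutred$-step to perform. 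What actually happens is that both the redex
\[
      \vlderivation{
        \vliin{}{\cutrr{}}{\Gamma, \forall x\, B, \Delta}{
          \vlin{}{\forall}{\Gamma, \forall x\, B,A}{
            \vlhtr{\psi_1}{\Gamma, B\unsubst{x}{\alpha},A}}}{
          \vlhtr{\psi_2}{\dual{A},\Delta}}}
\]
and its reduct are mapped by $\sk$ to \emph{the same} proof, i.e.\ the step translates to an identity step (zero steps), not to one step. The lemma survives only because $\cred$ and $\credm$ are reflexive-transitive closures, so a zero-step translation is still an instance of $\sk(\pi)\cred\sk(\pi')$; but your argument, as stated (``every step commutes on the nose,'' one step to one step), asserts something false and would not compile into a correct case analysis without repairing this case.

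The remainder of your analysis is essentially the paper's: induction on the length of the reduction sequence with a case distinction on the step, the observation that cut-formulas (and hence cut-eigenvariables) are untouched by $\sk$, and the commutation $\sk(\psi_2\unsubst{\alpha}{t})=\sk(\psi_2)\unsubst{\alpha}{\sk(t)}$ for the quantifier reduction. Your worry about the contraction step and the coherence of Skolem-term assignment for the fresh copies $\rho',\rho''$ is legitimate bookkeeping (the paper passes over it silently, subsuming it under ``translates directly into exactly one reduction step''), and your resolution of it is fine. So the fix is local: replace the claim ``one step maps to one step'' by ``one step maps to one step, except for the permutation of an end-sequent $\forall$-inference over a cut, which maps to an identity,'' and the proof goes through.
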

\begin{proof}
By induction on the number of reductions in $\pi\cred\pi'$ or $\pi\credm\pi'$,
respectively, making a case distinction on the reduction step. The most
interesting case is that of the permutation of a $\forall$-inference over a cut
\[
      \vlderivation{
        \vliin{}{\cutrr{}}{\Gamma, \forall x\, B, \Delta}{
          \vlin{}{\forall}{\Gamma, \forall x\, B,A}{
            \vlhtr{\psi_1}{\Gamma, B\unsubst{x}{\alpha},A}}}{
          \vlhtr{\psi_2}{\dual{A},\Delta}}}
      \qquad\scutred \qquad
      \vlderivation{
        \vlin{}{\forall}{\Gamma, \forall x\, B, \Delta}{
          \vliin{}{\cutrr{}}{\Gamma, B\unsubst{x}{\alpha}, \Delta}{
            \vlhtr{\psi_1}{\Gamma, B\unsubst{x}{\alpha}, A}}{
            \vlhtr{\psi_2}{\dual{A},\Delta}}}}
\]
where the main formula of the $\forall$-inference is an ancestor of the end-sequent.
This reduction step is translated to an identity-step as Skolemization maps
both of the above proofs to
\[
      \vlderivation{
          \vliin{}{\cutrr{}}{\sk(\Gamma), \sk(\forall x\, B), \sk(\Delta)}{
            \vlhtr{\psi^\mathrm{s}_1}{\sk(\Gamma), \sk(\forall x\, B), \sk(A)}}{
            \vlhtr{\psi^\mathrm{s}_2}{\sk(\dual{A}), \sk(\Delta)}}}
\]
Each of the other reduction steps translates directly into exactly one reduction
step in the skolemized sequence.
\end{proof}

\begin{lem}\label{lem.sk_LangGram_commute}
$\Lang{\Gram{\sk(\pi)}} = \sk(\Lang{\Gram{\pi}})$.
\end{lem}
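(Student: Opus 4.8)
The plan is to show that Skolemization acts on $\Gram\pi$ as a substitution on terminal symbols and then to transport this fact through the computation of the language. The decisive first observation is that $\sk$ leaves the cut-part of $\pi$ untouched: it only deletes the $\forall$-inferences whose main formula is an ancestor of the end-sequent and replaces the corresponding end-sequent eigenvariables -- which belong to $\Sigma(\pi)$ and are therefore \emph{terminals} of $\Gram\pi$ -- by Skolem-terms. In particular every cut, every cut-eigenvariable, and the $\forall$-inference sitting directly above each cut are preserved, so $\sk(\pi)$ is again simple and $\EVc(\sk(\pi))=\EVc(\pi)$. Hence $\Gram{\sk(\pi)}$ and $\Gram\pi$ have the same non-terminals $\EVc(\pi)\cup\set\theta$, and it remains to compare their productions and their languages.

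I would next establish that the productions of $\Gram{\sk(\pi)}$ are exactly the $\sk$-images of those of $\Gram\pi$, i.e.\ that $\Hseq(\sk(\pi))=\set{\sk(A)\mid A\in\Hseq(\pi)}$ and $\Bsub(\sk(\pi))=\set{\unsubst{\alpha}{\sk(t)}\mid\unsubst{\alpha}{t}\in\Bsub(\pi)}$. Both equalities follow by induction along the defining clauses of $\Hseq$ and of $\terms$: the propositional and quantifier clauses commute with the replacement of end-sequent eigenvariables by Skolem-terms, and since a $\forall$-inference and a void inference are transparent for both $\Hseq$ and $\terms$, deleting them changes nothing beyond this replacement. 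Writing $\sk$ also for the induced map on right-hand sides of productions, this gives $\Gram{\sk(\pi)}=\sk(\Gram\pi)$, where $\sk(\Gram\pi)$ has the same non-terminals as $\Gram\pi$ and with each production $\eta\to u$ replaced by $\eta\to\sk(u)$.

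It then remains to prove the purely grammatical identity $\Lang{\sk(\Gram\pi)}=\sk(\Lang{\Gram\pi})$. As $\sk(\pi)$ is simple, Lemma~\ref{lem.simple_acyclic} shows that $\Gram{\sk(\pi)}$ (which equals $\sk(\Gram\pi)$) is acyclic and totally rigid, so Lemma~\ref{lem.totrig_acyclic_language} applies to both grammars and writes each language as the set of terms obtained by fixing one production per non-terminal and iterating the resulting substitutions in a compatible acyclic order $\theta=\eta_1,\ldots,\eta_n$. I would then prove, term by term, that $\sk$ commutes with this iterated substitution, namely $\sk(\theta\unsubst{\eta_1}{u_1}\cdots\unsubst{\eta_n}{u_n})=\theta\unsubst{\eta_1}{\sk(u_1)}\cdots\unsubst{\eta_n}{\sk(u_n)}$ for every choice of productions $\eta_i\to u_i$.

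The hard part will be precisely this last commutation. A Skolem-term $\sk(\alpha_{i,j})$ may itself contain cut-eigenvariables, because an end-sequent $\exists$-witness occurring as an argument of a Skolem function can involve a non-terminal of $\Gram\pi$; consequently applying $\sk$ to the right-hand side of a production can create fresh occurrences of non-terminals, and a naive substitution lemma does not hold verbatim. What must be verified is that these newly created occurrences are expanded in $\sk(\Gram\pi)$ by exactly the productions that the derivation of the underlying term of $\Lang{\Gram\pi}$ already uses. Here acyclicity fixes a consistent order of expansion and the rigidity condition guarantees that every occurrence of a given cut-eigenvariable -- whether inside a Skolem-term or not -- is expanded to one and the same subterm; together these make the element-wise equality go through. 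Combining this substitution lemma with the product descriptions of Lemma~\ref{lem.totrig_acyclic_language} yields $\Lang{\Gram{\sk(\pi)}}=\Lang{\sk(\Gram\pi)}=\sk(\Lang{\Gram\pi})$, which is the claim.
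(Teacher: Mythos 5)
Your proposal is correct and takes essentially the same route as the paper's proof: it establishes that $\Gram{\sk(\pi)}$ has the same non-terminals as $\Gram{\pi}$ and that its productions are exactly the $\sk$-images of those of $\Gram{\pi}$, and then obtains the equality of languages element-wise by commuting $\sk$ with the iterated substitutions provided by Lemma~\ref{lem.totrig_acyclic_language}. Your explicit discussion of the subtlety that a Skolem-term may itself contain cut-eigenvariables (so that applying $\sk$ to a production can create new non-terminal occurrences) is in fact more careful than the paper's own argument, which states the corresponding commutation identity without comment; your resolution via acyclicity (a common compatible expansion order, since $\sk$ only adds occurrences of non-terminals) and the one-production-per-non-terminal form of the derivations is sound.
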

\begin{proof}\sloppy
First note that $\EVc(\pi) = \EVc(\sk(\pi))$ hence $\Gram{\pi}$ and
$\Gram{\sk(\pi)}$ have the same non-terminals. Furthermore, to each
$\alpha\in\EV(\pi)\setminus\EVc(\pi)$ corresponds a unique Skolem-term
in $\sk(\pi)$, hence to each $F\in\Hseq(\pi)$ and $\sigma\in\Bsub(\pi)$ corresponds
a unique $F'\in\Hseq(\sk(\pi))$ and $\sigma'\in\Bsub(\pi)$ and therefore to
each production $\alpha\to t$ in $\Gram{\pi}$ corresponds a unique production
$\alpha\rightarrow t'$ in $\Gram{\sk(\pi)}$ that is obtained from replacing
eigenvariables by their respective Skolem-terms.
If $I \in \sk(\Lang{\Gram{\pi}})$ then by
Lemma~\ref{lem.totrig_acyclic_language} we have $I =
\sk(F\unsubst{\alpha_1}{s_1}\cdots\unsubst{\alpha_n}{s_n})$. Now for
$\theta\to F, \alpha_1\to s_1,\ldots,\alpha_n\to s_n$ being the
productions in $\Gram{\pi}$, letting $\theta \to F', \alpha_1\to
s'_1,\ldots,\alpha_n\to s'_n$ be the corresponding productions in
$\Gram{\sk(\pi)}$ we obtain
$F'\unsubst{\alpha_1}{s'_1}\cdots\unsubst{\alpha_n}{s'_n} =
\sk(F\unsubst{\alpha_1}{s_1}\cdots\unsubst{\alpha_n}{s_n})$. Thus,
$\sk(\Lang{\Gram{\pi}})\subseteq\Lang{\Gram{\sk(\pi)}}$.
For the other direction, note that every Skolem-term has at least one
corresponding $\alpha\in\EV(\pi)\setminus \EVc(\pi)$, and as before,
this relation translates to productions. So, if $J\in
\Lang{\Gram{\sk(\pi)}}$ then by
Lemma~\ref{lem.totrig_acyclic_language} we have $J =
G\unsubst{\alpha_1}{t_1}\cdots\unsubst{\alpha_n}{t_n}$ for $\theta
\to G, \alpha_1 \to t_1,\ldots, \alpha_n\to t_n$ being the productions
in $\Gram{\sk(\pi)}$.  By choosing one corresponding set of
productions $\theta\to G',\alpha_1\to t'_1,\ldots,\alpha_n\to t'_n$
where Skolem-terms are replaced by the eigenvariables from which they
originate we obtain
$\sk(G'\unsubst{\alpha_1}{t'_1}\cdots\unsubst{\alpha_n}{t'_n}) =
G\unsubst{\alpha_1}{t_1}\cdots\unsubst{\alpha_n}{t_n}$.
\end{proof}

As we have seen in the above proof, Skolemization can identify instances that
differ only in their variable names. The reason for this ability lies in the use
of variable names which can be chosen in a redundant way. These superfluous instances
can also be removed by an appropriate variable renaming as shown in the following
example.

\begin{exa}
Let $\Gamma = \exists x\forall y\, ( \dual{P}(x,y) \lor \dual{Q}(x,y) ),
\exists x\, P(c,x) \land \exists x\, Q(c,x)$. Then the set of instances
obtained from a sequent calculus proof that ends with an $\land$-inference is
\[
\cI = \set{\dual{P}(c,\alpha)\lor\dual{Q}(c,\alpha), \dual{P}(c,\beta)\lor\dual{Q}(c,\beta),
P(c,\alpha)\land Q(c,\beta)}\quadfs
\]
Skolemizing would produce the following set of instances
\[
\sk(\cI) = \set{\dual{P}(c,f(c))\lor\dual{Q}(c,f(c)), P(c,f(c))\land Q(c,f(c))}
\]
by implicitly identifying the two formulas that become equal. A similar effect
(but without using Skolemization) can be achieved by directly
identifying $\alpha$ and $\beta$ as in
\[
\cI\unsubst{\beta}{\alpha} = \set{\dual{P}(c,\alpha)\lor\dual{Q}(c,\alpha), P(c,\alpha)\land Q(c,\alpha)}\quadfs
\]
\end{exa}

We now generalize the observations made in the above example.  For
every Herbrand-disjunction $\cI$ there is a substitution $\rho$, such
that $\cI\rho$ is a Herbrand-disjunction having the following
property: If two universal positions $\langle i,j, k_1\rangle$ and
$\langle i,j,k_2\rangle$ have different variables then there is a
$j'$, such that the quantifier $\exists x_{j'}$ dominates $\forall
x_j$ in $F_i$ and $t_{i,j',k_1} \neq t_{i,j',k_2}$. This follows for
example from the formulation of expansion trees
in~\cite{Chaudhuri12Systematic,ChaudhuriXXIsomorphism} which use sets
of terms for the $\exists$-quantifier and a single variable for the
$\forall$-quantifier.  A Herbrand-disjunction with this property is
$\alpha$-equivalent to one with {\em canonical variable names} in the
following sense.
\begin{defi}
Let $\cI$ be a set of instances. The {\em canonical name} of the eigenvariable
of the universal position $\langle i,j,k \rangle$
is $\alpha_{i,j,t_1,\ldots,t_m}$ where $t_1,\ldots,t_m$ are the terms of the existential
positions that dominate $\langle i,j,k\rangle$. The {\em canonical variable
renaming} $\rho_\mathrm{c}$ of $\cI$ is the substitution which replaces all variable
names by their canonical names.
\end{defi}

\begin{rem}
  Note that this relationship is significantly more complex than
  $\alpha$-equivalence, as differently named variables are identified
  according to certain criteria external to variable names. In
  particular, for some fixed $\cI$, there are $\cI_n$ of unbounded
  size such that $\cI_n\rho_\mathrm{c} = \cI$.  This can be seen, for
  example, by continuing Example~\ref{ex.drinker}: take $\cI_n = \{
  \dual{P}(c)\lor P(\alpha_i), \dual{P}(\alpha_i) \lor P(\beta_i) \mid
  1\leq i\leq n \}$.
\end{rem}

We now turn to \emph{deskolemization}, the inverse operation of Skolemization. In our
setting, we only consider deskolemization of sequents and their instances,
but not of proofs. Furthermore we always assume that the original sequent
with $\forall$-quantifiers is known. Hence the deskolemization of a sequent
trivially replaces it by the original sequent. More
interesting is the deskolemization of instances which will consist of
replacing Skolem-terms by (canonically named) variables.

\begin{defi}
Let $\Gamma = F_1,\ldots, F_n$ be a sequent with Skolem function symbol $f_{i,j}$
for the universal quantifier $\forall x_j$ in $F_i$. Let $\cI$ be a set of
instances of $\Gamma$ and define its \emph{deskolemization} $\sk^{-1}(\cI)$ by
repeating the replacement
\[
f_{i,j}(t_1,\ldots,t_m) \mapsto \alpha_{i,j,t_1,\ldots,t_m}
\]
on maximal Skolem-terms (w.r.t.\ the subterm ordering).
\end{defi}
In the deskolemization of a Herbrand-disjunction, the acyclicity of
the dependency relation is obtained from the acyclicity of the subterm
ordering on the Skolem-terms. Conversely, during Skolemization, the
Skolem-terms are well-defined due to the acyclicity of the dependency
relation (see
e.g.~\cite{Miller87Compact,Weller11Elimination,Baaz12Complexity} for
more details).  We hence obtain the following properties:

\begin{lem}\label{lem.desk_canonic}
Let $\Gamma$ be a sequent and $\Gamma'$ be a weak sequent with $\Gamma'=\sk(\Gamma)$.
\begin{enumerate}
\item If $\cI$ is a Herbrand-disjunction of $\Gamma$, then $\sk(\cI)$ is
a Herbrand-disjunction of $\sk(\Gamma)$.
\item If $\cI'$ is a Herbrand-disjunction of $\Gamma'$, then $\sk^{-1}(\cI')$
is a Herbrand-disjunction of $\sk^{-1}(\Gamma')$.
\item If $\cI$ is a Herbrand-disjunction of $\Gamma$, then
$\sk^{-1}(\sk(\cI)) = \cI\rho_\mathrm{c}$.
\end{enumerate}
\end{lem}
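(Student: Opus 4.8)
The plan is to handle the three parts in order, using throughout the fact that both $\sk$ and $\sk^{-1}$ act only on terms and eigenvariables and never on the propositional skeleton of an instance. Hence the defining condition $\doublehat{F}\Hsub\hat{F}$ of an instance is preserved by either operation, and in each part the only substantial points are the acyclicity of the dependency relation and the validity of the resulting disjunction.

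For part~(1), I would first note that $\sk(\cI)$ is a set of instances of $\sk(\Gamma)$: skolemizing replaces, at every universal position $\langle i,j,k\rangle$, the eigenvariable by the Skolem-term $f_{i,j}(t_1,\ldots,t_m)$ built from the terms of the dominating existential positions, so every universal position disappears and what is left is an instance of the weak formula $\sk(F_i)$. Since $\sk(\Gamma)$ is weak, the observation that weak sequents carry an empty dependency relation makes acyclicity automatic. Validity is preserved because passing from $\cI$ to $\sk(\cI)$ is nothing but applying the substitution sending each eigenvariable to its Skolem-term to the quantifier-free disjunction $\bigvee_{I\in\cI}I$, and a substitution instance of a propositional tautology is again a tautology; the collapsing of now syntactically equal disjuncts does not affect this.

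For part~(2) the acyclicity is the heart of the matter, and I would make precise the slogan that it is governed by the subterm ordering on Skolem-terms. A canonical variable $\alpha_{i_3,j_3,u_1,\ldots,u_r}$ occurring in $\sk^{-1}(\cI')$ stems from a Skolem-term $f_{i_3,j_3}(u_1,\ldots,u_r)$ whose arguments are exactly the terms of the existential positions dominating $\langle i_3,j_3,\cdot\rangle$. Hence a step $\langle i_1,j_1,k_1\rangle\deprelz\langle i_2,j_2,k_2\rangle$ can occur only if $t_{i_2,j_2,k_2}$ contains such a canonical variable whose generating Skolem-term has $t_{i_1,j_1,k_1}$ among its arguments; equivalently, read in the skolemized term, $t_{i_1,j_1,k_1}$ is a \emph{proper} subterm of $t_{i_2,j_2,k_2}$. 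Thus each $\deprelz$-step strictly increases the associated existential term in the subterm order, so its transitive closure $\deprel$ is contained in the well-founded proper-subterm order and is acyclic. That $\sk^{-1}(\cI')$ is a set of instances of $\Gamma=\sk^{-1}(\Gamma')$ is immediate, and validity is preserved because the replacement $f_{i,j}(t_1,\ldots,t_m)\mapsto\alpha_{i,j,t_1,\ldots,t_m}$ on maximal Skolem-terms is injective (distinct Skolem-terms receive distinct canonical names), so it induces a bijection on the atoms of the disjunction and carries the tautology $\bigvee_{I\in\cI'}I$ to a tautology.

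For part~(3) I would compose the two term-level maps: at a universal position $\langle i,j,k\rangle$ the eigenvariable $\alpha$ is sent by $\sk$ to $f_{i,j}(t_1,\ldots,t_m)$ and then by $\sk^{-1}$ to $\alpha_{i,j,t_1,\ldots,t_m}$, which is by definition $\alpha\rho_\mathrm{c}$, and the eigenvariables inside existential terms are renamed the same way. Hence $\sk^{-1}(\sk(I))=I\rho_\mathrm{c}$ for each instance $I$, and it remains only to check that the two operations identify the same instances, i.e.\ that $\sk(I_1)=\sk(I_2)$ iff $I_1\rho_\mathrm{c}=I_2\rho_\mathrm{c}$, which again follows from the bijective correspondence between Skolem-terms and canonical names established in part~(2). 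I expect this bookkeeping of canonical names, together with the well-foundedness argument for acyclicity in part~(2), to be the main obstacle; once one observes that neither operation disturbs the propositional structure, the remaining verifications are routine.
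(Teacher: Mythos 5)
Your proposal is correct and follows essentially the same route as the paper, which in fact states this lemma without a formal proof, justifying it only by the preceding remark that acyclicity of the dependency relation corresponds under (de)Skolemization to acyclicity of the subterm ordering on Skolem-terms, together with citations to the literature. Your three parts are a faithful elaboration of exactly that observation --- in particular, your subterm-ordering argument in part~(2) and the composition-of-renamings argument in part~(3) make explicit what the paper delegates to the cited references.
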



\section{Herbrand-Content}\label{sec.herbrand_content}

\begin{defi}
For a simple proof $\pi$, we define its
\emph{Herbrand-content} as $\Hcont{\pi} = \Lang{\Gram{\pi}}\rho_\mathrm{c}$.
\end{defi}

Note that for a cut-free proof $\pi$ we have $\Hcont{\pi} =
\Hseq(\pi)\rho_\mathrm{c}$, i.e.\ the Herbrand-content is nothing other than
the Herbrand-disjunction of the proof after variable
normalization. Also note that for a proof $\pi$ of a weak sequent we
have $\Hcont{\pi} = \Lang{\Gram{\pi}}$, and hence, for a cut-free proof
of a weak sequent we have $\Hcont{\pi} = \Hseq(\pi)$. We can now lift the
main invariance lemma, Lemma~\ref{lem.inv_weak}, to proofs of
arbitrary end-sequents and formulate this result in terms of the
Herbrand-content.

\begin{theorem}\label{thm.main}
If $\pi\cred \pi'$ is a reduction sequence of simple proofs, then
$\Hcont{\pi}\Hsup \Hcont{\pi'}$.
If $\pi\credm \pi'$ is a reduction sequence of simple proofs,
then $\Hcont{\pi}=\Hcont{\pi'}$.
\end{theorem}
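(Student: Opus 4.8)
The plan is to transport the invariance lemma for weak sequents, Lemma~\ref{lem.inv_weak}, to arbitrary end-sequents by routing the whole argument through Skolemization; the three commutation results of Section~\ref{sec.skol_deskol} provide exactly the required glue. First I would record that if $\pi$ is a simple proof of a sequent $\Gamma$, then $\sk(\pi)$ is a simple proof of the \emph{weak} sequent $\sk(\Gamma)$: Skolemization leaves the cuts, and hence the cut-eigenvariables and the cut-$\forall$-rules, untouched, so regularity and the cut-shape condition of simplicity survive, while every $\forall$-quantifier of the end-sequent disappears. Since cut-reduction preserves the end-sequent, $\pi'$ and $\sk(\pi')$ have end-sequents $\Gamma$ and $\sk(\Gamma)$ as well.

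For a non-erasing reduction $\pi\credm\pi'$, Lemma~\ref{lem.sk_cred_commute} gives $\sk(\pi)\credm\sk(\pi')$, a reduction of simple proofs of the weak sequent $\sk(\Gamma)$, so Lemma~\ref{lem.inv_weak} yields $\Lang{\Gram{\sk(\pi)}}=\Lang{\Gram{\sk(\pi')}}$. Pulling $\sk$ outside via Lemma~\ref{lem.sk_LangGram_commute} turns this into $\sk(\Lang{\Gram{\pi}})=\sk(\Lang{\Gram{\pi'}})$. Applying the (functional) deskolemization $\sk^{-1}$ to both sides and invoking Lemma~\ref{lem.desk_canonic}(3) rewrites each side as $\Lang{\Gram{\pi}}\rho_\mathrm{c}=\Hcont{\pi}$ and $\Lang{\Gram{\pi'}}\rho_\mathrm{c}=\Hcont{\pi'}$, which is the desired $\Hcont{\pi}=\Hcont{\pi'}$. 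The erasing case $\pi\cred\pi'$ follows the same rails with $=$ weakened to $\Hsup$: Lemma~\ref{lem.inv_weak} now gives $\Lang{\Gram{\sk(\pi)}}\Hsup\Lang{\Gram{\sk(\pi')}}$, hence $\sk(\Lang{\Gram{\pi}})\Hsup\sk(\Lang{\Gram{\pi'}})$, and I would close by checking that $\sk^{-1}$ is monotone with respect to $\Hsub$. This is routine: $\sk^{-1}$ rewrites maximal Skolem-terms to canonical variables uniformly and does not touch the propositional skeleton on which $\Hsub$ is defined, so any witnessing $\bot$-embedding $A\Hsub B$ is preserved element-wise.

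The step that demands the most care, and the genuine prerequisite for invoking Lemma~\ref{lem.desk_canonic}(3), is that $\Lang{\Gram{\pi}}$ (and likewise $\Lang{\Gram{\pi'}}$) is a bona fide Herbrand-disjunction of $\Gamma$. That it is a set of instances of $\Gamma$ is immediate from the start productions $\theta\to A$ with $A\in\Hseq(\pi)$. Tautologicity transfers from the weak setting: applying Lemma~\ref{lem.inv_weak} along a $\cred$-reduction of simple proofs from $\sk(\pi)$ to a cut-free proof $\sigma$, the set $\sk(\Lang{\Gram{\pi}})=\Lang{\Gram{\sk(\pi)}}$ is $\Hsup$-above the genuine Herbrand-disjunction $\Hseq(\sigma)$, and an $\Hsup$-larger set of instances of a tautology is again a tautology (replacing $\bot$-subformulas by arbitrary formulas can only make the disjunction more true); since validity is preserved under Skolemization, $\Lang{\Gram{\pi}}$ is tautological as well.

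The genuinely delicate point is acyclicity of the dependency relation $\deprel$ of $\Lang{\Gram{\pi}}$. This is \emph{not} the grammar-acyclicity of Lemma~\ref{lem.simple_acyclic}, which concerns the cut-eigenvariables in $\EVc(\pi)$; rather it is a statement about the end-sequent eigenvariables in $\EV(\pi)\setminus\EVc(\pi)$, which the grammar treats as constants. I expect this to be the main obstacle, and I would discharge it from the eigenvariable conditions of $\pi$: each such variable is introduced by a $\forall$-rule lying below all the existential inferences whose terms may contain it, so the induced jumps $\deprelz$ are contained in the tree order of $\pi$ and $\deprel$ is therefore acyclic. Note that for the weak sequent $\sk(\Gamma)$ the relation $\deprel$ is empty, which is precisely why the skolemized side needs no separate argument and why the detour through Skolemization is what makes the whole transfer legitimate.
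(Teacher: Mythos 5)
Your proposal is correct and follows essentially the same route as the paper's own proof: Skolemize, apply Lemma~\ref{lem.sk_cred_commute} and then Lemma~\ref{lem.inv_weak} to the weak sequent $\sk(\Gamma)$, and pull the result back via Lemma~\ref{lem.sk_LangGram_commute} and Lemma~\ref{lem.desk_canonic}(3) together with the observation that $\sk^{-1}$ commutes with $\Hsub$. The only difference is that you explicitly verify hypotheses the paper leaves implicit (that $\sk$ preserves simplicity, and that $\Lang{\Gram{\pi}}$ is a genuine Herbrand-disjunction of $\Gamma$ so that Lemma~\ref{lem.desk_canonic}(3) is applicable), which tightens the argument without changing it.
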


\begin{proof}
If $\pi\cred \pi'$ then $\sk(\pi)\cred \sk(\pi')$ by Lemma~\ref{lem.sk_cred_commute}.
So, by Lemma~\ref{lem.inv_weak}, we have $\Lang{\Gram{\sk(\pi)}} \Hsup \Lang{\Gram{\sk(\pi')}}$.
By Lemma~\ref{lem.sk_LangGram_commute}, we get
$\sk(\Lang{\Gram{\pi}}) \Hsup \sk(\Lang{\Gram{\pi'}})$. Using
Lemma~\ref{lem.desk_canonic} and the observation that $\sk^{-1}$ commutes with $\Hsub$
we see that
\[
\Hcont{\pi} = \Lang{\Gram{\pi}}\rho_\mathrm{c} = \sk^{-1}(\sk(\Lang{\Gram{\pi}})) \Hsup
\sk^{-1}(\sk(\Lang{\Gram{\pi'}})) = \Lang{\Gram{\pi'}}\rho_\mathrm{c} = \Hcont{\pi'}
\]
The proof for $\pi\credm\pi'$ is step-by-step the same, replacing $\Hsup$
by $=$.
\end{proof}

\begin{cOr}\label{cor.upperbound}
If $\pi\cred\pi'$ is a reduction sequence of simple proofs and
$\pi'$ is cut-free, then $$\Hseq(\pi')\rho_\mathrm{c}\Hsub \Hcont{\pi}\quadfs$$
\end{cOr}
\begin{proof}
This is a direct consequence of Theorem~\ref{thm.main}.
\end{proof}

This corollary shows that $\Hcont{\pi}$ is an upper bound on the Herbrand-disjunctions obtainable by
cut-elimination from $\pi$. Let us now compare this result with
another upper bound that has previously been obtained
in~\cite{Hetzl10Form}. To that aim let $\Gramreg{\pi}$ denote the
regular tree grammar underlying $\Gram{\pi}$ which can be obtained by
setting all non-terminals to non-rigid. In this notation, a central
result of~\cite{Hetzl10Form}, adapted to this paper's setting is

\begin{theorem}\label{thm.formwitness}
Let $\pi$ be a proof of a formula of the shape $\exists
x_1\ldots\exists x_n\, A$ with $A$ quantifier-free, and let $\pi\cred
\pi'$ with $\pi'$ cut-free. Then $\Hseq(\pi')\subseteq
\Lang{\Gramreg{\pi}}$.
\end{theorem}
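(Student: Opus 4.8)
The plan is to prove the inclusion by the same inductive scheme that underlies Lemma~\ref{lem.inv_weak}, but tracking the \emph{underlying regular} language $\Lang{\Gramreg{\pi}}$ instead of the rigid one, and establishing monotonicity rather than invariance. First I would record the base case: if $\pi'$ is cut-free then $\EVc(\pi')=\emptyset$, so $\Gramreg{\pi'}$ has $\theta$ as its only non-terminal and its productions are exactly $\theta\to A$ for $A\in\Hseq(\pi')$; hence $\Lang{\Gramreg{\pi'}}=\Hseq(\pi')$. It therefore suffices to show that a single reduction step $\pi\scutred\pi'$ satisfies $\Lang{\Gramreg{\pi'}}\subseteq\Lang{\Gramreg{\pi}}$, and then to induct on the length of $\pi\cred\pi'$.

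For most steps this is immediate from work already done in the rigid case. The axiom, propositional, and unary/binary permutation steps leave the grammar untouched (Lemma~\ref{lem:H-invar}), hence leave $\Lang{\Gramreg{\cdot}}$ unchanged. For the quantifier step I would reuse the elimination of a single-production non-terminal: the substitution construction of Lemma~\ref{lem.elim_rigid_NT} preserves the generated language already for ordinary regular grammars (its rigidity clause is needed only to accommodate several productions, which do not occur here), so $\Lang{\Gramreg{\pi'}}=\Lang{\Gramreg{\pi}}$. The weakening step only deletes the productions coming from the erased subproof $\psi_2$, and deleting productions from a regular grammar can only shrink its language; this yields the inclusion $\Lang{\Gramreg{\pi'}}\subseteq\Lang{\Gramreg{\pi}}$ and is exactly what accounts for the asymmetry between $\subseteq$ here and equality in the non-erasing case.

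The only genuinely interesting step is contraction, and here I expect the argument to be dramatically \emph{easier} than its rigid counterpart, Lemma~\ref{lem.subset_contraction_reduction}. Duplicating $\psi_2$ replaces the block of productions attached to it by two renamed copies $P\rho'$ and $P\rho''$, exactly as in that lemma; but since $\Gramreg{\pi'}$ carries no rigidity condition, any derivation in $\Gramreg{\pi'}$ can be pushed back to one in $\Gramreg{\pi}$ by simply undoing the renamings $\rho',\rho''$ on the non-terminals of $\psi_2$, and no constraint can be violated. Thus the entire apparatus of Lemmas~\ref{lem.path_cut}--\ref{lem.alphapos}, which served solely to show that this back-translation respects rigidity, becomes unnecessary. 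This is precisely the point of the comparison: the regular bound is coarser than the rigid one, and the step that is the main obstacle in the rigid development collapses to a triviality once rigidity is forgotten.

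Finally I would make the conceptual relation to Corollary~\ref{cor.upperbound} explicit. Since rigidity is merely a restriction on admissible derivations we always have $\Lang{\Gram{\pi}}\subseteq\Lang{\Gramreg{\pi}}$, and for the prenex end-sequent considered here—which is weak—the canonical renaming $\rho_\mathrm{c}$ is the identity and $\Hcont{\pi}=\Lang{\Gram{\pi}}$. Hence Corollary~\ref{cor.upperbound} already delivers the sharper bound $\Hseq(\pi')\Hsub\Lang{\Gram{\pi}}\subseteq\Lang{\Gramreg{\pi}}$, and Theorem~\ref{thm.formwitness} is the weaker statement obtained by discarding rigidity. The one residual subtlety I would have to check is the passage from $\Hsub$ to plain $\subseteq$: in the general setting erasing steps can replace a subformula by $\bot$, but in the prenex case every instance is of the fixed shape $A\unsubst{x_1}{t_1}\cdots\unsubst{x_n}{t_n}$ determined by its witness tuple, so weakening of the end-formula contributes no witness rather than a spurious $\bot$-instance, and the inclusion holds literally.
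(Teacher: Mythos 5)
Your proposal cannot be matched against a proof in the paper, because the paper does not prove Theorem~\ref{thm.formwitness} at all: it is imported from~\cite{Hetzl10Form} and stated only for comparison with Corollary~\ref{cor.upperbound}. Re-deriving it from the paper's internal machinery is therefore a genuinely different route, and parts of your plan are sound: for cut-free $\pi'$ one indeed has $\Lang{\Gramreg{\pi'}}=\Hseq(\pi')$; the quantifier step survives for regular grammars just as in Lemma~\ref{lem.elim_rigid_NT}; and in the contraction step, undoing the renamings $\rho',\rho''$ turns any $\Gramreg{\pi'}$-derivation into a $\Gramreg{\pi}$-derivation with no rigidity to verify, so the apparatus of Lemmas~\ref{lem.path_cut}--\ref{lem.subset_contraction_reduction} is indeed unnecessary for the regular bound. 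The first genuine gap is one of scope: everything you invoke ($\Gram{\pi}$ itself, Lemma~\ref{lem.inv_weak}, Corollary~\ref{cor.upperbound}) is defined and proved only for \emph{simple} proofs and, more restrictively, only for \emph{reduction sequences of simple proofs}, in which the critical cut-over-cut permutation must immediately be followed by permuting the $\forall$-inference down. Theorem~\ref{thm.formwitness} quantifies over arbitrary $\cred$-sequences and, as the paper stresses, ``applies also to non-simple proofs''; an unconstrained $\cred$-sequence starting from a simple proof can leave the simple class, and from that point on neither your induction nor Corollary~\ref{cor.upperbound} says anything. So at best you prove a restricted version of the statement.

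The second gap is a step that fails outright: weakening. The weakening reduction does \emph{not} merely delete the productions coming from the erased subproof $\psi_2$; by the proof of Lemma~\ref{lem:H-invar-weak} it also replaces the end-sequent ancestors in $\Delta$ (and erased subformulas) by $\bot$ inside the $\theta$-productions---which is exactly why the paper obtains only $\Hsub$ there, and why Lemma~\ref{lem.inv_weak} asserts $\Hsup$ rather than $\supseteq$ for $\cred$. Your proposed repair---that in the prenex case a weakening ``contributes no witness rather than a spurious $\bot$-instance''---contradicts Definition~\ref{def.Hseq}, which assigns $\Hseq(O)=\set{\bot}$ to the main occurrence of a weakening. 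Concretely, let $A(x)=P(x)\lor\dual{P}(x)$ and take a simple proof $\pi$ of $\exists x\, A(x)$ ending in a contraction below a cut on an atom $Q$, where $Q$ and $\dual{Q}$ are introduced by weakening in the two subproofs, which use witnesses $c$ and $d$ respectively; then $\Lang{\Gramreg{\pi}}=\set{A(c),A(d)}$, but one weakening-reduction step yields a cut-free $\pi'$ with $\Hseq(\pi')=\set{A(c),\bot}$, so $\Hseq(\pi')\not\subseteq\Lang{\Gramreg{\pi}}$. Thus your induction can only ever deliver $\Hseq(\pi')\Hsub\Lang{\Gramreg{\pi}}$, i.e.\ essentially Corollary~\ref{cor.upperbound} with rigidity discarded, and the literal $\subseteq$ is out of reach (indeed the adapted statement, read strictly with this paper's $\Hseq$, has this $\bot$-wrinkle; in~\cite{Hetzl10Form} the inclusion is meaningful because the Herbrand-disjunction is tracked there through witness terms, for which weakening genuinely contributes nothing). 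The content of Theorem~\ref{thm.formwitness} that goes beyond Corollary~\ref{cor.upperbound}---non-simple proofs, arbitrary reductions, and the witness-level inclusion---cannot be recovered from this paper's lemmas alone.
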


While the Theorem~\ref{thm.formwitness} applies also to non-simple proofs,
Corollary~\ref{cor.upperbound} is stronger in several respects:

First, the size of the Herbrand-content is by an exponential smaller
than the size of the bound given by Theorem~\ref{thm.formwitness}.
Indeed, it is a straightforward consequence of
Lemma~\ref{lem.totrig_acyclic_language} that the language of a totally rigid
acyclic tree grammar with $n$ production rules is bound by $n^n$ but on
the other hand:

\begin{prop}
There is an acyclic regular tree grammar $G$ with $2n$ productions and $|\Lang{G}|=n^{n^{n}}$.
\end{prop}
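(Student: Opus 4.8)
The plan is to exhibit a single explicit family of grammars, one for each $n$, and then verify the cardinality by a direct counting argument. First I would fix a ranked alphabet $\Sigma=\set{f/n,c_1/0,\ldots,c_n/0}$ consisting of one $n$-ary symbol together with $n$ constants. Then I would take non-terminals $\set{\alpha_0,\alpha_1,\ldots,\alpha_n}$ with start symbol $\alpha_n$, and the productions
\[
\alpha_i\to f(\alpha_{i-1},\ldots,\alpha_{i-1})\quad(1\le i\le n),\qquad
\alpha_0\to c_j\quad(1\le j\le n),
\]
where each right-hand side in the first group contains exactly $n$ occurrences of $\alpha_{i-1}$. This gives $n$ ``chain'' productions and $n$ ``leaf'' productions, hence $2n$ in total, as demanded.

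Second, I would record acyclicity: the only one-step reachabilities are $\alpha_i\reach{}\alpha_{i-1}$, so $\preach{}$ orders the non-terminals strictly by decreasing index, and in particular no $\alpha\preach{}\alpha$ holds; thus $G$ is an acyclic regular tree grammar by construction. Third, which is the heart of the count, I would observe that from $\alpha_n$ the chain productions rewrite \emph{deterministically} to a fixed skeleton: the complete $n$-ary tree of height $n$ whose internal nodes all carry $f$ and whose leaf positions are all occupied by $\alpha_0$. A one-line induction on $i$ shows that $\alpha_i$ produces such a tree of height $i$ with exactly $n^i$ occurrences of $\alpha_0$, so the skeleton reached from $\alpha_n$ has precisely $n^n$ leaf positions.

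Finally I would complete the cardinality argument using the fact that the grammar is regular, hence \emph{non}-rigid: the derivation relation imposes no coupling between distinct occurrences of $\alpha_0$, so each of the $n^n$ leaves may be rewritten independently to any one of the $n$ constants. Distinct assignments of constants to leaves yield distinct terms, because the $n^n$ leaf positions of the fixed skeleton are pairwise distinguishable; the assignment-to-term map is therefore a bijection onto $\Lang{G}$, and $|\Lang{G}|=n^{n^n}$. I expect the only genuinely delicate point to be this independence step, and I would make it explicit precisely by contrast with Lemma~\ref{lem.totrig_acyclic_language}: under total rigidity all occurrences of a single non-terminal are forced to agree and the language stays bounded by the number of productions raised to itself, whereas here the missing rigidity condition lets the $n^n$ leaves range freely. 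Exhibiting this gap between $\Lang{\Gram{\pi}}$ and $\Lang{\Gramreg{\pi}}$ is exactly the purpose of the construction.
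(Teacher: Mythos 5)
Your construction is exactly the grammar used in the paper's own proof, up to reversing the roles of the indices (the paper starts the chain at $\alpha_0$ and puts the constant productions on $\alpha_n$). Your verification — counting productions, checking acyclicity, and observing that the $n^n$ leaves of the deterministic skeleton can be filled independently precisely because regular (non-rigid) derivations impose no equality constraints — is correct and simply spells out what the paper leaves implicit.
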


\begin{proof}
Let $f$ be an $n$-ary function symbol, then the productions
$\alpha_0\rightarrow f(\alpha_1,\ldots,\alpha_1)$, \ldots, $\alpha_{n-1}\rightarrow f(\alpha_n,\ldots,\alpha_n)$
create a tree with $n^n$ leaves. Let $c_1,\ldots,c_n$ be terminal symbols, then by adding
the productions
$\alpha_n\rightarrow c_1,\ldots,\alpha_n\rightarrow c_n$
we obtain the desired grammar $G$.
\end{proof}

Secondly, the class of totally rigid acyclic tree grammars can be
shown to be in exact correspondence with the class of simple proofs in
the following sense.  Not only can we use a totally rigid acyclic tree
grammar to simulate the process of cut-elimination, we can also---in
the other direction---use cut-elimination to simulate the process of
calculating the language of a grammar. It is shown
in~\cite{Hetzl12Applying} how to transform an arbitrary acyclic
totally rigid tree grammar $G$ into a simple proof that has a $\cred$
normal form whose Herbrand-disjunction is essentially the language of
$G$.

The third and---for the purposes of this paper---most important
difference is that the bound of Corollary~\ref{cor.upperbound} is
{\em tight} in the sense that it can actually be reached by a cut-elimination
strategy, namely $\credm$. In fact, an even stronger statement is true: not only is
there a normal form of $\credm$ that reaches the bound but all of them do.
This property leads naturally to the following confluence result for
classical logic.

\begin{defi}[Herbrand-confluence]
A relation $\genrel$ on a set of proofs is called \emph{Herbrand-confluent}
if $\pi\genrel\pi_1$ and $\pi\genrel\pi_2$ with $\pi_1$ and $\pi_2$ being
normal forms for $\genrel$ implies that
$\Hseq(\pi_1)\rho_\mathrm{c}=\Hseq(\pi_2)\rho_\mathrm{c}$.
\end{defi}

\begin{cOr}\label{cor.Hconfluence}
The relation $\credm$ is Herbrand-confluent on the set of simple proofs.
\end{cOr}
\begin{proof}
This is a direct consequence of Theorem~\ref{thm.main}.
\end{proof}

How does this result fit together with $\credm$ being neither
confluent nor strongly normalizing? In fact, note that it is possible to construct a simple proof which permits
an infinite $\credm$ reduction sequence from which one can obtain
normal forms of arbitrary size by bailing out from time to time. This can be done by building on the propositional double-contraction
example found e.g.\ in~\cite{Danos97New,Gallier93Constructive,Urban00Classical} and
in a similar form in~\cite{Zucker74Correspondence}. While these infinitely many normal forms do have pairwise different Herbrand-disjunctions
when regarded as {\em multisets}, Corollary~\ref{cor.Hconfluence} shows that as {\em sets} they are all the same.
This set-character of Herbrand-disjunctions is assured by using canonical variable
names (or equivalently: Skolemization) and thus identifying repeated instances.
This observation shows that the lack of strong normalization is taken care of
by using sets instead of multisets as data structure. But what about the lack of
confluence? Results like~\cite{Baaz11Nonconfluence} and~\cite{Hetzl12Computational} show
that the number of $\cred$ normal forms with different Herbrand-disjunctions
can be enormous. On the other hand we have just seen that $\credm$ induces only
{\em a single} Herbrand-disjunction: $\Hcont{\pi}$. The relation between $\Hcont{\pi}$
and the many Herbrand-disjunctions induced by $\cred$ is explained by Corollary~\ref{cor.upperbound}:
$\Hcont{\pi}$ contains them all.

\section{Conclusion}\label{sec.conclusion}

We have shown that non-erasing cut-elimination for the class of simple
proofs is Herbrand-confluent. While there are different and possibly
infinitely many normal forms, they all induce the same
Herbrand-disjunction. This result motivates the definition of this
unique Herbrand-disjunction as Herbrand-{\em content} of the proof
with cut.

As future work, the authors plan to extend this result to arbitrary
first-order proofs. The treatment of blocks of quantifiers is straightforward:
the rigidity condition must be changed to apply to vectors of non-terminals.
Treating quantifier alternations is more difficult: the current
results suggest to use a \emph{stack} of totally rigid tree grammars,
each layer of which corresponds to one layer of quantifiers (and is
hence acyclic). Concerning further generalizations, note that the 
method of describing a cut-free proof by a tree language is
applicable to any proof system with quantifiers that has a
Herbrand-like theorem, e.g., even full higher-order logic as
in~\cite{Miller87Compact}. The difficulty consists in finding an
appropriate type of grammars.

Given the wealth of different methods for the extraction of
constructive content from classical proofs, what we learn from our work
about the class of simple proofs is this: the
first-order structure possesses (in contrast to the propositional
structure) a unique and canonical unfolding. The various extraction
methods hence do not differ in the choice of how to unfold the
first-order structure but only in choosing {\em which part} of it to
unfold. We therefore see that the effect of the underspecification of
algorithmic detail in classical logic is redundancy.

\section*{Acknowledgments}
The authors would like to thank Paul-Andr\'{e} Melli\`{e}s for helpful comments
on this work. The first author was supported by a Marie Curie Intra European
Fellowship within the 7th European Community Framework Programme, by the
projects I603, P22028 and P25160 of the Austrian Science Fund (FWF) and
the WWTF Vienna Research Group 12-04.


\bibliographystyle{alpha}
\bibliography{references}

\end{document}